\documentclass[sigconf,nonacm,10pt]{acmart}

\setcitestyle{nocompress}
\makeatletter
\AtBeginDocument{\let\NAT@space\@empty}
\makeatother

\usepackage{microtype}
\usepackage{booktabs}
\usepackage{tabularx}
\usepackage{array}
\usepackage{stmaryrd}
\usepackage{placeins}
\usepackage{cuted}
\usepackage{mdframed}
\graphicspath{{figures/}}

\makeatletter
\patchcmd{\@mkauthors@iii}
  {\and\par\bigskip}
  {\and\par\bigskip\vspace{6pt}}
  {}{\PackageError{dblog}{Author spacing patch failed}{}}
\makeatother

\AtBeginDocument{%
  \fancypagestyle{standardpagestyle}{\fancyhf{}%
    \fancyfoot[C]{\fontsize{10}{12}\selectfont\thepage}}%
  \fancypagestyle{firstpagestyle}{\fancyhf{}%
    \fancyfoot[C]{\fontsize{10}{12}\selectfont\thepage}}%
  \pagestyle{standardpagestyle}%
  \raggedbottom
  \setlength{\footskip}{30pt}}

\theoremstyle{plain}
\newtheorem{theorem}{Theorem}
\newtheorem{corollary}{Corollary}
\newtheorem{lemma}{Lemma}[section]
\newtheorem{proposition}{Proposition}
\theoremstyle{definition}
\newtheorem{definition}{Definition}[section]
\newtheorem{example}{Example}[section]

\newcommand{\den}[1]{\llbracket #1 \rrbracket}   %
\newcommand{\pre}{\sqsubseteq}                   %
\newcommand{\window}[2]{(#1,#2]}                 %
\newcommand{\keyof}{\mathit{key}}                %
\newcommand{\imgof}{\mathit{img}}                %
\newcommand{\sink}{\mathit{sink}}                %
\newcommand{\keys}{K}                            %
\newcommand{\vals}{V}                            %
\usepackage{ragged2e}
\AtBeginEnvironment{thebibliography}{\RaggedRight\balance}

\title{Generalized DBLog: A Verified Contract for Interleaving
Copied Rows with a Change Log}

\author{Andreas Andreakis}

\begin{document}

\begin{abstract}
Change-data capture (CDC) feeds downstream systems like caches, search
indexes, and data warehouses from a database's log of committed row changes.
When bootstrapping, adding a table, or repairing downstream data, a pipeline
must also copy existing rows. Merging this copy with the active log introduces
the copy-to-log handoff problem. Changes must not fall through a gap, and
older copied state must not overwrite a newer logged update or resurrect a
deleted row. DBLog, developed at Netflix, addressed this problem by reading
tables in chunks and interleaving those reads with the live log. Watermarks
identify the changes that overlap each read, and the log wins when a copied
row is stale~\cite{andreakis2020dblog,dblog_netflix_blog}. Debezium and Flink
CDC have since adapted this
design~\cite{debezium_incremental_snapshots,debezium_mysql_connector,flinkcdc_mysql}.

Earlier work proved that applying the original algorithm's copied rows and
logged changes in their emitted order reconstructs the source's rows, 
including the effect of every logged insert, update, and delete
processed~\cite{andreakis2026virtualcuts}. Generalized DBLog asks when the
same result holds for variants of that design. We state the conditions the
source and capture implementation must satisfy. Once copying and reconciliation 
are complete, we prove that the result holds across all selected tables
and key ranges even when their rows were read at different times. A single
database snapshot is not required for the copy. Further logged changes advance
the reconstructed state one event at a time.

We establish these guarantees for classic watermarking, Debezium's
signal-table and read-only modes, Flink CDC's parallel chunks, reads and dumps
tied to exact log positions, and engine-consistent backups whose log position
lies within known bounds.
The complete theory is machine-checked in Isabelle/HOL, its core independently
verified in Lean~4, and the protocols are also examined by bounded model
checking in TLA\textsuperscript{+}.
\end{abstract}

\keywords{\texorpdfstring{%
  \raggedright
  \mbox{databases}, \mbox{database replication}, \mbox{change-data-capture},
  \mbox{CDC}, \mbox{incremental snapshots}, \mbox{formal verification}, Isabelle/HOL, Lean,
  TLA\textsuperscript{+}%
}{databases, database replication, change-data-capture, CDC,
  incremental snapshots, formal verification, Isabelle/HOL, Lean, TLA+}}

\maketitle

\newpage

\section{Introduction}\label{sec:intro}

\subsection{The production problem}

A change-data-capture (CDC) pipeline typically needs both the latest state
of all rows and an ordered stream of live changes. Because log
retention is finite, the log alone may not supply the needed baseline.
A pipeline must therefore combine a copy of the current source state with
ongoing log consumption.

This need is not limited to an initial bootstrap. An additional sink, a newly added
table, or downstream data loss can require another copy later on. If
only a few keys are damaged, repeating a table-wide or even instance-wide copy
is unnecessary work. These recurring repair and expansion cases make a
one-time bootstrap an incomplete answer for many deployments. Furthermore, the
design problem is not simply how to read current rows. It is how to
combine that read with the change log without losing the effect of an update
or delete, or allowing older copied state to supersede newer logged
state.

\subsection{How copied state joins the change log}

Even a copy obtained from one database snapshot does not,
by itself, identify where a consumer should continue in the change
log. If consumption resumes too late, changes are lost. If it resumes too
early and the overlap is handled in the wrong order, an older copied row can
arrive after a newer logged change and reverse it. A delete that races the
copy can likewise be missed, leaving a row downstream that no longer exists
at the source.

Copying state and handoff correctness are separate
problems. A single snapshot still needs a valid continuation
point. A chunked read can still be correct when its overlap with the log is
properly delimited and reconciled. The mechanisms analyzed here reduce
to two handoff shapes.

\paragraph{Establish an exact handoff point.}
Some databases can associate a read with a position in
their change log. The position states which changes the read already
contains, so the consumer knows where log processing should continue.
A system can also establish the same proof-relevant ordering through
explicit phase coordination that places the copy before the continuing
suffix.
A table read or an entire dump can then be placed at that position
without reconciling an interval of uncertain changes. This is a
strong primitive, but the result depends on the database providing
the promised relationship between the read and the recorded
position~\cite{mariadb_consistent_snapshot,percona_consistent_snapshot,postgres_protocol_replication,mariadb_binlog_status_vars}.
If only an interval containing the true read position is known, the
copy must then instead be reconciled with the changes in that interval.

\paragraph{Bracket and reconcile an overlap.}
A system can keep the change stream moving while it reads current
state. It records positions around the read, treats the interval between
them as an overlap window, and reconciles the copy with the changes in
that window so that the latest logged change wins over a stale copied
row. For example, suppose a read returns key 42 with value V1 and the log
then records an update to V2 before the window closes. Reconciliation
keeps V2. A key with no event in the window keeps its copied value, while
an in-window update or delete drops it.

These approaches are coordination patterns, not mutually exclusive
categories. A whole-table or whole-instance dump can use an
exact position when one is available. If its position is known only to
lie inside an interval, the same dump becomes one large overlapping
read whose racing changes must be reconciled.

An exact handoff relies on its read-coordinate or ordering guarantee. An
overlapping handoff relies on retention, complete observation, and
reconciliation. Scope, repeatability, resumability, unit size, and
parallelism remain separate operational choices. Operators must also
evaluate source load, copy time, log lag, and recovery behavior in
their own environment.

\subsection{From DBLog to Generalized DBLog}

DBLog uses the overlapping approach and makes the overlap small and
repeatable. The original algorithm reads tables in chunks and brackets
each read with two watermark events. Watermarks are changes that are
applied on the source and observed in the change log afterwards. Changes
inside the bracket take
precedence over the corresponding copied rows, so the log wins whenever
it races the copy~\cite{andreakis2020dblog,dblog_netflix_blog}. The
watermarks enclose the unknown read position and make each chunk's
overlap explicit. Figure~\ref{fig:windowing} illustrates one such bracket.
This lets handoff state and progress be managed one
unit at a time. A capture can target all tables, one table, or selected
keys, pause and resume after a failure, and run again later for repair
while change events continue to flow~\cite{andreakis2020dblog}.

\begin{figure}[t]
\begin{mdframed}[
  topline=false,bottomline=false,rightline=false,leftline=true,
  linecolor=black!38,linewidth=1.25pt,
  innerleftmargin=8pt,innerrightmargin=1pt,
  innertopmargin=2pt,innerbottommargin=2pt,
  skipabove=0pt,skipbelow=0pt,nobreak=true]
\small
\RaggedRight
\textbf{Operational impact of watermark writes.}
Writing to the source can initially seem like an unattractive property
of a CDC design. Controlled source-side writes are, however, already an
established observability technique. For example, Debezium can execute a configured
query that writes a heartbeat table when captured data is quiet, and
Oracle GoldenGate updates source heartbeat tables and carries those
records through the replication path to measure end-to-end lag
~\cite{debezium_postgresql_connector,oracle_goldengate_heartbeat}.
A deployment that accepts source-side heartbeat writes has therefore
already accepted the basic operational capability that DBLog
watermarks require, namely small, controlled writes to a dedicated table whose
events are expected to pass through the capture pipeline. Heartbeats use
that capability to provide a liveness signal. DBLog watermarks reuse it
to delimit chunk reads.
\end{mdframed}
\Description{A left-rule sidebar explaining that source-side heartbeat
writes make DBLog watermark writes an operationally familiar CDC
interaction, while read-only bracket mechanisms remain available.}
\end{figure}

\begin{figure*}[t]
\centering
\includegraphics{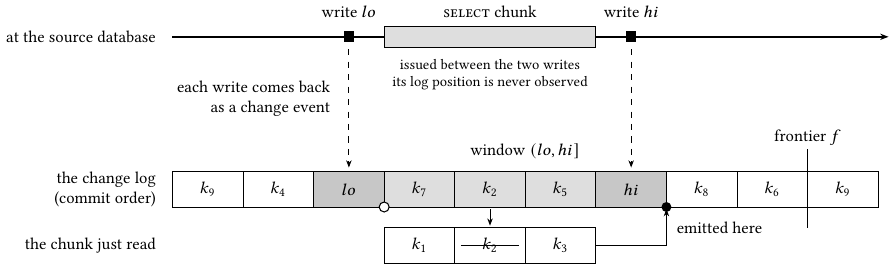}
\caption{DBLog's windowing. Each
watermark is an update to a dedicated single-row table at the source, so
it comes back from the log as an ordinary change event. Those two events
are the window's edges. The window is half-open, with the $lo$ cell lying
outside it and the $hi$ cell inside. The chunk read is issued between the
two writes, so its own log position, which is never observed, must lie
inside the window. A copied row is drawn in the column of the logged
change that shares its key, not at a log position of its own. The logged
$k_2$ has already gone out, so the copied $k_2$ is discarded, and only
the survivors are emitted when $hi$ arrives.}
\label{fig:windowing}
\Description{Three registered rows over one shared column grid. The top
row is a real-time axis of what the capture does at the source database,
showing the low watermark write, chunk select, and high watermark write.
The middle row is the change log drawn as an array of cells, each cell
holding the key that event changed, with the two watermark events as
shaded cells named lo and hi and the cells between them shaded as the
window. Dashed vertical arrows run from each source write down to its
watermark cell, and a hatched bar between the two marks the read's
unobserved log position. The bottom row is the chunk just read, drawn on
the same columns. The copied row sharing a key with a logged change
inside the window is struck through, and the survivors are routed back
into the log immediately after the high watermark.}
\end{figure*}

Earlier work proved that replay of this original watermarked algorithm
matches the source at a certain log position, representing all changes
up to that point~\cite{andreakis2026virtualcuts}. Open-source CDC projects have
since adopted and adapted the design. Debezium's documented default
incremental-snapshot mode writes opening and closing window records to a
configured signaling data collection under its
\texttt{insert\_insert} strategy. Its read-only implementation for MySQL and
MariaDB instead samples server GTID state without writing those markers
~\cite{debezium_signalling,debezium_mysql_connector,
debezium_mariadb_connector}. For PostgreSQL, it samples the server's transaction snapshot before and after each chunk read~\cite{debezium_ddd8,debezium_postgresql_connector}.
Flink CDC's documented MySQL protocol brackets chunks in parallel and
reconciles each chunk with the changes that raced its read. Its documentation
records that the design is inspired by DBLog~\cite{flinkcdc_mysql}. We
also study log-coordinate-bound reads and native dumps, which realize
the exact-position or uncertain-interval approaches described
above~\cite{mysql_mysqldump}.

These variations motivate the paper's central question. Under what
conditions can copied state be merged with the continuing log so that
replay neither loses source changes nor lets older copied state
overwrite newer logged state? Generalized DBLog provides a framework
for answering that question. Under this framework, replaying the combined stream reconstructs the
source state at a single log position, representing the source's state 
history up to that point. We call this a \emph{virtual cut}. Importantly, 
this cut is not stored inside DBLog. Instead, it is the state represented 
by replaying DBLog's output. Unlike a one-time physical snapshot
that represents all rows at one common source coordinate, a virtual cut is assembled
iteratively and on an ongoing basis by interleaving chunk reads with log changes.

The following sections establish what each capture protocol must guarantee to achieve this result.

\subsection{Contributions and reading paths}

\paragraph{Contributions.}
We make four contributions:

\begin{enumerate}
\item We state a correctness contract for combining copied database
state with a committed change log and prove its cut theorem. Replaying
the combined stream matches the source for the keys processed so far
(Lemma~\ref{lem:closed}), and covers all target tables and keys
once all chunks finish (Theorem~\ref{thm:cut}). The correctness
condition is per key and requires no global database snapshot. Chunks
can be read at different times, provided each read reflects source state
somewhere within its watermark bracket. From the final watermark onward,
the reconstructed state continues to track the live source as subsequent
change events arrive.

\item We prove that the original watermarked algorithm satisfies the contract,
and show that other CDC designs fulfill it as well.
This covers Debezium's signal-table modes (\texttt{insert\_insert} and
\texttt{insert\_delete}), the Debezium 3.6 MySQL, MariaDB, and PostgreSQL
read-only variants, the Flink CDC 3.6 MySQL parallel-chunk protocol,
log-coordinate-bound reads, and native dumps, establishing that each
correctly reconstructs source state.

Our analysis is based on version-pinned documentation and selected
source code identified in \S\ref{sec:instances}. The executed-GTID proof
establishes that sampled GTID sets reliably bound each chunk's read
window when accepted samples represent exact, transaction-complete prefixes
of the same history. The parallel proof demonstrates that chunks over
disjoint keys require no relative ordering or coordination between their
brackets.

\item We separate correctness criteria from operational considerations. 
For correctness, the log must describe one complete history, 
the copy must partition target keys into non-overlapping chunks, 
and reconciliation must let later logged changes override stale copied rows. 
Operational choices (like chunk size, parallelism, bracket width,
and whether watermarks require source writes) can affect performance and
throughput, but do not alter the underlying correctness result.

\item We verify the framework using both interactive theorem provers and model checking. 
The full development is machine-checked in Isabelle/HOL, an independent Lean~4 
formalization checks the central cut theorems, and five 
bounded TLA\textsuperscript{+} models exercise the operational 
protocols and their composition.
\end{enumerate}

This formulation generalizes the virtual cut introduced in earlier
work on the classic algorithm~\cite{andreakis2026virtualcuts}.
The theory in this paper is self-contained. The mechanizations are
independent checks, not prerequisites for reading the proofs.

\paragraph{Reading paths.}
Practitioners can go directly from this introduction to Section~\ref{sec:practitioner}. It
presents Generalized DBLog in plain operational terms, separating what the
database must provide from what the connector must establish.
In addition, Table~\ref{tab:instances} summarizes how production
protocols, including Debezium and Flink CDC, fit into this framework.

Readers interested in the formal theory should read
\S\S\ref{sec:setting}--\ref{sec:instances} in order for the system model,
the virtual cut theorems, the reconciliation algorithms, and the connector
proofs, followed by \S\ref{sec:mech} for the Isabelle/HOL, Lean~4, and
TLA\textsuperscript{+} mechanizations. In addition,
Section~\ref{sec:audit} revisits the original 2020 DBLog
paper~\cite{andreakis2020dblog} to make its datastore
assumptions explicit, \S\ref{sec:related} discusses related work, and
\S\ref{sec:fence} outlines the scope and limitations of the work.

\clearpage
\section{The Setting: Stores, Logs, and Coordinates}\label{sec:setting}

We model one \emph{capture source} as a keyed store together with a
commit-ordered change log. All definitions are per capture scope,
the set of tables or key ranges that one capture run is responsible
for.

In the terminology used below, each copied chunk is a \emph{unit},
its two log positions form a \emph{bracket}, and every copied row or
observed absence becomes a \emph{read result}. The
\emph{frontier} is the position at which we judge the combined output.

The contract answers five questions: (1) which complete, commit-ordered
row history is authoritative, (2) which stable row identities the copy
must account for, (3) which low and high coordinates bracket each unit's
read, (4) whether replaying the actual emission matches letting the log override
the copy, and (5) through which observed coordinate the result is claimed.

The contract alone fixes the canonical replay at the frontier. Emitted
streams, transaction closure, and shared witnesses are added in later
sections, while physical prefixes, restart, transport, and sink application
stay outside the result.

\subsection{Keyed stores and commit-ordered logs}\label{sec:setting-store}

Fix a key space $\keys$ and a value domain $\vals$. A \emph{state}
is a partial map $\sigma : \keys \rightharpoonup \vals$, where
$\sigma(k) = \perp$ means key $k$ is absent. For a multi-table
scope, $\keys$ is the disjoint union of per-table key spaces, each
key tagged with its table, and nothing in the theory depends on the
tagging.

The elements of $\keys$ are \emph{stable logical row identities}, not
simply values used to order a scan. A primary-key change is normalized
as absence for the old identity followed by a value for the new one.
A table without a primary key needs another stable identity whose
equality is shared by the read and log paths. Ambiguous duplicate rows
do not fit the keyed model without an additional normalization that
supplies such identities.

The committed history is a finite or growing sequence of
\emph{event occurrences} $e_1, e_2, \ldots$ in \emph{commit order}.
Position supplies occurrence identity. Two occurrences remain distinct
even when their keys and images are equal, and we suppress that position
tag in the notation. An event occurrence carries a key
$\keyof(e) \in \keys$ and determines the post-state of
that key, $\imgof(e) \in \vals \cup \{\perp\}$, with $\perp$ for a
delete. This is the row-event shape a log-based CDC deployment
configured for full row images consumes, consisting of a binlog or WAL entry
carrying the complete after-image
of one row, or a tombstone. Events outside the scope are ignored
throughout. Logs that carry deltas rather than full images are 
addressed by S-IMG in \S\ref{sec:contract-promises}.

Only committed transactions contribute events to this history. An
event from an active or aborted transaction is not in the modeled log,
cannot be selected by a coordinate or window, and cannot participate in
replay. Source reads that supply copied values likewise observe
committed database state. Dirty reads are outside the contract.

State and log are tied together by replay. Writing $P$ for a log
prefix, the \emph{row-event replay state} after $P$ is the per-key
fold $\sigma_P$:
$\sigma_\emptyset$ is the initial state, and $\sigma_P(k)$ is
$\imgof(e)$ for the last event $e \in P$ with $\keyof(e) = k$, or
$\sigma_\emptyset(k)$ if no such event exists. Every committed write
appears as exactly one event per changed row, so $\sigma_P$ is the
authoritative replay state of those row events at ``time'' $P$.
Times, here and everywhere below, are log prefixes, never wall
clocks. The core deliberately carries no transaction identifier, so
a mathematical row-event prefix may end between adjacent events of a
multi-row transaction that has already committed and entered the log as
one complete block. Such a prefix is only an intermediate state of
replaying already-committed row events. It does not represent a partial
commit or access to an in-progress transaction. We reserve
\emph{transactional snapshot} and
\emph{committed database state} for a transaction-closed prefix, one
containing either all or none of each transaction's row events. Any
such reading of a theorem below therefore needs transaction closure
as an additional instance assumption. The unconditional result is
per-key equality with the row-event replay state.

\subsection{Mapping coordinates to log prefixes}\label{sec:setting-coords}

Real systems do not pass log prefixes around. They pass
\emph{coordinates}, such as binlog file/position pairs, LSNs, SCNs,
executed-GTID sets, exported snapshot names, and resume tokens. The
contract requires only that a coordinate mark a
position whose meaning is ``these committed row events, and no later
ones, lie before here.'' A transactional-snapshot claim additionally
requires that position to be transaction-closed.

\begin{definition}[Coordinate space]\label{def:coords}
A \emph{coordinate space} is a set $C$ together with a mapping
$\den{\cdot} : C \to \mathrm{Prefixes}(\mathrm{Log})$ such that all
prefixes in its image are linearly nested, meaning that for any $c, c'$, either
$\den{c} \subseteq \den{c'}$ or $\den{c'} \subseteq \den{c}$. Write
$c \pre c'$ for $\den{c} \subseteq \den{c'}$. Thus $\pre$ is a total
preorder on $C$, and quotienting by equality of mapped prefixes gives a linear
order.
\end{definition}

Operationally, a usable coordinate lets the capture faithfully answer whether a committed event occurred between two
recorded positions.

The only operation the contract ever needs is the \emph{membership
oracle}, which, given an event occurrence $e$ and coordinates $c \pre c'$, decides
whether $e \in \den{c'} \setminus \den{c}$, that is, whether $e$
lies inside the window $\window{c}{c'}$.
No scalar ordering or interval arithmetic over raw coordinate
values is required. In practice, scalar LSNs and file/position pairs map to prefixes
through their usual comparisons, but executed-GTID \emph{sets}
map to prefixes too, and their membership oracle is set membership
on the difference of two GTID sets, with no interval arithmetic and
no scalar anywhere. This modeling choice is what makes read-only capture an ordinary member of the family
(\S\ref{sec:instances}).

Admissibility is nestedness plus \emph{operational fidelity}: the 
instance's window-membership decisions must agree with the declared
$\den{\cdot}$. Raw coordinate comparisons used to establish plan 
bounds must imply prefix inclusion. Distinct raw coordinates may still 
map to the same prefix.

A mapping that ignores the system's actual coordinate behavior
proves nothing. Mapping every coordinate to the empty prefix is nested but useless, because a plan
that consults coordinates then disagrees with its own
declared reading. And a system whose operational coordinates admit
\emph{no} mapping to commit-order prefixes that matches their behavior (client-supplied
timestamps, sequence numbers allocated at statement time rather
than commit time, replicas applying transactions out of commit
order) is outside the contract altogether. That is level T0
(\S\ref{sec:tiers}), the family boundary. Membership is not a matter of degree. Either the system's coordinates can be read as log prefixes or they cannot. Nestedness itself costs nothing,
because mapped prefixes belong to one linear log, and prefixes of a
single sequence are always nested. Fidelity is therefore the only
substantive admissibility requirement, verified instance by
instance (\S\ref{sec:instances}).

\subsection{Windows and the cutoff discipline}\label{sec:setting-windows}

\begin{definition}[Window]\label{def:window}
For $lo \pre hi$, the \emph{window} $\window{lo}{hi}$ is the ordered
event-occurrence slice written $\den{hi} \setminus \den{lo}$.
Occurrences inherit commit order from the log. Equal key-and-image
payloads at different positions remain distinct elements.
\end{definition}

Windows are half-open by convention, meaning events at the low edge are
excluded and events at the high edge included. Every instance fixes its
edges to this convention once. Including or excluding one boundary
event twice would duplicate it. Excluding it from both adjacent
windows would lose it. The same off-by-one failure can occur at
$lo$ or $hi$, so the convention is part of the formal object. An
instance whose raw endpoint records use a different syntax (say, an
offset-inclusive $[\mathit{LOW}, \mathit{HIGH}]$ span in a vendor
document) must declare its endpoint-to-prefix normalization as an
instantiation obligation before its plan may use the window. The
semantic window is always $\den{hi} \setminus \den{lo}$.

\subsection{What stays outside}\label{sec:setting-fence}

The contract is source-side. The replay semantics defined in
\S\ref{sec:contract} is a mathematical fold over the emitted
sequence. It defines what the emitted stream \emph{means}, so that
``cut'' is a well-defined claim about capture. It is not a
delivery-semantics claim. Transport, exactly-once application, sink
transactionality, schema and DDL evolution, deployment topology,
and security stay outside this paper's scope and are cited, not
treated (\S\ref{sec:fence}). The delivery side of the boundary is
the subject of separate work~\cite{andreakis2026dualwrite}. This boundary has one declared exception, namely the \emph{scope witness}.
Determining which relations and keys exist is itself a read that
needs a coordinate. Every deployment must therefore satisfy the common
\emph{A-scope-witness} condition by naming how the relations and key ranges
were enumerated and which coordinate binds that enumeration. The
production instances share this common condition. Their individual
specifications add only mechanism-specific scope facts. Schema 
evolution during capture remains out of scope by definition.

\section{Capture Plans and the Contract}\label{sec:contract}

A capture run has three parts relevant to the proof. It divides the
scope into units, reads each unit within a log bracket, and uses an
associated merge discipline to emit reads together with log events.
The \emph{capture plan} records the final units, brackets, read
results, and frontier. It does not record the emitted stream.
Scheduling and emission enter separately through the instance and
merge obligations.

\subsection{Capture plans}\label{sec:contract-plans}

\begin{definition}[Capture plan]\label{def:plan}
A capture plan for scope $\keys$ at frontier $f \in C$ is a finite
family
\[
\mathrm{Plan} \;=\; \{\, (D_i,\; lo_i,\; hi_i,\; R_i) \;:\; i = 1..n \,\}
\]
where $\{D_i\}$ partitions $\keys$. Each unit carries a
\emph{bracket} $(lo_i, hi_i)$ with
$lo_i \pre hi_i \pre f$. The map
$R_i : D_i \to \vals \cup \{\perp\}$ is
the unit's \emph{refresh}, assigning one value-or-absence read result per key of the unit.
\end{definition}

Operationally, $\keys$ is the captured identity namespace and each
$D_i$ is the final ownership predicate for one unit, often a table and
half-open key range. These sets are not assumed finite. For a range or
table capture they include identities that are currently absent and
identities whose rows may be inserted while the capture runs. The
finite family of final domains must partition the scope. A finite scan
still induces the total map $R_i$, where returned rows give values
and a nonreturned identity indicates implicit absence only when
the scan and range predicate establish complete membership. A
concurrent insert already belongs to one final domain. Its early
inferred absence is valid before the insert, and the observed insert
event later overrides it through the log-wins rule.

Brackets may be shared between units (one bracket amortized over
many reads), degenerate ($lo_i = hi_i$, a \emph{point bracket}), or
global ($lo_i$ at capture start). The frontier $f$ is the coordinate
up to which this run makes any claim.
Figure~\ref{fig:anatomy} draws the anatomy of one unit.

\begin{figure}[t]
\centering
\includegraphics{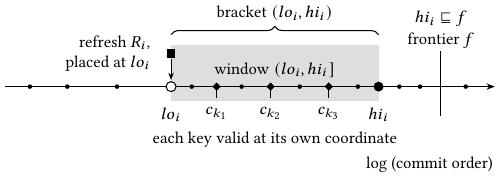}
\caption{One unit of a capture plan. The bracket $(lo_i,hi_i)$
spans a segment of the commit-ordered log. The window $(lo_i,hi_i]$
is half-open, with events at the low edge excluded and events at the
high edge included. The canonical rule places the refresh $R_i$ at
$lo_i$, wherever its rows were actually read. Condition O2 
(bracket-local validity) requires less than a single read instant. Each 
key of the unit reflects the database state at some coordinate
of its own inside the bracket
($c_{k_1}$, $c_{k_2}$, $c_{k_3}$). Every bracket sits at or below
the frontier $f$.}
\label{fig:anatomy}
\Description{A horizontal commit-order log axis with a unit's
bracket marked by an open circle at the low edge and a filled
circle at the high edge, the half-open window shaded between them,
the refresh square placed at the low edge, three per-key read
coordinates at distinct positions inside the bracket, and the
frontier to the right of the bracket.}
\end{figure}

The definition is silent about \emph{how} $R_i$ was obtained, whether by a
chunked \texttt{SELECT}, a \texttt{COPY}, a dump file, a restored
physical backup, or a read against a replica. It is also silent about
scheduling (how large units are, and whether they ran serially or in
parallel) and about how brackets were realized. It carries their
endpoints, but imposes no constraint on their numerical width. Scheduling
and width do not change the theorem once the contract, observation,
and emission-equivalence conditions continue to hold, though either can
make those requirements operationally infeasible. Provenance likewise has
no field in the tuple, but it can determine which O4 and O5 obligations
apply and therefore whether contract membership is established. The
plan encodes no physical simultaneity. Corollary~1 instead tests the
extensional property that one shared state witness lies in every bracket
and agrees with every refresh. An engine snapshot is one way, not the
only way, to realize that fact (\S\ref{sec:theorem}).

Real runs are dynamic. They re-split large chunks mid-flight and
discover tables as they go. The definition absorbs this by reading
$\mathrm{Plan}$ as the \emph{final} partition the run produced. A
run that re-splits after it has already emitted reads must satisfy the
corresponding equivalence obligation
(\S\ref{sec:contract-replay}) with respect to the final plan. The
actually-emitted stream, including any superseded pre-split
refreshes, must replay to the final plan's sink state.

\subsection{What the source promises}\label{sec:contract-promises}

Three promises bind the source and its log. They carry the 2020 paper's requirements forward, kept in spirit and sharpened where \S\ref{sec:audit} found them under-specified.

\begin{itemize}
\item \textbf{(S-LOG: commit-ordered keyed log.)} The log is a
linear commit-order event sequence over $\keys$. Every committed
change to a scoped key appears as exactly one event, and no event from
an active or aborted transaction appears. A committed transaction's
row events become available as one complete block. Key identity is preserved.
\item \textbf{(S-IMG: image sufficiency.)} $\imgof(e)$ fully
determines the post-state of $\keyof(e)$. Where a real log carries
deltas, or primary-key updates without the old key, the instance
must either strengthen the log (before-images, delete-plus-insert
rendering, replica-identity-class settings) or adopt a re-read
repair \emph{as a log normalization}. The repaired instance is
modeled over effective full-image events, each re-read supplying the
image its delta event lacked, and its equivalence obligation then
includes that normalization. Absent strengthening or normalization,
delta logs are outside this core.
\item \textbf{(S-OBS: observability.)} The capture process
observes every scoped event of $\den{f}$ from its consumption
start $s_0$ onward, requiring both log coverage and
the retention to read what its merge discipline consumes. Where $s_0$ must sit, and
in what access mode window ranges must be readable, is owned by
the discipline in use, and each discipline's definition states its
own bound (\S\ref{sec:merge}). A single pass from $s_0 \pre lo_i$
for every unit suffices for window-discard, while range-merge
streams from $s_0 \pre hi_i$ and reads windows by
coordinate-addressable access.
\end{itemize}

\subsection{Capture plan requirements}\label{sec:contract-obligations}

\begin{itemize}
\item \textbf{(O1: coverage.)} $\{D_i\}$ partitions $\keys$,
so every scoped key is owned by exactly one unit. Plans whose units
overlap are outside the core contract. A composition with
overlapping branches must satisfy the coherence obligation that the
branches replay identically, and this paper treats only partitions.
\item \textbf{(O2: bracket-local validity.)} For every unit $i$
and every key $k \in D_i$ there \emph{exists} a coordinate $c$ with
$lo_i \pre c \pre hi_i$ such that
$R_i(k) = \sigma_{\den{c}}(k)$.
\item \textbf{(O3: domain completeness.)} $R_i$ is total on
$D_i$, so that every key of the unit yields exactly one value-or-absence
result.
\item \textbf{(O4: declared external assertions.)} Where a
bracket coordinate is asserted by tooling rather than constructed or
observed by the capture process (dump metadata, a backup's recorded
recovery point), the coincidence ``$R_i$ was read at the asserted
coordinate'' enters as an \emph{explicit assumption} of the instance,
never as a proved fact.
\item \textbf{(O5: representation agreement.)} The refresh path
and the log path agree on key identity and value encoding.
Formally, both land in the same $\keys$ and $\vals$. Operationally,
a checkable clause that applies exactly when the refresh bypasses the
connector's row path, as dumps and backups do.
\end{itemize}

O2 is the central weakening of this paper. It is
\emph{per-key} and \emph{existential}.
Different keys of one unit may reflect the source at different in-bracket
coordinates, and the contract does not require one shared read
coordinate. Many concrete reads satisfy O2 more strongly, as when a statement
snapshot or coordinate-bound read supplies the same read coordinate for every
key in its unit. The classic algorithm's single-coordinate chunk reads
satisfy O2 in this way (\S\ref{sec:instances}). The per-key
form also admits a complete refresh whose keys reflect different read coordinates.
Section~\ref{sec:theorem-torn} constructs such an instance
and proves that no single coordinate explains all of its read results,
while its cut still holds.

The database operation that supplies $R_i$ is a committed read. When an
instance retains transaction labels, its operational read coordinate lies
at a transaction boundary. The executed-set instance of
\S\ref{sec:inst-readonly} makes this explicit. The transaction-free core
records only the per-key equality required by the proof. It does not
turn an interior row-event replay prefix into an allowable dirty-read
state.

Together, O2 and O3 pin window-invariant keys exactly. If no
in-window event touches $k$, then all candidate values
$\sigma_{\den{c}}(k)$ across the bracket coincide, so the read
result $R_i(k)$ \emph{is} that invariant value.

A scan that silently misses a key cannot satisfy both obligations,
because absence-by-omission is sound only when the key is genuinely absent.
This pair, not read freshness and not a specific isolation level, 
is the property a scan must preserve. Section~\ref{sec:audit} returns to
the point, because the 2020 paper's own requirement was stated in
freshness vocabulary and the gap between the two formulations is where
one class of engines lacks statement-stable scan membership.

O4 makes external trust explicit. Splice-style instances use
coordinates that arrive from outside the capture process,
such as a coordinate written into a header by a dump tool or a recovery point
recorded by a backup system. The contract does not treat these as observations. 
Each such reliance becomes an explicit assumption 
that the instance records visibly (\S\ref{sec:instances}), 
so that every level or certification claim in this paper is
readable as ``proved, given this list.'' Where the list is empty,
no additional external coordinate assertion is assumed.
Otherwise the reader can check each entry against their own deployment.

\begin{definition}[Source-side contract]\label{def:contract}
A source, a coordinate space, and a capture plan \emph{satisfy the
contract} iff S-LOG and S-IMG hold and the plan satisfies
O1-O5.
\end{definition}

S-OBS stands outside the definition for a reason. Within the model, 
S-LOG and S-IMG specify the commit-ordered full-image event sequence 
of \S\ref{sec:setting}. An instantiation over a real system carries 
them as that modeling claim. The
contract and its cut theorem constrain the \emph{canonical}
replay, a fold over the committed history directly
(\S\ref{sec:contract-replay}), and require no observation clause.
What a capture process actually observed becomes decisive at the point where an emitted stream is claimed
equivalent to the canonical one, and there each discipline's
equivalence theorem relies on S-OBS in the discipline's own form
(\S\ref{sec:merge}). A deployment must satisfy all three source requirements. The
split records which theorems depend on which clauses.

\subsection{Canonical replay and emission equivalence}\label{sec:contract-replay}

For canonical \emph{replay}, we place each unit's refresh at $lo_i$ and 
let subsequent log events override it. At the frontier, this gives 
the per-key fold: for $k \in D_i$,
\[
\sink_f(k) \;=\;
\begin{cases}
\imgof(e^{\ast}) & \parbox[t]{0.52\columnwidth}{if $e^{\ast}$, the
  last event for $k$ in $\window{lo_i}{f}$, exists,}\\[2pt]
R_i(k) & \text{otherwise.}
\end{cases}
\]
Placing refreshes at $lo_i$ and letting later events win is the
\emph{canonical merge rule}. The refresh overrides events at or 
before $lo_i$, and events in $(lo_i,f]$ override the refresh.
The low edge is a conservative placement.
Condition O2 permits each read result to reflect the source state at some coordinate within the bracket,
while the fold gives every later in-window and post-window
event final precedence. It is a definition of meaning, not an
implementation prescription, and shipped systems do not need to emit
this stream. How far a real discipline departs from it varies. The
classic algorithm buffers
a chunk and \emph{discards} buffered rows as in-window events
arrive. Flink-shaped systems \emph{merge} the window's events onto
the chunk and emit the unit once. Splices emit a dump behind a
point bracket with nothing to reconcile. Their modeled
copy-before-suffix stream is replay-equivalent to the canonical rule
when that order, or an equivalent merge, is preserved. Each such
discipline $M$ is admitted through an \emph{equivalence
obligation}. The instance must show that $M$'s emitted stream,
built from what the capture observes under $M$'s own S-OBS form,
replays to the same $\sink_f$ as the canonical rule. Usually that
is a short proof, and \S\ref{sec:merge} is a catalog of the
established ones, alongside entries whose obligations are stated but
left open, labeled as that. This proof approach makes ``the log wins'' a theorem about the canonical rule rather than a separate argument per implementation. Several industrial correctness conditions, such as withholding rules and buffer-close semantics, surface at this layer (\S\ref{sec:merge}).

The next section establishes the contract guarantee. For
every plan satisfying Definition~\ref{def:contract}, the canonical
replay $\sink_f$ is exactly the source's row-event replay state
$\sigma_{\den{f}}$, on every key, at the frontier, regardless of
provenance, parallelism, or width.

\section{The Cut Theorem and Its Corollaries}\label{sec:theorem}

The contract yields one theorem, a derived latest-high lemma, a
closed-unit lemma, and three corollaries. The
theorem states that a conforming plan's canonical replay equals the
source's row-event replay state at its frontier. The corollaries carry
the readings a practitioner needs beyond the frontier value itself.
They show how an exhibited shared state witness can move the canonical
trajectory earlier (Corollary~\ref{cor:shared}), what a capture may claim past the point
where it stopped observing (Corollary~\ref{cor:cont}), and why plans
assembled from different capture methods compose into one guarantee
(Corollary~\ref{cor:mix}). The theorem's proof and the first two corollaries' are local to
one key and its one covering unit, and the third corollary composes whole
plans without ever coupling their units. That locality is itself a result, 
and Section~\ref{sec:instances} relies on it.

\subsection{Window invariance and the theorem}\label{sec:theorem-cut}

The entire development rests on one lemma. It says that if nothing
happened to a key inside a window, then the key's value is the same
at every coordinate of that window. Hence a read that matches 
the source state at any coordinate in such a window matches it at every coordinate in that window.
Bracket width therefore never enters the canonical cut theorem.
Widening a bracket enlarges the set of events the merge must
reconcile, and can strain its observation requirements, but for the keys
the refresh actually decides, those with no event in the window,
every coordinate of the bracket is as good as every other.

\begin{lemma}[Window invariance]\label{lem:inv}
Let $lo \pre c \pre hi$ and let $k$ be a key with no event in the
window $\window{lo}{hi}$. Then
$\sigma_{\den{c}}(k) = \sigma_{\den{lo}}(k)$. Consequently
$\sigma_{\den{c}}(k) = \sigma_{\den{c'}}(k)$ for any two coordinates
$c, c'$ of the bracket.
\end{lemma}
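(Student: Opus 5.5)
We argue directly from the definition of the row-event replay state $\sigma_P$ as a per-key fold, using only that coordinates map to nested prefixes of one commit-ordered log (Definition~\ref{def:coords}). Since $lo \pre c \pre hi$, we have $\den{lo} \subseteq \den{c} \subseteq \den{hi}$. Because mapped prefixes are prefixes of a single sequence, $\den{c}$ decomposes as $\den{lo}$ followed by the slice $\den{c} \setminus \den{lo}$. This slice is itself a contiguous segment of the log, and it is contained in $\den{hi} \setminus \den{lo} = \window{lo}{hi}$.

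The key step is a case split on the last event for $k$ in $\den{c}$. By hypothesis no event with key $k$ lies in $\window{lo}{hi}$, hence none lies in $\den{c} \setminus \den{lo}$. Therefore the set of occurrences $e \in \den{c}$ with $\keyof(e) = k$ equals the corresponding set in $\den{lo}$, with the same commit order.
\begin{itemize}
\item If that set is empty, both $\sigma_{\den{c}}(k)$ and $\sigma_{\den{lo}}(k)$ equal $\sigma_\emptyset(k)$.
\item Otherwise they share the same last occurrence $e^\ast$, and both equal $\imgof(e^\ast)$.
\end{itemize}
This gives $\sigma_{\den{c}}(k) = \sigma_{\den{lo}}(k)$.

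The step needing most care is justifying that "last event for $k$ in $\den{c}$" coincides with "last event for $k$ in $\den{lo}$." This requires that $\den{lo}$ is an initial segment of $\den{c}$ in the inherited commit order, not merely a subset. We get it from the fact that the prefixes lie in one linear log. Occurrence identity is positional, so equal payloads outside the window cannot be confused with in-window ones. In a mechanization I would state this as a lemma on list prefixes: if $P = Q \mathbin{+\!\!+} W$ and no element of $W$ has key $k$, then the fold of $P$ at $k$ equals the fold of $Q$ at $k$. The lemma follows by induction on $W$.

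The consequence is then immediate. For coordinates $c, c'$ with $lo \pre c, c' \pre hi$, apply the first part twice to get $\sigma_{\den{c}}(k) = \sigma_{\den{lo}}(k) = \sigma_{\den{c'}}(k)$. Bracket width never enters the argument.
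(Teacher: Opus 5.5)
Your proof is correct and follows essentially the paper's route. Both arguments note that $\den{c}$ extends $\den{lo}$ by the slice $\window{lo}{c} \subseteq \window{lo}{hi}$, which has no event for $k$, so the per-key fold is unchanged, and both get the consequence by comparing $c$ and $c'$ against $lo$; your case split on the last occurrence and your list-prefix induction simply spell out the step the paper states in one sentence.
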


\begin{proof}
The prefix $\den{c}$ extends $\den{lo}$ by exactly the window
$\window{lo}{c}$, and $\window{lo}{c} \subseteq \window{lo}{hi}$
because $c \pre hi$. The extension therefore contains no event for
$k$, and the per-key fold of \S\ref{sec:setting-store} leaves
$\sigma(k)$ unchanged across an extension without events for $k$. The
consequence follows by applying this to $c$ and to $c'$ and comparing
both against $lo$.
\end{proof}

\begin{theorem}[Cut theorem]\label{thm:cut}
Let a source, a coordinate space, and a capture plan at frontier $f$
satisfy the contract (Definition~\ref{def:contract}). Then the
canonical replay equals the source's row-event replay state at the
frontier on every key of the scope:
\[
\sink_f(k) \;=\; \sigma_{\den{f}}(k)
\qquad\text{for all } k \in \keys.
\]
\end{theorem}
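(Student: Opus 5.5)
The argument is local to one key. Fix $k \in \keys$. By O1 there is exactly one unit $i$ with $k \in D_i$, so $\sink_f(k)$ is given by the two-case fold over $\window{lo_i}{f}$ for that unit. By O3, $R_i(k)$ is defined. Since $lo_i \pre hi_i \pre f$, the window $\window{lo_i}{f}$ is the slice $\den{f} \setminus \den{lo_i}$ of the single commit-ordered log (S-LOG). I will split on whether an event for $k$ occurs in this window.

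\emph{Case 1: some event for $k$ lies in $\window{lo_i}{f}$.} Let $e^{\ast}$ be the last such event, so $\sink_f(k) = \imgof(e^{\ast})$. Because $\den{lo_i} \subseteq \den{f}$ and prefixes inherit commit order, every event of $\den{f}$ outside the window lies in $\den{lo_i}$ and therefore precedes $e^{\ast}$. Hence $e^{\ast}$ is also the last event for $k$ in the whole prefix $\den{f}$. The row-event fold of \S\ref{sec:setting-store}, together with S-IMG (the image fully determines the post-state), then gives $\sigma_{\den{f}}(k) = \imgof(e^{\ast})$.

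\emph{Case 2: no event for $k$ lies in $\window{lo_i}{f}$.} Then $\sink_f(k) = R_i(k)$. By O2 there is a coordinate $c$ with $lo_i \pre c \pre hi_i$ and $R_i(k) = \sigma_{\den{c}}(k)$. Since $c \pre f$, I apply Lemma~\ref{lem:inv} with bracket $(lo_i, f)$; its hypothesis is exactly the assumption of this case. This yields $\sigma_{\den{c}}(k) = \sigma_{\den{lo_i}}(k)$, and also $\sigma_{\den{f}}(k) = \sigma_{\den{lo_i}}(k)$ by taking $c'=f$. Chaining the equalities gives $\sink_f(k) = \sigma_{\den{f}}(k)$.

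The main obstacle is the bookkeeping in Case 1. I must justify that the last in-window event is the last event in $\den{f}$. This rests on the linear nesting of Definition~\ref{def:coords}: $\den{f}$ decomposes as $\den{lo_i}$ followed by the window, with occurrence order preserved, so no event of $\den{lo_i}$ can come after a window event. I also need to be careful that the window in Case 2 is the widened $\window{lo_i}{f}$, not $\window{lo_i}{hi_i}$. That is why the lemma is invoked with $f$ as the upper edge, so bracket width never enters the argument. O4 and O5 are used only implicitly, to guarantee that $R_i(k)$ and the log images live in the same $\vals$ and that the O2 coordinate is a genuine hypothesis.
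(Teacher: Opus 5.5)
Your proposal is correct and matches the paper's proof: fix $k$, take its unique covering unit by O1, and split on whether $\window{lo_i}{f}$ contains an event for $k$. In the first case you use the decomposition of $\den{f}$ into $\den{lo_i}$ followed by the window, and in the second you use O2 plus two applications of Lemma~\ref{lem:inv} with $f$ as the high edge; your remark about O4 and O5 is unnecessary, since the paper's argument uses only O1, O2, and O3.
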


\begin{proof}
Fix $k \in \keys$. By O1 exactly one unit $i$ owns $k$, and both
sides of the claim are decided by the events for $k$ alone. The left
side consults the window $\window{lo_i}{f}$ and the refresh
$R_i(k)$, while the right side is the image of the last event for
$k$ in $\den{f}$, or $\sigma_\emptyset(k)$ if there is none. Since
$lo_i \pre hi_i \pre f$, the prefix $\den{f}$ consists of
$\den{lo_i}$ followed, in commit order, by the window
$\window{lo_i}{f}$. The events for $k$ in $\den{f}$ split the same
way. We distinguish whether the window part is empty.

\emph{Case A: some event for $k$ lies in $\window{lo_i}{f}$.} Let
$e^{\ast}$ be the last such event. The replay returns
$\sink_f(k) = \imgof(e^{\ast})$ by definition. By the split,
$e^{\ast}$ is also the last event for $k$ in all of $\den{f}$, so
$\sigma_{\den{f}}(k) = \imgof(e^{\ast})$ as well. The fold discards
the refresh in favor of $e^{\ast}$ regardless of what the read
returned for $k$. Condition O2 for 
$R_i(k)$ is not needed here, only its existence (O3), and the
eviction behavior of every real merge discipline
(\S\ref{sec:merge}) implements this property of the fold.

\emph{Case B: no event for $k$ lies in $\window{lo_i}{f}$.} The
replay returns $\sink_f(k) = R_i(k)$, and by O3 that read exists
and is unique. O2 supplies a coordinate $c$ with
$lo_i \pre c \pre hi_i$ and $R_i(k) = \sigma_{\den{c}}(k)$. The window $\window{lo_i}{f}$ contains no event for $k$, and $lo_i \pre c \pre f$ since $c \pre hi_i \pre f$. Two applications
of Lemma~\ref{lem:inv}, with $f$ as the window's high edge,
therefore give
\[
R_i(k) \;=\; \sigma_{\den{c}}(k)
       \;=\; \sigma_{\den{lo_i}}(k)
       \;=\; \sigma_{\den{f}}(k),
\]
which is the claim.
\end{proof}

Two remarks on the proof's shape. Both are used later.

\emph{The frontier is parametric.} Beyond the bracket bounds
$lo_i \pre hi_i$, the argument relied on exactly one property of $f$,
namely that $hi_i \pre f$ for the covering unit. The identical argument
therefore proves replay to \emph{any} coordinate $g$ bounding every
unit's high edge equal to $\sigma_{\den{g}}$. This parametric form
is what Lemma~\ref{lem:latest-high} and
Corollary~\ref{cor:cont} instantiate.

\emph{Nothing couples distinct units.} The proof fixes one key,
finds its one unit, and never mentions another. No condition couples 
two units' brackets, no step compares their read coordinates, and no
bound on how many units run concurrently appears anywhere. Nothing
couples distinct units beyond O1 disjointness and the shared frontier
bound. Whether the units ran serially or in parallel, and whether
their brackets were disjoint, nested, or identical on the log axis,
is invisible to the argument. Corollary~\ref{cor:mix} applies this
observation, and the parallel-chunks instance of
Section~\ref{sec:instances} derives its guarantee from it without any
coordination argument of its own.

\paragraph{Frontier equality is per-key.}
The equality of Theorem~\ref{thm:cut} holds key by key at the
frontier. Each replayed value equals that key's value in the
row-event replay state at the frontier. The core framework assumes 
nothing about transaction boundaries. Consequently, calling that state a 
transactional snapshot additionally requires the frontier to be
transaction-closed. Corollary~\ref{cor:shared}'s trajectory requires
the same closure at every coordinate given that reading. The
contract alone supplies neither closure nor physical sink-prefix
behavior. Section~\ref{sec:practitioner-frontier} gives a concrete
practitioner example of the distinction between committed-only input
and a transaction-closed result. Where coordinates are transaction
boundaries, as with the
executed-transaction sets of \S\ref{sec:inst-readonly}, that stronger 
structure enters through instance-specific assumptions.

\paragraph{Monotonicity is discipline-scoped.}
Theorem~\ref{thm:cut} is a statement about the final replayed state.
Whether the sink also moves \emph{monotonically} while the emission
is consumed, never showing an older value for a key after a newer
one, depends on the merge discipline and is proved per discipline in
\S\ref{sec:merge}. Under the canonical rule itself, monotonicity
holds for every window-invariant key. A key touched inside
its window can step backward transiently. Its refresh is placed at
$lo_i$ even when the read that produced it happened late in the
bracket, so replay may apply that late value first and the key's
earlier in-window events after it, until the last event restores
the final value. The transient is confined to the window that
produced it, and the final value is unaffected. The two production
disciplines both eliminate it. Window-discard emits a key's events
in commit order with the surviving refresh placed at the window
close, and range-merge emits the merged state for each key at the close
followed only by later events. Both per-key monotonicity theorems
are stated and proved in \S\ref{sec:merge}.

\subsection{A worked replay}\label{sec:theorem-example}

Executing the theorem's fold once by hand shows both the log winning 
and what the refresh is actually for.

\begin{example}[Window replay]\label{ex:replay}
Take three keys $a, b, c$ over an initially empty store
($\sigma_\emptyset$ assigns $\perp$ everywhere), and a committed
history of three events
\[
L = e_1\, e_2\, e_3,
\qquad
\begin{aligned}
e_1 &: \keyof(e_1) = a,\; \imgof(e_1) = 1,\\
e_2 &: \keyof(e_2) = a,\; \imgof(e_2) = 2,\\
e_3 &: \keyof(e_3) = c,\; \imgof(e_3) = 9.
\end{aligned}
\]
Coordinates $c_0, c_1, c_2, c_3$ map to the prefixes of lengths $0$
through $3$. The plan has one unit, with domain $\{a, b, c\}$, bracket
$(lo, hi) = (c_1, c_2)$, frontier $f = c_3$, and refresh
\[
R(a) = 1, \qquad R(b) = \perp, \qquad R(c) = \perp,
\]
a read that matches the source state at the low edge 
$c_1$. After $e_1$, key $a$
held $1$ and keys $b, c$ were absent, so O2 is satisfied with
witness $c_1$ for all three keys, and O3 holds since $R$ gives a
read result for each. The replay to $f$ works each key through the
window $\window{c_1}{c_3} = e_2\, e_3$:
\begin{itemize}
\item \emph{Key $a$ (a stale read, evicted).} The window contains
$e_2$, so Case~A applies, giving $\sink_f(a) = \imgof(e_2) = 2$. The
refresh value $1$ was true when the cursor passed $a$ and is stale
at the frontier. The fold never consults it. This is ``the log wins.''
\item \emph{Key $b$ (true absence, kept).} No event for $b$ exists
anywhere, so Case~B applies and $\sink_f(b) = R(b) = \perp$, which
is correct because O2 forces the recorded absence to be
valid somewhere in the bracket and Lemma~\ref{lem:inv} extends it
to $f$.
\item \emph{Key $c$ (superseded after the bracket).} The window
$\window{c_1}{c_3}$ contains $e_3$, which committed after the high
edge $c_2$ but at or before the frontier. Case~A applies,
yielding $\sink_f(c) = 9$. The refresh recorded $c$ as absent, validly inside the
bracket, yet the later event supersedes it, as a live
stream should.
\end{itemize}
All three agree with $\sigma_{\den{c_3}} = \{a \mapsto 2,\,
c \mapsto 9\}$, as Theorem~\ref{thm:cut} requires.
\end{example}

\subsection{Per-key validity is strictly weaker than a single read
point}\label{sec:theorem-torn}

Section~\ref{sec:contract-obligations} called O2 the central weakening. The
following instance shows that the weakening is strict. Its refresh
combines two committed observations made at different coordinates. No
single coordinate accounts for both reads, yet the contract admits
the instance and its cut still holds.

\begin{proposition}[No shared read point]\label{prop:torn}
There is a contract instance with one unit over two keys such that
\begin{enumerate}
\item no single coordinate $c$ in the unit's bracket satisfies
$R(k) = \sigma_{\den{c}}(k)$ for both keys $k$, yet
\item the instance satisfies O1-O5, and
\item its canonical replay equals the row-event replay state at the frontier.
\end{enumerate}
\end{proposition}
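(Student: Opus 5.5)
The proof is an explicit construction followed by three direct checks. Take two keys $a,b$ over an initially empty store and a log of two events $L = e_1\, e_2$ with $\keyof(e_1)=a$, $\imgof(e_1)=1$ and $\keyof(e_2)=b$, $\imgof(e_2)=1$. Let coordinates $c_0,c_1,c_2$ map to the prefixes of lengths $0,1,2$; these are nested, so Definition~\ref{def:coords} holds. The plan has one unit with domain $D=\{a,b\}=\keys$, bracket $(lo,hi)=(c_0,c_2)$ and frontier $f=c_2$. Its refresh is the torn read
\[
R(a)=\perp,\qquad R(b)=1 .
\]
Operationally, $a$ is read before $e_1$ commits and $b$ after $e_2$ commits.

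For item (1), evaluate $\sigma$ at the three in-bracket coordinates. At $c_0$ we have $\sigma(b)=\perp\neq R(b)$. At $c_1$ and $c_2$ we have $\sigma(a)=1\neq R(a)$. So no single $c$ with $c_0\pre c\pre c_2$ explains both keys. The check is only this short because I choose coordinates to be exactly the prefixes; any coordinate in the bracket maps to one of these three prefixes.

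For item (2), verify each obligation:
\begin{itemize}
\item O1 holds trivially, since the single domain is the whole scope.
\item O3 holds because $R$ is total on $D$.
\item O4 is vacuous, since no coordinate is externally asserted.
\item O5 holds because both paths land in the same $\keys$ and $\vals$.
\item S-LOG and S-IMG hold, since the log is a full-image commit-ordered sequence.
\item O2 is checked per key with separate witnesses: $c_0$ for $a$, because $\sigma_{\den{c_0}}(a)=\perp$, and $c_2$ for $b$, because $\sigma_{\den{c_2}}(b)=1$.
\end{itemize}

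For item (3), invoke Theorem~\ref{thm:cut}. As a sanity check, compute the fold directly: both keys have an event in $\window{c_0}{c_2}$, so Case~A gives $\sink_f=\{a\mapsto 1,\,b\mapsto 1\}=\sigma_{\den{c_2}}$.

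The main subtlety is designing the example so that item (1) genuinely fails at every in-bracket coordinate while O2 still holds. This requires the two keys to change at different positions inside the bracket, in opposite "directions" relative to the read. The conflict must also not be removable by an intermediate coordinate, which is why $c_1$ is checked explicitly. Everything else is routine.
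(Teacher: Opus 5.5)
Your proposal is correct and matches the paper's proof essentially verbatim: the same two-event history over an empty store, the same one-unit plan with bracket $(c_0,c_2)$ and frontier $c_2$, the per-key O2 witnesses $c_0$ for $a$ and $c_2$ for $b$, and the Case~A computation for the cut. The only difference is that the paper uses image $2$ for $e_2$ and $R(b)=2$ instead of $1$, which changes nothing.
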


\begin{proof}
Over an initially empty store, take two keys $a, b$ and the
two-event history $L = e_1\, e_2$ with $\keyof(e_1) = a$,
$\imgof(e_1) = 1$, $\keyof(e_2) = b$, $\imgof(e_2) = 2$, and
coordinates $c_0, c_1, c_2$ mapping to the prefixes of lengths $0, 1,
2$. The plan has one unit: domain $\{a, b\}$, bracket $(c_0, c_2)$,
frontier $f = c_2$, and refresh
\[
R(a) = \perp, \qquad R(b) = 2.
\]
The read of $a$ happened before $e_1$ committed and the read of $b$
after $e_2$ committed. Both observations are of committed state, but
they reflect different source coordinates.

\emph{Contract.} O1 holds with the single unit. O3 holds since $R$
is total. O5 holds by construction. For O2, key $a$ matches the 
state at $c_0$, where $\sigma_{\den{c_0}}(a) = \perp = R(a)$, and key $b$ 
matches the state at $c_2$, where $\sigma_{\den{c_2}}(b) = 2 = R(b)$. Both
witnesses lie in the bracket $(c_0, c_2)$.

\emph{No single shared read coordinate.} The bracket contains exactly the
coordinates $c_0, c_1, c_2$. At $c_0$ and $c_1$ the source has
$\sigma(b) = \perp \neq 2 = R(b)$. At $c_2$ it has
$\sigma(a) = 1 \neq \perp = R(a)$. No in-bracket coordinate is
consistent with both keys, so an obligation demanding one read point per
unit rejects this plan.

\emph{Cut.} The window $\window{c_0}{c_2}$ contains $e_1$ and
$e_2$, so both keys fall under Case~A of Theorem~\ref{thm:cut},
with $\sink_f(a) = \imgof(e_1) = 1$ and $\sink_f(b) = \imgof(e_2) = 2$,
which is exactly $\sigma_{\den{c_2}}$. The refresh results do not
decide the final fold.
\end{proof}

The proposition also shows why the classic algorithm's
single-coordinate chunk reads satisfy this contract directly. A
single read point implies O2 by
instantiating every key's witness uniformly, while O2 does not imply
a single read point. The weakening is strict, and it is the exact
degree of freedom needed when an implementation establishes per-key
bracket condition O2 without establishing a single shared read coordinate.
The distinction is between proof obligations. Ordinary chunk selects
may well run under statement snapshots. The read-only and parallel
instances of Section~\ref{sec:instances} are placed without assuming one,
since the isolation of their chunk selects is undocumented
(\S\ref{sec:fence}).

\subsection{Corollary 1: a shared witness can move the canonical
trajectory earlier}\label{sec:theorem-shared}

The frontier-parametric proof constrains more than one state. To make
that fact and the additional role of a shared witness precise, we
first extend the canonical replay to intermediate coordinates.

\begin{definition}[Canonical replay at a coordinate]\label{def:traj}
For any coordinate $g$ and key $k \in D_i$, the \emph{canonical
replay at} $g$ is
\[
\sink_g(k) \;=\;
\begin{cases}
\imgof(e^{\ast}) & \parbox[t]{0.5\columnwidth}{if $e^{\ast}$, the
  last event for $k$ in $\window{lo_i}{g}$, exists,}\\[2pt]
R_i(k) & \text{otherwise,}
\end{cases}
\]
where $\window{lo_i}{g}$ is read as $\den{g} \setminus \den{lo_i}$
for an arbitrary pair of coordinates. When $g \pre lo_i$ it is
empty and the refresh stands. We call the family $g \mapsto \sink_g$
the plan's \emph{canonical trajectory}.
\end{definition}

At $g = f$ this is the canonical replay of
\S\ref{sec:contract-replay}. For general $g$ it is a retrospective
family derived from the final plan. Every unit contributes its
refresh results, and events up to $g$ supersede them. It is not, by
definition, the sequence of states produced while an emitted stream
is consumed. Calling those physical sink prefixes a trajectory would
require prefix-equivalence at every $g$, or an atomic or ordered
handoff condition, beyond the final equivalence obligation of
\S\ref{sec:contract-replay}.

\begin{lemma}[Latest-high trajectory]\label{lem:latest-high}
Let a nonempty finite capture plan at frontier $f$ satisfy the
contract. Among its recorded high edges choose one $h$ whose corresponding
prefix is maximal. Then $h \pre f$, every unit satisfies
$hi_i \pre h$, and for every $g$ with $h \pre g \pre f$ and every
$k \in \keys$,
\[
\sink_g(k) \;=\; \sigma_{\den{g}}(k).
\]
Thus the canonical trajectory of every nonempty T2 plan matches the
source from its latest high edge onward.
\end{lemma}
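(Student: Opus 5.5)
The plan is to reduce the lemma to the frontier-parametric form of Theorem~\ref{thm:cut}. First I establish the three side facts about $h$, then rerun the theorem's per-key case split with $g$ in place of $f$.

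\emph{Choice of $h$.} The plan is nonempty and finite, so its recorded high edges form a nonempty finite set of coordinates. By Definition~\ref{def:coords}, their mapped prefixes are linearly nested, and a finite nonempty chain of prefixes has a maximum. So a $h$ with maximal prefix exists. Maximality in a total preorder gives $hi_i \pre h$ for every unit $i$. Since $h$ is itself some $hi_j$, Definition~\ref{def:plan} gives $h = hi_j \pre f$.

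\emph{Main step.} Fix $g$ with $h \pre g \pre f$ and a key $k$. By O1, $k$ lies in exactly one unit $i$. For this unit, $lo_i \pre hi_i \pre h \pre g$ by transitivity of $\pre$ (prefix inclusion). Hence $\den{g}$ splits as $\den{lo_i}$ followed in commit order by $\window{lo_i}{g}$, and Definition~\ref{def:traj} reads this as an ordinary window. I then repeat the two cases of the proof of Theorem~\ref{thm:cut} verbatim with $g$ as the high edge.

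In Case~A, some event for $k$ lies in $\window{lo_i}{g}$. The last such event $e^{\ast}$ is also the last event for $k$ in $\den{g}$, so both sides equal $\imgof(e^{\ast})$.

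In Case~B, there is no such event. Then O3 and O2 supply $c$ with $lo_i \pre c \pre hi_i \pre g$ and $R_i(k)=\sigma_{\den c}(k)$. Lemma~\ref{lem:inv}, applied to the window $\window{lo_i}{g}$, gives $\sigma_{\den c}(k)=\sigma_{\den{lo_i}}(k)=\sigma_{\den g}(k)$.

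The only point that needs care is where the proof of Theorem~\ref{thm:cut} used $hi_i \pre f$. Here that hypothesis is replaced by $hi_i \pre g$, which is exactly what the choice of $h$ provides, so nothing else changes. The lower bound $h \pre g$ is essential: for $g$ below some $hi_i$, the O2 witness $c$ might lie beyond $g$, and Lemma~\ref{lem:inv} would no longer apply. The upper bound $g \pre f$ is used only to stay within the range where the plan and S-LOG make claims; the algebra itself never needs it.

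The closing sentence about T2 plans is just this statement restated, with ``T2'' read as contract-satisfying.
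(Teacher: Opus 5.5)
Your proof is correct and follows the paper's own argument. You use finiteness and linear nesting to pick a high edge $h$ with maximal prefix, derive $hi_i \pre h \pre f$, and then use transitivity to get $hi_i \pre g$, so that the frontier-parametric form of Theorem~\ref{thm:cut} applies at $g$; the only difference is that you write out Case~A and Case~B with $g$ in place of $f$, where the paper simply cites that parametric remark.
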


\begin{proof}
The unit family is finite and coordinates map to linearly
nested prefixes, so the nonempty set of prefixes at its high edges has a
maximum. Choose a recorded high edge $h$ that maps to that prefix. Every
$hi_i \pre h$, and $h \pre f$ because every unit high is bounded by
the frontier. For any $g$ with $h \pre g \pre f$, transitivity gives
$hi_i \pre g$ for every unit. The frontier-parametric form of
Theorem~\ref{thm:cut} therefore applies at $g$.
\end{proof}

If the unit family is empty, O1 forces the capture scope to be empty.
The scoped equality is then vacuous, but there is no recorded high
edge to select. The rest of the paper concerns nonempty captures.

The proof needs the high-edge bound only for the unit that owns the
current key. This yields the mid-capture form illustrated in
Figure~\ref{fig:cuts}.

\begin{lemma}[Closed-unit cut]\label{lem:closed}
Let a source, a coordinate space, and a capture plan at frontier $f$
satisfy the contract, let $g$ be any coordinate, and let $k \in D_i$
be a key whose unit has closed by $g$, that is, $hi_i \pre g$. Then
$\sink_g(k) = \sigma_{\den{g}}(k)$. Consequently, at every coordinate
$g$, the canonical replay agrees with the row-event replay state on
every key owned by a unit whose high edge lies at or before $g$,
whether or not the other units have closed.
\end{lemma}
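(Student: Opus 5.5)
I will rerun the proof of Theorem~\ref{thm:cut} for the single key $k$ at the coordinate $g$ in place of $f$. That proof already used no property of the frontier beyond $lo_i \pre hi_i \pre f$ for the covering unit, so the plan is to check that the hypothesis $hi_i \pre g$ supplies everything needed there. Fix $k \in D_i$ with $hi_i \pre g$. O1 makes $i$ the unique owner of $k$, so $\sink_g(k)$ is given by Definition~\ref{def:traj} with this $i$. Since $lo_i \pre hi_i \pre g$ and $\pre$ is a total preorder, the step $lo_i \pre g$ follows by transitivity of prefix inclusion. Hence $\den{g}$ splits in commit order as $\den{lo_i}$ followed by the window $\window{lo_i}{g}$. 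The events for $k$ in $\den{g}$ split the same way.

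\emph{Case A: some event for $k$ lies in $\window{lo_i}{g}$.} Let $e^{\ast}$ be the last one. By the split, $e^{\ast}$ is also the last event for $k$ in $\den{g}$. So both $\sink_g(k)$ and $\sigma_{\den{g}}(k)$ equal $\imgof(e^{\ast})$. Only O3 is used here, for the existence of $R_i(k)$.

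\emph{Case B: no event for $k$ lies in $\window{lo_i}{g}$.} Here $\sink_g(k) = R_i(k)$. O2 gives a coordinate $c$ with $lo_i \pre c \pre hi_i$ and $R_i(k) = \sigma_{\den{c}}(k)$. Then $c \pre g$, so $c$ lies in the bracket $(lo_i, g)$, and $k$ has no event in that window. Lemma~\ref{lem:inv}, applied with high edge $g$ once to $c$ and once to $g$ itself, gives $\sigma_{\den{c}}(k) = \sigma_{\den{lo_i}}(k) = \sigma_{\den{g}}(k)$.

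The ``consequently'' clause is the per-key statement quantified over all keys whose owning unit satisfies $hi_j \pre g$. Nothing in the argument mentions any other unit, so whether other units have closed is irrelevant. The requirement $hi_j \pre f$ from the contract is not used. In particular, $g$ may lie before $f$, after $f$, or be incomparable in raw syntax, since all coordinates are comparable through their nested prefixes.

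The main obstacle, though it is minor, is making sure that neither Lemma~\ref{lem:inv} nor the prefix split silently relied on $g \pre f$ or on S-OBS. Lemma~\ref{lem:inv} needs only $lo \pre c \pre hi$ with $hi := g$, and the split needs only $lo_i \pre g$, which holds here. If $g$ lies beyond the frontier, the window $\window{lo_i}{g}$ may contain events past $\den{f}$. That is harmless, because the canonical trajectory is defined over the full committed log, not over what was observed.
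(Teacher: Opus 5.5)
Your proposal is correct and follows the paper's own proof: rerun Cases~A and~B of Theorem~\ref{thm:cut} for the covering unit with $g$ in place of $f$. The only fact needed is $lo_i \pre c \pre hi_i \pre g$, and no other unit is consulted. Your extra check that nothing needs $g \pre f$ is sound; for $g$ beyond the frontier it agrees with Corollary~\ref{cor:cont}.
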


\begin{proof}
The proof of Theorem~\ref{thm:cut} fixes $k$ and its covering unit $i$
and uses the frontier only through $hi_i \pre f$. With $g$ in place
of $f$ and Definition~\ref{def:traj} supplying $\sink_g$, Case~A and
Case~B apply verbatim. The last event for $k$ in $\window{lo_i}{g}$
decides both sides, or, if there is none, O2 and Lemma~\ref{lem:inv}
carry the valid state from the bracket to $g$ because
$c \pre hi_i \pre g$. No fact about any other unit is used.
\end{proof}

Figure~\ref{fig:cuts} illustrates the lemma. Once a unit closes, its
keys are exact at the close and remain exact as later events are
replayed, while keys in open units carry no claim. Under their respective
observation and closure conditions, the same per-key argument applies to
the corresponding prefixes of the window-discard and range-merge
emissions (Lemmas~\ref{lem:discard-shape} and~\ref{lem:rm-shape}).

\begin{corollary}[Shared witness]\label{cor:shared}
Let a source, a coordinate space, and a capture plan at frontier
$f$ satisfy the contract, and suppose some coordinate $c^{\ast}$
is a \emph{shared state witness} for the whole plan:
\begin{itemize}
\item $lo_i \pre c^{\ast} \pre hi_i$ for every unit $i$, and
\item $R_i(k) = \sigma_{\den{c^{\ast}}}(k)$ for every unit $i$ and
every $k \in D_i$, meaning that each refresh matches the source state \emph{at} $c^{\ast}$, not
merely somewhere in its own bracket.
\end{itemize}
Then for every $g$ with $c^{\ast} \pre g \pre f$ and every
$k \in \keys$,
\[
\sink_g(k) \;=\; \sigma_{\den{g}}(k):
\]
the canonical replay trajectory coincides, key by key, with the
source's row-event replay trajectory from $c^{\ast}$ up to the
observation frontier.
\end{corollary}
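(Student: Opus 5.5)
The plan is to reuse the per-key, frontier-parametric argument of Theorem~\ref{thm:cut}, now with the shared witness $c^{\ast}$ doing the work that the high edge did in Lemma~\ref{lem:closed}. Fix $g$ with $c^{\ast} \pre g \pre f$ and a key $k \in \keys$. By O1 there is exactly one unit $i$ with $k \in D_i$, and both sides of the claimed equality depend only on the events for $k$. Because $lo_i \pre c^{\ast} \pre g$, the prefix $\den{g}$ splits in commit order into $\den{lo_i}$ followed by the window $\window{lo_i}{g}$. This window is exactly the slice consulted by Definition~\ref{def:traj}. Note that we do \emph{not} assume $hi_i \pre g$, so Lemma~\ref{lem:closed} is not directly available; the shared witness must replace it.

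\emph{Case A: some event for $k$ lies in $\window{lo_i}{g}$.} Let $e^{\ast}$ be the last such event. The argument is the same as Case~A of Theorem~\ref{thm:cut}. By definition $\sink_g(k) = \imgof(e^{\ast})$. By the split of $\den{g}$, $e^{\ast}$ is also the last event for $k$ in $\den{g}$, so $\sigma_{\den{g}}(k) = \imgof(e^{\ast})$.

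\emph{Case B: no event for $k$ lies in $\window{lo_i}{g}$.} Here $\sink_g(k) = R_i(k)$, which exists by O3. The second witness hypothesis gives $R_i(k) = \sigma_{\den{c^{\ast}}}(k)$. Since $lo_i \pre c^{\ast} \pre g$ and the window $\window{lo_i}{g}$ has no event for $k$, Lemma~\ref{lem:inv}, applied with $g$ as the high edge, yields
\[
\sigma_{\den{c^{\ast}}}(k) = \sigma_{\den{lo_i}}(k) = \sigma_{\den{g}}(k).
\]
Chaining these equalities closes the case.

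The step that needs the most care is Case~B when $g \pre hi_i$. In that situation the bracket-local witness from O2 may lie beyond $g$, so O2 alone cannot carry the value back to $g$. This is exactly why the hypothesis that every refresh agrees with the state \emph{at} $c^{\ast}$ is needed. The plan therefore deliberately bypasses O2 and uses only the first witness clause, in the form $lo_i \pre c^{\ast}$, so that Lemma~\ref{lem:inv} applies on the window $\window{lo_i}{g}$. The remaining condition $c^{\ast} \pre hi_i$ is used only to show that the hypotheses are consistent with O2. It is also worth checking that Definition~\ref{def:traj} uses the same window $\window{lo_i}{g}$, which holds since $lo_i \pre g$. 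No other unit enters the argument, so the per-key locality is preserved.
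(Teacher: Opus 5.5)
Your proof is correct and follows the paper's argument. You fix $k$ and its covering unit, settle Case~A by the last event in $\window{lo_i}{g}$, and in Case~B chain $R_i(k)=\sigma_{\den{c^{\ast}}}(k)=\sigma_{\den{g}}(k)$ through window invariance, using only $lo_i \pre c^{\ast} \pre g$ and never O2 or $c^{\ast} \pre hi_i$. The one cosmetic difference is that the paper applies Lemma~\ref{lem:inv} directly to the sub-window $\window{c^{\ast}}{g} \subseteq \window{lo_i}{g}$, whereas you pass through $\sigma_{\den{lo_i}}(k)$.
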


\begin{proof}
Fix $g$ with $c^{\ast} \pre g \pre f$, fix $k$, and let $i$ be the
covering unit. Note $lo_i \pre c^{\ast} \pre g$. If some event for
$k$ lies in $\window{lo_i}{g}$, the last such event decides both
sides exactly as in Case~A of Theorem~\ref{thm:cut}, with $g$ in
place of $f$. Otherwise the window $\window{lo_i}{g}$ contains no event for $k$. Since $\window{c^{\ast}}{g} \subseteq \window{lo_i}{g}$, the window $\window{c^{\ast}}{g}$ contains no event for $k$ either, and
Lemma~\ref{lem:inv} gives
$\sigma_{\den{g}}(k) = \sigma_{\den{c^{\ast}}}(k)$. Hence
\[
\sink_g(k) \;=\; R_i(k)
          \;=\; \sigma_{\den{c^{\ast}}}(k)
          \;=\; \sigma_{\den{g}}(k),
\]
using the shared state witness for the middle equality.

\end{proof}

The additional information is the \emph{exhibited common state
witness}. Lemma~\ref{lem:latest-high} already gives every nonempty T2
plan a conservative trajectory from its latest high $h$.
Corollary~\ref{cor:shared} identifies one coordinate $c^{\ast}$ at
which every refresh agrees with the same source state. Because
$c^{\ast} \pre hi_i \pre h$ for every unit, it can move the guaranteed
start earlier than $h$. The start is not necessarily earlier, since
$c^{\ast}$ may equal $h$. We call it the plan's \emph{exhibited
shared-state anchor}, and T3 records this additional evidence
(\S\ref{sec:tiers}).

The formal plan does not place it before the physical reads or claim
that physical sink prefixes follow this history. Transactional-history
language additionally requires transaction closure throughout the
claimed range. Beyond $f$ the canonical trajectory extends as
far as observation is extended (Corollary~\ref{cor:cont}).

Operationally, a shared state witness arises in three ways. The first is a
shared point bracket, where $lo_i = hi_i = c^{\ast}$ for every unit
and the windows are empty, so there is nothing to merge and the splice
determines everything. The second is a shared engine snapshot read
inside wider brackets, as when every unit's select runs against one
exported or MVCC read view. The third is a quiesced window, where
writes are simply held off between the reads. The splice instances of
\S\ref{sec:instances} are the first case as it occurs in production.

\subsection{Corollary 2: continuation under extended
observation}\label{sec:theorem-cont}

\begin{corollary}[Continuation]\label{cor:cont}
Let a source, a coordinate space, and a capture plan at frontier
$f$ satisfy the contract, and let $f \pre f'$. Then
$\sink_{f'}(k) = \sigma_{\den{f'}}(k)$ for every $k \in \keys$.
\end{corollary}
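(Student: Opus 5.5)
The plan is to derive the continuation from the frontier-parametric form of Theorem~\ref{thm:cut}, in exactly the way Lemma~\ref{lem:latest-high} and Lemma~\ref{lem:closed} do. Fix $k \in \keys$ and let $i$ be its unique covering unit, which exists by O1. The theorem's proof used the frontier only through the bound $hi_i \pre f$ for that unit. So the first step is to establish the analogous bound for $f'$. From the plan's definition we have $hi_i \pre f$, and by hypothesis $f \pre f'$. Since $\pre$ is prefix inclusion on $\den{\cdot}$, transitivity gives $hi_i \pre f'$.

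Next, I apply Definition~\ref{def:traj} at $g = f'$ and split into the same two cases as before.

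\emph{Case A: some event for $k$ lies in $\window{lo_i}{f'}$.} Since $lo_i \pre f'$, the prefix $\den{f'}$ decomposes as $\den{lo_i}$ followed by $\window{lo_i}{f'}$. Hence the last such event $e^{\ast}$ is also the last event for $k$ in $\den{f'}$, and both sides equal $\imgof(e^{\ast})$.

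\emph{Case B: no event for $k$ lies in $\window{lo_i}{f'}$.} O3 gives the read result $R_i(k)$, and O2 gives a witness $c$ with $lo_i \pre c \pre hi_i \pre f'$. Lemma~\ref{lem:inv}, applied to the window $\window{lo_i}{f'}$, then yields
\[
R_i(k) = \sigma_{\den{c}}(k) = \sigma_{\den{f'}}(k).
\]
Alternatively, I could simply invoke Lemma~\ref{lem:closed} with $g = f'$, since every unit has closed by $f'$.

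The main point requiring care is not the algebra but what the statement means beyond $f$. The contract's plan, and in particular the S-OBS clause, is stated at frontier $f$. Coordinates past $f$ still denote prefixes of the same commit-ordered log under Definition~\ref{def:coords}, so $\sigma_{\den{f'}}$ and $\sink_{f'}$ are well defined. The canonical replay is a fold over the committed history, and by the discussion after Definition~\ref{def:contract} it requires no observation clause. The proof therefore needs only S-LOG, S-IMG, and O1 through O5, all of which are independent of $f'$. Any claim that an \emph{emitted} stream tracks this trajectory past $f$ would additionally need S-OBS extended to $f'$ together with the discipline's equivalence obligation. I would state that caveat explicitly rather than fold it into the proof.
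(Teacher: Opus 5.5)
Your proposal is correct and matches the paper's proof. Transitivity gives $hi_i \pre f \pre f'$ for every unit, so the frontier-parametric argument of Theorem~\ref{thm:cut} applies verbatim at $f'$, and your explicit rerun of Cases A and B, or the appeal to Lemma~\ref{lem:closed}, simply spells that out; your S-OBS caveat also matches the paper's remark after the corollary.
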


\begin{proof}
Every unit satisfies $hi_i \pre f \pre f'$, so $f'$ bounds every
bracket's high edge, and the frontier-parametric form of
Theorem~\ref{thm:cut}'s proof applies verbatim with $f'$ in place
of $f$.
\end{proof}

The extension costs observation. The
corollary's fold is a statement about the committed history. A
capture \emph{realizes} it as far as it keeps observing,
which is S-OBS extended to $f'$. The fold beyond the frontier
consumes events a stopped capture no longer sees, and no algebra
recovers them. Steady-state operation, where a bootstrap
completes and the pipeline then follows the log indefinitely, is
Corollary~\ref{cor:cont} applied at ever-larger $f'$. The
guarantee survives as long as consumption and retention
do.

\subsection{Corollary 3: mixing and composition guarantees}
\label{sec:theorem-mix}

Nothing in Definition~\ref{def:plan} records how a refresh was
obtained, so a single plan may already mix a chunked scan here, a
dump there, and a point select elsewhere
(Figure~\ref{fig:mixing}).
Theorem~\ref{thm:cut} quantifies over all of them at once.
Heterogeneous bootstrap is therefore not a special case requiring
its own theory. It is the theorem's normal case, and the only
composition fact left to check is that conforming plans over
disjoint scopes can be pasted into one conforming plan.

\begin{figure}[t]
\centering
\includegraphics{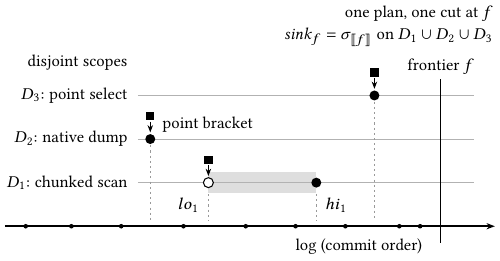}
\caption{A heterogeneous plan at one frontier. Three units over
disjoint scopes mix provenances, including a chunked scan with an ordinary
bracket, alongside a native dump and a point select each behind a point
bracket of its own. No coordinate lies inside all three brackets,
so no shared state witness is available. The composite nevertheless
satisfies the contract. One cut at $f$ covers the union of the
scopes, and its canonical trajectory is guaranteed from the
latest of the three high edges (Corollary~\ref{cor:mix},
Theorem~\ref{thm:cut}, and Lemma~\ref{lem:latest-high}).}
\label{fig:mixing}
\Description{Three horizontal unit rows above one commit-order log
axis. The chunked-scan row carries an ordinary bracket with a
shaded window. The native-dump and point-select rows each carry a
point bracket at staggered positions outside the scan's bracket. A
single vertical frontier line crosses all rows, and one claim at
the frontier states the cut over the union of the three scopes.}
\end{figure}

\begin{corollary}[Mixing]\label{cor:mix}
Let two capture plans over the same source and coordinate space
satisfy the contract at the same frontier $f$, on disjoint scopes
$\keys_1$ and $\keys_2$. Then their union, with units combined and
scope $\keys_1 \cup \keys_2$, satisfies the contract at $f$, and
consequently replays to $\sigma_{\den{f}}$ on all of
$\keys_1 \cup \keys_2$.
\end{corollary}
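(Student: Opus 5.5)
The plan is to verify Definition~\ref{def:contract} for the combined plan directly, clause by clause, and then obtain the replay claim from Theorem~\ref{thm:cut}. Write the two plans as $\{(D_i, lo_i, hi_i, R_i)\}_{i \in I_1}$ over $\keys_1$ and $\{(D_j, lo_j, hi_j, R_j)\}_{j \in I_2}$ over $\keys_2$, both at frontier $f$. The union plan takes the disjoint union of the index sets, with units, brackets and refreshes unchanged, and scope $\keys_1 \cup \keys_2$. It is finite because both families are finite. It is a well-formed capture plan (Definition~\ref{def:plan}) because each bracket already satisfies $lo \pre hi \pre f$ for the common $f$. This is the one place where sharing the frontier is used.

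The source promises S-LOG and S-IMG concern only the source, its log and the coordinate space. These are shared by hypothesis, so they carry over unchanged. If the scope is read as restricting which events count as scoped, both promises hold on $\keys_1 \cup \keys_2$ because they hold on each part, and every event for a key in the union is an event for a key in exactly one part. The plan obligations are checked next.

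\begin{itemize}
\item \textbf{O1.} The domains $D_i$ partition $\keys_1$ and the $D_j$ partition $\keys_2$. Since $\keys_1 \cap \keys_2 = \emptyset$, every $D_i$ is disjoint from every $D_j$. Hence the combined family is pairwise disjoint and its union is $\keys_1 \cup \keys_2$.
\item \textbf{O2, O3, O4, O5.} Each of these is a statement about a single unit: its bracket, its refresh, its declared assumptions and its encoding. None refers to the scope or to other units, so each holds for every unit of the union because it held in that unit's original plan. The explicit assumptions of the union under O4 are the union of the two assumption lists.
\end{itemize}

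With the contract established, Theorem~\ref{thm:cut} applied to the union plan gives $\sink_f(k) = \sigma_{\den{f}}(k)$ for all $k \in \keys_1 \cup \keys_2$.

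The main obstacle is not the argument itself, which is bookkeeping, but making sure nothing in the contract is secretly global. Two points need checking. First, O1 requires disjointness of the scopes; without it, units from different plans could overlap on a key. Second, the same source and coordinate space are needed so that the $\sigma$ and $\den{\cdot}$ used in O2 mean the same thing in both plans. I would note that S-OBS is not part of the contract, so nothing here needs it. An emitted-stream claim for the union would additionally need each discipline's observation bound to hold, for example taking the earlier of the two consumption starts. The remark after Theorem~\ref{thm:cut}, that nothing couples distinct units, is exactly what makes the per-unit obligations transfer without modification.
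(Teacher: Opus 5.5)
Your proposal is correct and follows essentially the same route as the paper's proof. You get O1 from each domain lying inside its own scope together with disjointness of the scopes, carry the per-unit facts (bracket bounds against the shared $f$, O2 witnesses, O3 totality, O5 typing) over unchanged, combine the O4 assumption lists by union, inherit S-LOG and S-IMG from the shared source and log, and then apply Theorem~\ref{thm:cut} to the union plan.
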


\begin{proof}
Each obligation decomposes over the union. For coverage, the combined
domains cover $\keys_1 \cup \keys_2$ because each family covers its
own scope. For disjointness, two units from the same plan have disjoint
domains by that plan's O1. Two units from different plans have
disjoint domains because each domain is contained in its plan's
scope and the scopes are disjoint. Brackets and their frontier
bounds are per-unit facts and carry over unchanged, as do O2's
per-key witnesses, O3's totality, and O5's typing. O4's declared
assumptions combine by-union, and S-LOG and S-IMG are
facts of the one shared source and log, common to both plans. The
union therefore satisfies
Definition~\ref{def:contract}, and Theorem~\ref{thm:cut} applied to
it gives the cut on the combined scope.
\end{proof}

The same argument gives one more decomposition. If a single
coordinate $c^{\ast}$ is a shared state witness for \emph{every}
member of both plans, then it is a shared state witness for the 
union, since the two hypotheses of Corollary~\ref{cor:shared} are per-unit 
statements. The composite then carries the full canonical trajectory property 
from $c^{\ast}$. Sharing one engine snapshot across all members,
its coordinate recorded, lifts the composite to T3.

Without a witness shared by every member, composition alone supplies
the T2 cut. When the combined unit family is nonempty, it also supplies
the conservative trajectory from the union's latest high. It does not automatically supply one shared-state anchor
or an earlier start. A shared witness may still exist for a particular
plan, but it needs separate evidence. The sharpness obstruction has
the exact shape of per-table initial
loads taken at distinct coordinates. Each table's copy matches that
table's row-event replay state at its own coordinate, but the union
may fail to match the combined scope at any one coordinate. We call this the
\emph{tablesync} shape. Section~\ref{sec:tiers} categorizes the real systems
that have it.

\begin{proposition}[Composition can lose every earlier onset]\label{prop:sync}
There is a contract instance composed of two units, each of which,
as a one-unit plan on its own scope, satisfies 
the hypotheses of Corollary~\ref{cor:shared} at its own witness, while the
composite has no shared in-bracket coordinate and admits \emph{no}
early onset. For every coordinate
$c^{\ast}$ strictly below the frontier, the trajectory property of
Corollary~\ref{cor:shared} fails at some $g$ with
$c^{\ast} \pre g \pre f$.
\end{proposition}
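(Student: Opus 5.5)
The plan is to reuse the two-event history from the proof of Proposition~\ref{prop:torn}, but split its two keys across two point-bracketed units, in the tablesync shape. Over an initially empty store take keys $a, b$, the history $L = e_1\, e_2$ with $\keyof(e_1) = a$, $\imgof(e_1) = 1$, $\keyof(e_2) = b$, $\imgof(e_2) = 2$, and coordinates $c_0, c_1, c_2$ mapping to the prefixes of lengths $0, 1, 2$. The frontier is $f = c_2$. The plan has two units:
\[
U_a = (\{a\},\, c_0,\, c_0,\, R_a), \quad R_a(a) = \perp, \qquad
U_b = (\{b\},\, c_2,\, c_2,\, R_b), \quad R_b(b) = 2.
\]
Operationally, table $a$ is copied before any write and table $b$ after both writes.

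The first step is the routine contract check. O1 holds because $\{a\}$ and $\{b\}$ partition the scope. Both brackets satisfy $lo_i \pre hi_i \pre f$. O2 holds with witness $c_0$ for $a$, since $\sigma_{\den{c_0}}(a) = \perp$, and with witness $c_2$ for $b$, since $\sigma_{\den{c_2}}(b) = 2$. O3 and O5 hold by construction, and O4 is empty. Taken as a one-unit plan on its own scope, each unit satisfies the hypotheses of Corollary~\ref{cor:shared}, at witness $c_0$ for $U_a$ and $c_2$ for $U_b$. This is immediate because each bracket is a point containing the witness and each refresh equals the source state there. Equivalently, the composite is Corollary~\ref{cor:mix} applied to two single-unit plans, so Theorem~\ref{thm:cut} still gives the cut at $f$.

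The second step shows there is no shared in-bracket coordinate. A common $c$ would need $c_0 \pre c \pre c_0$ and $c_2 \pre c \pre c_2$, so $\den{c}$ would be a prefix of length both $0$ and $2$, which is impossible.

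The third step, which is the substantive one, is the failure of every early onset. Let $c^{\ast}$ be any coordinate with $\den{c^{\ast}} \subsetneq \den{f}$. Then $\den{c^{\ast}}$ is the prefix of length $0$ or $1$, and in either case $e_2 \notin \den{c^{\ast}}$. I would take $g = c^{\ast}$ itself. Key $b$ is owned by $U_b$ with $lo_b = c_2$, so the window $\den{g} \setminus \den{c_2}$ is empty. By Definition~\ref{def:traj} this gives $\sink_g(b) = R_b(b) = 2$, whereas $\sigma_{\den{g}}(b) = \perp$. Hence the trajectory property of Corollary~\ref{cor:shared} fails at $g$. As a cross-check, key $a$ is correct at every $g$ by Lemma~\ref{lem:closed}, since $hi_a = c_0$. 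The defect therefore comes entirely from the late unit $U_b$, whose refresh is placed at $lo_b = c_2$ and dominates any earlier $g$.

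The main obstacle is bookkeeping rather than mathematics. The quantifier ``every coordinate strictly below the frontier'' ranges over all of $C$, not just $c_0, c_1$. I would therefore argue on mapped prefixes: any coordinate strictly below $f$ maps to a length-$0$ or length-$1$ prefix, and the refresh of $b$ stands at every $g$ with $g \pre lo_b$. I would also note that the onset $c^{\ast} = f$ trivially holds, by Corollary~\ref{cor:cont} or Lemma~\ref{lem:latest-high} with $h = c_2$, so the proposition is sharp: the latest high edge is the earliest possible onset here.
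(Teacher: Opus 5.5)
Your proposal is correct and follows the paper's proof: the same two point-bracket units at $c_0$ and $c_2$ over the history $e_1\,e_2$, the same disjoint-point argument ruling out a shared witness, and the same failure on key $b$, whose refresh still stands at any $g$ with $g \pre lo_b = c_2$. The only difference is cosmetic: the paper fixes $g = c_1$ and uses $c^{\ast} \pre c_1 \pre f$ for every onset strictly below $f$, whereas you take $g = c^{\ast}$ itself, which works equally well and does not require an intermediate coordinate.
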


\begin{proof}
Over an initially empty store, take the history $L = e_1\, e_2$
with $\keyof(e_1) = a$, $\imgof(e_1) = 1$, $\keyof(e_2) = b$,
$\imgof(e_2) = 2$, coordinates $c_0, c_1, c_2$ as in
Proposition~\ref{prop:torn}, and frontier $f = c_2$. Compose two
point-bracket units:
\[
\begin{aligned}
A &: \text{domain } \{a\},\; lo_A = hi_A = c_0,\; R_A(a) = \perp,\\
B &: \text{domain } \{b\},\; lo_B = hi_B = c_2,\; R_B(b) = 2.
\end{aligned}
\]
Unit $A$ reads before anything commits and records $a$
absent. Unit $B$ reads at the end of history and records
$b = 2$. Each read is valid at its unit's point coordinate, so
each one-unit plan satisfies the hypotheses of Corollary~\ref{cor:shared} 
with witness $c_0$ respectively $c_2$, and the composite satisfies
the contract by Corollary~\ref{cor:mix}. Theorem~\ref{thm:cut}
gives its cut at $c_2$.

The point brackets $[c_0,c_0]$ and $[c_2,c_2]$ have no coordinate in
common, so the composite has no shared in-bracket witness. Its latest
high is $h=c_2=f$, and Lemma~\ref{lem:latest-high} gives the
conservative, degenerate trajectory interval $[f,f]$.

Now evaluate the composite's trajectory at the intermediate
coordinate $g = c_1$. Unit $B$'s window $\window{c_2}{c_1}$ is
empty, so $\sink_{c_1}(b) = R_B(b) = 2$, although
$\sigma_{\den{c_1}}(b) = \perp$. The canonical replay asserts a
value for $b$ absent from the source's row-event replay at $c_1$. The
trajectory property therefore fails at $g = c_1$, and since any
onset $c^{\ast}$ strictly below the frontier satisfies
$c^{\ast} \pre c_1 \pre c_2 = f$, no such onset exists.
\end{proof}

The resulting composition rule is short. Every conforming composite
reaches T2 by Corollary~\ref{cor:mix} and Theorem~\ref{thm:cut}. For a
nonempty combined unit family, Lemma~\ref{lem:latest-high} also supplies
its conservative trajectory.
Proposition~\ref{prop:sync} shows that composition alone cannot promise
a shared witness or any earlier onset. One exhibited witness shared by
all members establishes T3 (Corollary~\ref{cor:shared} via the
decomposition above). Bracket
\emph{sharing}, by contrast, is unconstrained by the contract.
Definition~\ref{def:plan} already allows distinct units to carry
identical brackets, no obligation counts brackets, and one bracket
amortized over many units with one shared state witness is exactly 
the hypothesis of Corollary~\ref{cor:shared}. Amortizing a single bracket 
over many reads is therefore level-preserving, which is what makes
batching brackets a pure cost optimization
(\S\ref{sec:instances}).

\section{Guarantee Levels and Certification}\label{sec:tiers}

Levels T0 through T3 describe what has been established about a
configuration's replay and witness structure. T4 records a different
dimension, whether a T2 or T3 claim carries an externally checkable
certificate. It is therefore a certification overlay rather than a
stronger replay semantics. Each production shape in
\S\ref{sec:instances} receives the applicable semantic level and, when
available, the certification overlay. Mechanisms not placed here remain
explicitly open.

These levels separate what can be \emph{proved} from how a configuration
performs. Most knobs operators turn do not change either dimension.
Figure~\ref{fig:ladder} draws the semantic progression, the independent
certification path, and the cost knobs that belong to neither.

Operationally, T2 means that the plan's canonical replay is correct at
its frontier and, after the latest recorded high edge, along its
canonical trajectory. An implementation's emitted stream
achieves that result only after its equivalence proof is established.
T3 additionally exhibits one coordinate at which all copied units
agree with the same source state, which may move the trajectory start
earlier. T0 and T1 establish no source-side cut.

\subsection{Four semantic levels and one certification overlay}\label{sec:tiers-rungs}

\begin{itemize}
\item \textbf{T0: outside the contract.} The configuration is not an
instance. Either no faithful prefix reading of its
coordinates exists (the admissibility clause of
\S\ref{sec:setting-coords}, where the system's own membership decisions
agree with no mapping to commit-order prefixes), or S-LOG, S-IMG, or a plan
obligation provably fails with no declared repair. Examples include deletes invisible
to the consumed stream or an enumeration provably incomplete. Note
what T0 is not. A lawful plan with a wide bracket, or even one
run-global bracket, is still an instance. Width never moves a plan to T0, and a global bracket is still a bracket.
\item \textbf{T1: cut not established.} The configuration's
guarantee argument is incomplete as shipped. Bracket-local
validity, domain completeness, or the merge discipline's equivalence
obligation (\S\ref{sec:merge}), its S-OBS floor included, is not
established. The framework
then claims nothing beyond the emission's content, establishing neither a cut nor
per-key monotonicity. Such deployments may nevertheless converge by
their own unproved mechanism or through machinery external to this
contract, such as version-stamped apply and checksum repair. That
machinery is sink-side and stays outside the modeled scope
(\S\ref{sec:fence}). Dump-then-verify pipelines can live here, as do
backfill-skipping modes whose own documentation promises
at-least-once delivery only (\S\ref{sec:instances}). The level judges
the argument, not the software. A buffering service whose equivalence
obligation is undischarged sits at T1, and it moves to T2 as soon as
the obligation is satisfied.
\item \textbf{T2: per-key replay equivalence at the frontier.}
At Theorem~\ref{thm:cut}'s level, canonical replay equals the source's
row-event replay state at the frontier, key by key, on the whole
scope. Every contract
instance reaches T2 as a plan, whatever its provenance,
parallelism, or
bracket geometry. The cut theorem requires nothing else. A
shipped configuration stands here once its plan satisfies the
contract and its discipline's equivalence obligation is
established (the second choice of \S\ref{sec:tiers-choices}). With
the equivalence undischarged, the same deployment is the T1 case
above. Reading this row-event cut as a transactional snapshot
additionally requires a transaction-closed frontier. For every
nonempty plan, Lemma~\ref{lem:latest-high} also guarantees the
canonical trajectory from the latest recorded high edge through the
frontier.
\item \textbf{T3: exhibited shared-state anchor.}
This is Corollary~\ref{cor:shared}'s level, claimed at a named coordinate.
Some $c^{\ast}$ lying inside every unit's bracket is a shared state
witness, \emph{exhibited by the capture} through a recorded binding
between the copied state and that coordinate. From $c^{\ast}$ onward,
the plan's canonical trajectory
coincides with the source's row-event replay. Because
$c^{\ast} \pre hi_i$ for every unit, this can begin earlier than T2's
conservative latest-high start. Equality is possible, so the semantic
interval is not strictly longer in every instance. The additional
guarantee is the shared-state evidence itself.
This level does not assert that physical reads were simultaneous or
that transient prefixes of an emitted stream follow that trajectory.
The latter requires prefix-equivalence or an atomic, ordered handoff.
A transactional-history reading additionally requires transaction
closure at the coordinates considered.
\item \textbf{T4: certification overlay.} T2 or T3 together with an
externally checkable certificate whose acceptance supports
the cut, provided the checker assumptions hold and the required source 
history and chunk reads are recorded correctly.
Certification neither creates a shared witness nor moves the trajectory start. 
The certified virtual-cut object
of the classic algorithm's published
formalization~\cite{andreakis2026virtualcuts} anchors this overlay and
is the formalized example used here. Section~\ref{sec:instances} anchors the T4
overlay there.
\end{itemize}

\begin{figure}[H]
\centering
\includegraphics{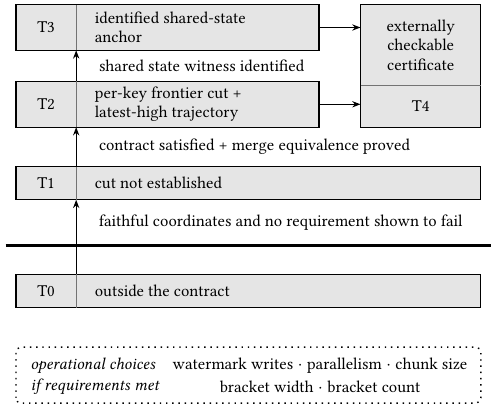}
\caption{Guarantee structure. Three choices determine the semantic
level. A
faithful coordinate reading, with no disproved source promise or plan
obligation, crosses the T0 exclusion boundary. Fulfilling all contract 
obligations together with a proven merge equivalence guarantees the cut for the emitted stream. An
identified shared state witness adds one shared-state anchor and
may move the conservative start earlier. Certification applies
independently to either T2 or T3. Holding those conditions fixed, the
operational choices change neither semantic level nor certification.}
\label{fig:ladder}
\Description{Four stacked semantic bars labeled T0 through T3, with
T2 identifying the per-key frontier cut and post-latest-high trajectory and
T3 adding an identified shared-state anchor. A separate certification
box receives arrows from both T2 and T3. A dotted, unconnected box
lists the operational choices.}
\end{figure}

\subsection{What changes a level, and what does
not}\label{sec:tiers-choices}

Three choices determine the semantic level:

\begin{enumerate}
\item a faithful coordinate reading, with neither S-LOG, S-IMG,
nor any plan obligation disproved without a declared repair (the
T0 exclusion boundary).
\item all contract obligations, including bracket-local validity,
domain completeness, and image sufficiency, plus a proven
merge-discipline equivalence (T1 versus T2).
\item an exhibited shared state witness (T2 versus T3).
\end{enumerate}

Certification is a separate yes-or-no dimension. An externally
checkable certificate adds the T4 overlay to either T2 or T3 without
changing the underlying replay or witness claim.

Holding the contract, S-OBS, and emission-equivalence conditions fixed,
write-freeness, parallelism, chunk size, bracket width, and bracket
count affect cost rather than theorem strength. Provenance and
scheduling appear nowhere in the plan signature, and no theorem 
condition depends on the width of the bracket pair. A wider 
bracket can, however, enlarge reconciliation work and the log span
that S-OBS must keep readable. Cost choices can therefore determine
whether a deployment can actually satisfy the required conditions 
and establish a level, even though the theorem does not rank them 
directly.

\subsection{Reading the levels on a mixed bootstrap}
\label{sec:tiers-example}

\begin{example}[Composition across levels]\label{ex:tiers}
An operator bootstraps four tables. Tables $t_1, t_2, t_3$ are
loaded from per-table dumps, each a point bracket at its own
recorded coordinate. Table $t_4$ is chunk-scanned with ordinary
brackets. Each dump alone satisfies the hypotheses of Corollary~\ref{cor:shared} 
on its own scope at its own 
coordinate, so each is T3 on its own table. The composite of all
four is a contract instance by Corollary~\ref{cor:mix} and reaches
T2 on the union, with one virtual cut at the shared frontier and a
conservative trajectory from the latest recorded high. The member
witnesses do not automatically yield T3 on the union. The three dump
coordinates are distinct, and Proposition~\ref{prop:sync} exhibits
this shape with no shared witness and no onset before the frontier.
This is the tablesync shape, named
after the per-table initial-copy workers of logical
replication~\cite{postgres_logical_replication}. Every table matches
its own coordinate, but the union may fail to match the combined scope
before the frontier. One change lifts it. Running the three dumps against a single
shared engine snapshot whose coordinate is recorded makes that
exhibited coordinate a witness for
all three units, so the dumped scope regains T3 while $t_4$'s chunks
still mix in at T2 for the union. The choice that changed the level
was the witness structure and nothing else. Dump versus scan,
serial versus parallel, and bracket width do not alter the theorem
once their observation and equivalence conditions hold.
\end{example}

\subsection{T0 in the wild}\label{sec:tiers-t0}

The polling class shows the boundary is real. A connector that
periodically issues
$\texttt{SELECT} \ldots \texttt{WHERE updated\_at} > t$ and emits
the returned rows consumes no commit-ordered log, because its only
signal is each row's own last-modified stamp. Deletes are invisible,
since a vanished row simply stops appearing and no event ever
reports its absence. S-LOG has nothing to promise over. Nor do the
poll timestamps admit a faithful prefix reading. The stamps are
assigned by clients or by statement time rather than at commit, so
rows become visible out of stamp order and the connector's own
membership decision, ``stamp greater than my last poll,''
disagrees with every mapping to commit-order prefixes. Both failures are
independent, either alone is disqualifying, and neither is
repairable by widening any bracket. The class is T0, lying outside the family rather than being a weak member of it.

\section{Merge Semantics}\label{sec:merge}

The canonical rule of \S\ref{sec:contract-replay} defines the
meaning of an emission. An implementation may emit a different
stream, as both buffering disciplines do. To use the cut theorem,
the implementation must show that replaying its stream yields the
same $\sink_f$ as the canonical rule. This section proves that
equivalence for the disciplines used by the five instances. It
also states, without claiming proofs, the obligations for four
further disciplines. The withholding rule of
\S\ref{sec:merge-range} is one correctness condition that becomes
visible only at this layer. Throughout the section we fix a source,
a coordinate space, and a conforming capture plan at frontier $f$.
Each discipline adds its own consumption start and emission clauses.

\subsection{Streams and their replay}\label{sec:merge-streams}

\begin{definition}[Stream replay]\label{def:stream}
An \emph{emitted stream} is a finite sequence of entries, each
carrying a key and a value-or-absence result. Log events serve as
entries directly. Its \emph{replay} is the per-key fold from the
empty state. A key's last entry decides its value, and a key with
no entry replays to absent.
\end{definition}

\begin{corollary}[Actual emitted replay at the frontier]
\label{cor:emitted-cut}
Let a source, coordinate space, and capture plan at frontier $f$
satisfy the contract. Let $E$ be a finite emitted stream. If
\[
\mathrm{replay}(E)(k)=\sink_f(k)
\qquad\text{for every }k\in\keys,
\]
then
\[
\mathrm{replay}(E)(k)=\sigma_{\den{f}}(k)
\qquad\text{for every }k\in\keys.
\]
\end{corollary}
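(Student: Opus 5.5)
This corollary is a direct transitivity argument, and I plan to prove it pointwise, one key at a time. Fix an arbitrary $k \in \keys$. The hypothesis gives $\mathrm{replay}(E)(k) = \sink_f(k)$. Theorem~\ref{thm:cut} applies because the source, coordinate space, and plan satisfy the contract, and it gives $\sink_f(k) = \sigma_{\den{f}}(k)$. Chaining the two equalities yields
\[
\mathrm{replay}(E)(k) \;=\; \sink_f(k) \;=\; \sigma_{\den{f}}(k),
\]
and since $k$ was arbitrary, the claim holds on the whole scope.

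The only care needed is to confirm that the three objects range over the same codomain, so that the equalities are meaningful. By Definition~\ref{def:stream}, $\mathrm{replay}(E)(k)$ lies in $\vals \cup \{\perp\}$, with $\perp$ for keys that have no entry. The canonical replay $\sink_f(k)$ is either an image or a refresh value. O5 places refresh values in the same $\vals$ as log images, and S-IMG makes images full post-states. The row-event replay $\sigma_{\den{f}}(k)$ lives in the same space. A key absent from $E$ therefore replays to $\perp$, and the hypothesis forces $\sink_f(k) = \perp$ for that key, so nothing special happens at absent keys.

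I do not expect a real obstacle. The substantive content sits in Theorem~\ref{thm:cut} and in the per-discipline equivalence obligations of this section, which are exactly the hypothesis assumed here. Notably, the proof does not use S-OBS. Observation enters only when a particular discipline discharges the hypothesis $\mathrm{replay}(E) = \sink_f$; this corollary merely packages that hypothesis together with the cut theorem.
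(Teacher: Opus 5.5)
Your proof is correct and is exactly the paper's argument: for each scoped key, chain the assumed emission equivalence with Theorem~\ref{thm:cut}. The codomain check you add is harmless but not needed, since the hypothesis and the cut theorem already state equalities in $\vals \cup \{\perp\}$.
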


\begin{proof}
For each scoped key, substitute the assumed emission equivalence into
Theorem~\ref{thm:cut}.
\end{proof}

Corollary~\ref{cor:emitted-cut} is the reusable bridge from a modeled
or actual finite output to the source-side result. It states final
replay from an empty state. It does not identify physical output
prefixes with Definition~\ref{def:traj}, prove transport or delivery,
or explain application to a populated sink. Reading its conclusion as
a transactional snapshot additionally requires the frontier to be
transaction-closed.

Starting from the empty state is what lets absence by omission mean absence.
A physical scan emits only the finitely many present
rows it returns. Its complete range membership induces implicit
absence for every other identity in the unit, even when the logical
domain itself is not finite. A discipline therefore never emits
refresh entries for keys read as absent, whereas delete events are
emitted because they arrive from the log. Both routes represent the same
state, and the equivalence proofs below rely on that identification.
A repair applied to a populated sink cannot use omission this way. It
must clear the repaired scope before replay or materialize source
absence as explicit tombstones. Sink-side execution remains outside
the theorem.

One finiteness convention accompanies it. Each unit
has finitely many rows to emit at its close, as every real
chunk, a finite set of physical rows, does. Close blocks are
therefore finite, and every emission below is the finite sequence
Definition~\ref{def:stream} requires. Beyond its finite family of
units, the contract layer imposes no finiteness requirement on unit
domains or refresh maps. That additional condition is needed only at
the stream layer.

\subsection{Window-discard}\label{sec:merge-discard}

The classic discipline buffers a unit's refresh and applies the rule that the log wins. On any in-window event for a buffered
key, the buffer entry is discarded and the event flows through. When
the window closes, the still-buffered rows are emitted. The keys that survive to the close are exactly those with no event in the window.

\begin{definition}[Survivors and window-discard emission]
\label{def:discard}
For a unit $i$, the \emph{survivors} are the keys
$k \in D_i$ with $R_i(k) \neq \perp$ and no event in
$\window{lo_i}{hi_i}$. Fix a consumption start $s_0$ with
$s_0 \pre lo_i$ for every unit. The \emph{window-discard emission}
is one pass over the consumed events $\window{s_0}{f}$ in commit
order, emitting every event at its own place. Immediately before
the first consumed event beyond a unit's high edge, or at the end
of the pass if none exists, the unit \emph{closes}. One refresh
entry per survivor is emitted, carrying the read result.
Distinct units never share a survivor key, by O1, so the order of
same-position close blocks is immaterial.
\end{definition}

\begin{lemma}[Per-key shape of the discard emission]
\label{lem:discard-shape}
Let $k \in D_i$. The entries for $k$ in the window-discard
emission are: $k$'s consumed events in $\window{s_0}{hi_i}$, in
commit order. Then, exactly if $k$ is a survivor of unit $i$, one
refresh entry with result $R_i(k)$. Then $k$'s events in
$\window{hi_i}{f}$, in commit order.
\end{lemma}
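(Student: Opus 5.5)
The plan is to read the emission of Definition~\ref{def:discard} as a merge of two kinds of entries and to project it onto the single key $k$. The first kind are the consumed log events of $\window{s_0}{f}$, each emitted at its own place in commit order. The second kind are the close blocks, one per unit. Because each entry carries a key, the entries for $k$ are obtained by filtering the whole sequence to key $k$. Filtering preserves relative order, so we need two things: which entries with key $k$ occur, and where each sits relative to the others.

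\emph{Log entries.} The pass emits every consumed event exactly once, at its own position. So the log entries for $k$ are exactly $k$'s events in $\window{s_0}{f}$, in commit order. Since $lo_i \pre hi_i \pre f$ and $s_0 \pre lo_i$, the prefixes nest as $\den{s_0}\subseteq\den{hi_i}\subseteq\den{f}$. Hence this window splits, in commit order, into $\window{s_0}{hi_i}$ followed by $\window{hi_i}{f}$. The occurrence identity is positional, so the split is exact and nothing is duplicated or lost at the edge $hi_i$. This is the half-open convention of Definition~\ref{def:window}.

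\emph{Refresh entries.} Close blocks only emit survivors of the closing unit. By O1, $k$ belongs to exactly one domain $D_i$, so a close block of another unit $j \ne i$ contains no entry for $k$. Unit $i$'s own close block contains one entry for $k$, carrying $R_i(k)$, exactly when $k$ is a survivor; otherwise it contains none. Each unit closes exactly once, so at most one refresh entry for $k$ appears.

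\emph{Placement.} This is the step that needs care, and it is where I expect the main obstacle: pinning the close position relative to $k$'s events. By definition, unit $i$ closes immediately before the first consumed event outside $\den{hi_i}$, or at the end of the pass if there is none. Every event of $\window{s_0}{hi_i}$ lies in $\den{hi_i}$ and precedes that first event in commit order, so it is emitted before the close. Every event of $\window{hi_i}{f}$ lies outside $\den{hi_i}$, so it is at or after that first event and is emitted after the close. The delicate points are two. First, "beyond the high edge" must mean non-membership in $\den{hi_i}$, which is where linear nestedness of prefixes is used. Second, a prefix is downward closed in commit order, so no event outside $\den{hi_i}$ can precede an event inside it. Once these hold, concatenating the filtered pieces gives exactly the stated three-part shape. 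The argument also covers the degenerate cases: an empty $\window{hi_i}{f}$, which closes at the end of the pass, and point brackets.
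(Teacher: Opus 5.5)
Your proposal is correct and follows the paper's proof. In both, events pass through at their own places in commit order, O1 confines any refresh entry for $k$ to unit $i$'s close block, and that close sits after every consumed event in $\den{hi_i}$ and before every consumed event outside it. Your explicit split of $\window{s_0}{f}$ at $hi_i$ and your downward-closedness argument only spell out what the paper states in one line. The appeal to linear nestedness is not needed there, since it suffices that $\den{hi_i}$ is a prefix of the log.
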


\begin{proof}
Events for $k$ pass through at their own places, and the pass
visits them in commit order. Only unit $i$ can emit a refresh
entry for $k$, because a survivor entry for $k$ presupposes
$k \in D_j$ for the emitting unit $j$, and O1 gives $j = i$. That
entry, when it exists, is emitted at unit $i$'s close, which sits
after every consumed event at or before $hi_i$ and before every
consumed event beyond $hi_i$.
\end{proof}

\begin{theorem}[Window-discard equivalence]\label{thm:discard}
For every scoped key, the replay of the window-discard emission
equals the canonical replay:
$\mathrm{replay}(\text{emission})(k) = \sink_f(k)$ for all
$k \in \keys$.
\end{theorem}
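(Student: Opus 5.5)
The proof fixes a key $k \in \keys$, takes its unique owning unit $i$ from O1, and reads off the last entry for $k$ in the emission from Lemma~\ref{lem:discard-shape}. Replay keeps a key's last entry, so only that entry matters. The lemma lists $k$'s entries in three segments: consumed events in $\window{s_0}{hi_i}$, then possibly one survivor refresh, then events in $\window{hi_i}{f}$. The argument compares the last of these with the two cases of $\sink_f(k)$ from \S\ref{sec:contract-replay}. It splits on whether some event for $k$ lies in $\window{lo_i}{f}$.

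\emph{Case A: some event for $k$ lies in $\window{lo_i}{f}$.} Let $e^{\ast}$ be the last such event, so that $\sink_f(k) = \imgof(e^{\ast})$. Because $s_0 \pre lo_i$, the window $\window{lo_i}{f}$ is contained in the consumed range $\window{s_0}{f}$, and $e^{\ast}$ is therefore emitted.
\begin{itemize}
\item If $e^{\ast}$ lies in $\window{hi_i}{f}$, it is the last entry of the third segment and hence the last entry overall.
\item If $e^{\ast}$ lies in $\window{lo_i}{hi_i}$, then $k$ has an in-window event and is not a survivor, so no refresh entry is emitted. No event for $k$ lies in $\window{hi_i}{f}$ either, because $e^{\ast}$ is the last event in $\window{lo_i}{f}$. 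So $e^{\ast}$ is again the last entry.
\end{itemize}
In both subcases the replay returns $\imgof(e^{\ast})$.

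\emph{Case B: no event for $k$ lies in $\window{lo_i}{f}$.} Here $\sink_f(k) = R_i(k)$, and the third segment is empty. The first segment contains only events in $\window{s_0}{lo_i}$.
\begin{itemize}
\item If $R_i(k) \neq \perp$, then $k$ is a survivor, since it has no event in $\window{lo_i}{hi_i}$. Its refresh entry comes after all of those earlier events, so replay yields $R_i(k)$.
\item If $R_i(k) = \perp$, no refresh entry is emitted. The last entry for $k$ is then the last event for $k$ in $\window{s_0}{lo_i}$, or there is no entry at all.
\end{itemize}

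The second subcase of Case B is the main obstacle. A pre-window event for $k$ survives in the stream because it is not overridden by an absent refresh entry. The value it carries is $\sigma_{\den{lo_i}}(k)$ whenever some event for $k$ lies in $\window{s_0}{lo_i}$. By O2 and Lemma~\ref{lem:inv}, $R_i(k) = \sigma_{\den{lo_i}}(k) = \perp$, so that last event must be a delete, and its image is $\perp$. If no such event exists, the key has no entry and replays to absent by Definition~\ref{def:stream}. Either way the replay equals $\perp = R_i(k)$. This step is where the proof uses the fact that absence by omission coincides with delete events, as the paper noted, and it is the only place O2 and window invariance enter. Case A needs only that $s_0 \pre lo_i$ to guarantee $e^{\ast}$ was consumed.
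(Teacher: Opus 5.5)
Your proof is correct and is essentially the paper's own argument. Like the paper, you read $k$'s last entry off Lemma~\ref{lem:discard-shape} and use the same three cases: an event in $\window{hi_i}{f}$; an event only in $\window{lo_i}{hi_i}$, so no survivor; and no event in $\window{lo_i}{f}$, where $k$ survives if $R_i(k)\neq\perp$ and otherwise O2 with Lemma~\ref{lem:inv} forces any last pre-bracket event to be a delete — your Case~A simply groups the first two.
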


\begin{proof}
Fix $k \in D_i$ and write $W = \window{lo_i}{f}$ for the canonical
window. Lemma~\ref{lem:discard-shape} lists $k$'s stream entries.
We compare last entries against the canonical fold, splitting on
where $k$'s events fall.

\emph{Some event for $k$ lies in $\window{hi_i}{f}$.} The last
such event is $k$'s last stream entry, since post-close events
follow everything else. It is also the last event of $W$, because
$W$ consists of $\window{lo_i}{hi_i}$ followed by
$\window{hi_i}{f}$. Both replays return its image.

\emph{No post-close event, but some event for $k$ lies in
$\window{lo_i}{hi_i}$.} Then $k$ is not a survivor, so its stream
entries are exactly its consumed events in $\window{s_0}{hi_i}$,
and the last of these is the last event of
$\window{lo_i}{hi_i}$, which in turn is the last event of $W$.
Again both replays return its image.

\emph{No event for $k$ in $W$ at all.} The canonical side returns
$\sink_f(k) = R_i(k)$, and by O2, validity at some in-bracket $c$
together with Lemma~\ref{lem:inv} gives
$R_i(k) = \sigma_{\den{lo_i}}(k)$. Two sub-cases on the read result.
If $R_i(k) \neq \perp$, then $k$ is a survivor. Its refresh entry
is its last stream entry and the stream replays to $R_i(k)$
directly. If $R_i(k) = \perp$, then $k$ is not a survivor and its
stream entries are its consumed events in $\window{s_0}{lo_i}$, of
which there may be none. With none, the stream replays $k$ to
absent, which equals $R_i(k)$. Otherwise let $q$ be the last such
event. $q$ is then the last event for $k$ in $\den{lo_i}$, so
$\imgof(q) = \sigma_{\den{lo_i}}(k) = R_i(k) = \perp$. The entry
is a delete, and the stream replays $k$ to absent once more. In
every sub-case the stream replay equals $R_i(k) = \sink_f(k)$.
\end{proof}

The stream may carry pre-bracket events for a key that the canonical
rule never consults, because consumption started before the bracket did. 
Condition O2 ensures that those events are consistent with the read result.
A key read as absent can reach the sink either by never being 
mentioned or by ending on its own delete, and the two routes agree. 
Symmetrically, a surviving refresh entry
is emitted \emph{after} any pre-bracket event for its key and
overrides it, which is the ordering
Definition~\ref{def:discard} builds in by closing units inside the
pass rather than up front.

\begin{theorem}[Window-discard monotonicity]\label{thm:discard-mono}
Tag each emitted entry with a count of committed events, tagging each event
with the length of the log prefix ending at it and each surviving
refresh entry with the length of $\den{hi_i}$, its unit's close.
Then along the window-discard emission,
each key's tags are non-decreasing, so no entry ever replays a key to
an older state after a newer one.
\end{theorem}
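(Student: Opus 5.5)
Fix a key $k \in D_i$ and use Lemma~\ref{lem:discard-shape} to list its entries: first $k$'s consumed events in $\window{s_0}{hi_i}$ in commit order, then possibly one survivor refresh entry, then $k$'s events in $\window{hi_i}{f}$ in commit order. Only entries for the same key matter for the claim, so monotonicity reduces to checking that the tags along this three-block list are non-decreasing. Entries for other keys interleave freely and are irrelevant, since the statement is per key.

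Within the first block and within the third block the entries are log events visited in commit order. The tag of an event is the length of the prefix ending at it, which strictly increases with commit position, so the tags are strictly increasing inside each block. For the junction between the first and third blocks when no survivor entry is present, observe the following. Every event of the first block lies in $\den{hi_i}$ and so has tag at most $|\den{hi_i}|$. Every event of the third block lies in $\den{f}\setminus\den{hi_i}$ and so has tag strictly greater than $|\den{hi_i}|$. The junction therefore increases as well.

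The only substantive step is placing the survivor entry, which carries tag $|\den{hi_i}|$. Every preceding entry for $k$ is an event in $\window{s_0}{hi_i} \subseteq \den{hi_i}$, so its tag is at most $|\den{hi_i}|$. Every following entry is an event beyond $hi_i$, so its tag exceeds $|\den{hi_i}|$. Both inequalities give non-decrease. Equality can occur only if the event at position $|\den{hi_i}|$ is itself an event for $k$, which the survivor definition excludes whenever that event lies in the window. This refinement is unnecessary, though, because the claim is only non-decreasing.

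The one point requiring care is the phrase ``no older state after a newer one'', which needs a semantic reading of the tags. For events, the tag is exactly the position whose replay state the event realizes for $k$. For the survivor, I would argue that its value $R_i(k)$ equals $\sigma_{\den{hi_i}}(k)$. This follows from O2, since $R_i(k)=\sigma_{\den{c}}(k)$ for some in-bracket $c$, together with Lemma~\ref{lem:inv}, using that a survivor has no event in $\window{lo_i}{hi_i}$. Hence the survivor's tag truthfully names the state it carries. Non-decreasing tags then mean that replay of $k$ never reverts to an earlier row-event replay state.
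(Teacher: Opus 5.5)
Your proof is correct and follows the paper's own argument: you take the three-block per-key shape from Lemma~\ref{lem:discard-shape}, note that tags are non-decreasing within each block and step upward at each boundary, and justify the survivor's tag as exact via O2 and Lemma~\ref{lem:inv}. The paper makes the same exactness point in its remark immediately after the proof.
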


\begin{proof}
By Lemma~\ref{lem:discard-shape}, $k$'s entries are its pre-close
events, whose tags are at most the length of $\den{hi_i}$ and
appear in
commit order. Then possibly the refresh entry, tagged exactly that
length. Then
its post-close events, in commit order, with tags beyond it.
Each block is internally non-decreasing and each boundary
steps upward.
\end{proof}

The surviving refresh's tag is exact by
Lemma~\ref{lem:inv}. A survivor's window contains no event for the key, so a read result valid somewhere in the bracket matches the key's state at the high edge.

\subsection{Buffered range-merge}\label{sec:merge-range}

The second concrete merge discipline reads a unit's window as a ranged
backfill, upserts the window's events onto the refresh buffer, and
emits the merged unit once, as plain inserts. The live stream
carries only events beyond the unit's high edge. Two clauses make the discipline lawful, and both are necessary.

\begin{definition}[Merged result]\label{def:merged}
For a unit $i$ and $k \in D_i$, the \emph{merged result} is
\[
M_i(k) \;=\;
\begin{cases}
\imgof(e^{\ast}) & \parbox[t]{0.47\columnwidth}{if $e^{\ast}$, the
  last event for $k$ in $\window{lo_i}{hi_i}$, exists,}\\[2pt]
R_i(k) & \text{otherwise:}
\end{cases}
\]
the buffer after the upsert, where the key's last in-window event decides and a key with no in-window event keeps its read result. A
delete-then-reinsert key is present again, because only the last
event counts.
\end{definition}

\begin{lemma}[The merged buffer is the state at the high edge]
\label{lem:merged}
For every unit $i$ and every $k \in D_i$,
$M_i(k) = \sigma_{\den{hi_i}}(k)$.
\end{lemma}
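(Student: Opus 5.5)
The plan is to observe that $M_i(k)$ is exactly the canonical replay at $g = hi_i$ (Definition~\ref{def:traj}), then apply the closed-unit cut. Comparing Definition~\ref{def:merged} with Definition~\ref{def:traj} at $g = hi_i$, both consult the last event for $k$ in $\window{lo_i}{hi_i}$ and fall back to $R_i(k)$ otherwise. Hence $M_i(k) = \sink_{hi_i}(k)$ holds syntactically, for every $k \in D_i$. Lemma~\ref{lem:closed}, instantiated with $g = hi_i$, has its hypothesis $hi_i \pre g$ met by reflexivity of $\pre$. It therefore gives $\sink_{hi_i}(k) = \sigma_{\den{hi_i}}(k)$, which is the claim.

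Since Lemma~\ref{lem:closed} itself relies on the Case~A/Case~B argument of Theorem~\ref{thm:cut}, I would also write that argument out directly for self-containment. Split on whether $k$ has an event in $\window{lo_i}{hi_i}$.

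\emph{Case~A: some event exists.} Let $e^{\ast}$ be the last one. Because $lo_i \pre hi_i$, the prefix $\den{hi_i}$ is $\den{lo_i}$ followed in commit order by the window. So $e^{\ast}$ is the last event for $k$ in $\den{hi_i}$, and both sides equal $\imgof(e^{\ast})$.

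\emph{Case~B: no event exists.} By O3, $R_i(k)$ exists. By O2 there is a $c$ with $lo_i \pre c \pre hi_i$ and $R_i(k) = \sigma_{\den{c}}(k)$. Lemma~\ref{lem:inv}, applied to the bracket with both $c$ and $hi_i$ as in-bracket coordinates, gives $\sigma_{\den{c}}(k) = \sigma_{\den{hi_i}}(k)$.

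There is essentially no obstacle. The only point needing care is the degenerate use of the trajectory definition at the high edge itself: the window is read as $\den{hi_i} \setminus \den{lo_i}$, and the half-open convention places events at $hi_i$ inside it. This matches Definition~\ref{def:merged}, so no boundary event is double-counted or lost. Note that the frontier $f$ and S-OBS play no role, since the statement concerns the committed history only.
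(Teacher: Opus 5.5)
Your proposal is correct, and the direct Case~A/Case~B argument you write out is exactly the paper's proof. That argument takes the last in-window event as the last event in $\den{hi_i}$ by splitting at $lo_i$, and otherwise carries the O2 witness to $hi_i$ by Lemma~\ref{lem:inv}. Your preliminary shortcut, identifying $M_i(k)$ with $\sink_{hi_i}(k)$ via Definition~\ref{def:traj} and invoking Lemma~\ref{lem:closed} at $g = hi_i$ by reflexivity of $\pre$, is also valid and non-circular, since it only repackages the same per-key argument.
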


\begin{proof}
If some event for $k$ lies in $\window{lo_i}{hi_i}$, the last one
is also the last event for $k$ in $\den{hi_i}$, by the split of
$\den{hi_i}$ at $lo_i$, and both sides are its image. Otherwise
$M_i(k) = R_i(k)$, and O2 with Lemma~\ref{lem:inv} carries the
in-bracket read value to the high edge:
$R_i(k) = \sigma_{\den{hi_i}}(k)$.
\end{proof}

This makes the discipline's clause (b) exact. The merged unit
represents each key's value or absence at the high edge. Keys absent at
the high edge are emitted zero times, which under
Definition~\ref{def:stream} replays as absence.

\begin{definition}[Gated range-merge emission]\label{def:rm}
Fix a consumption start $s_0$ with $s_0 \pre hi_i$ for every unit.
The \emph{range-merge emission} is one pass over the consumed
events $\window{s_0}{f}$ in commit order, subject to the
\emph{withholding gate}, clause (a), under which an event whose key belongs to
unit $i$ passes to the stream only beyond unit $i$'s high
edge, outside $\den{hi_i}$. Events at or before the high edge for the 
unit's keys are withheld and never emitted individually. At unit $i$'s close,
placed as in Definition~\ref{def:discard}, one entry per key with
$M_i(k) \neq \perp$ is emitted, carrying $M_i(k)$. The merged
block is the unit's sole emission at or before its high edge.
\end{definition}

The observability requirement is genuinely weaker here than for
window-discard, and the difference matters for the parallel instance. Window-discard consumes one pass from
$s_0 \pre lo_i$. Range-merge reads each window by a coordinate-
addressable backfill (the S-OBS retention clause of
\S\ref{sec:contract-promises}) and needs the \emph{stream} only from $s_0 \pre hi_i$. Everything a unit's keys would contribute at or before the high edge is withheld by clause (a) anyway, and the window's content enters through the buffer instead.
That weaker requirement is precisely the handoff condition that the
parallel-chunks instance must name as an assumption
(\S\ref{sec:instances}).

\begin{lemma}[Per-key shape of the gated emission]\label{lem:rm-shape}
Let $k \in D_i$. The entries for $k$ in the range-merge emission
are: if $M_i(k) \neq \perp$, one merged entry with result
$M_i(k)$ at unit $i$'s close. Then $k$'s events in
$\window{hi_i}{f}$, in commit order. Nothing else is emitted, because every earlier
occurrence of $k$ is withheld.
\end{lemma}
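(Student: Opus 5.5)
The plan is to mirror the proof of Lemma~\ref{lem:discard-shape}: fix $k \in D_i$ and classify every entry of the range-merge emission of Definition~\ref{def:rm} that can carry key $k$. There are only two sources of entries. The first is consumed log events passed through the withholding gate. The second is merged close blocks emitted at some unit's close.

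First, handle the merged entries. A close block of unit $j$ contains an entry for $k$ only if $k \in D_j$ and $M_j(k) \neq \perp$. O1 gives $j = i$, so at most one merged entry for $k$ exists. It exists exactly when $M_i(k) \neq \perp$, and it carries $M_i(k)$. It sits at unit $i$'s close, placed as in Definition~\ref{def:discard}: after every consumed event in $\den{hi_i}$ and before every consumed event beyond $hi_i$.

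Second, handle the log events for $k$. The pass visits the consumed events of $\window{s_0}{f}$ in commit order. By clause (a), an event for $k$, whose owning unit is $i$ by O1, is emitted only if it lies outside $\den{hi_i}$. Since $s_0 \pre hi_i \pre f$, the consumed range splits into $\window{s_0}{hi_i}$ followed by $\window{hi_i}{f}$. Every occurrence of $k$ in the first part is withheld. Every occurrence in the second part passes through, at its own place and in commit order. Because $s_0 \pre hi_i$, the whole of $\window{hi_i}{f}$ lies within the consumed range, so no post-close event is missing.

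Finally, combine the two. The close sits before every event beyond $hi_i$, so the merged entry, when it exists, precedes all emitted events for $k$. This gives the stated shape. The only delicate point is the boundary convention: an event exactly at $hi_i$ lies in $\den{hi_i}$ and is therefore withheld, never emitted individually. Its effect enters through $M_i(k)$ instead, consistent with the half-open window $\window{lo_i}{hi_i}$ used in Definition~\ref{def:merged}. I expect this boundary bookkeeping, together with the relative ordering of same-position close blocks, to be the only step needing care. That ordering is immaterial because O1 again forbids other units from emitting $k$.
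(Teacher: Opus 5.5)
Your proposal is correct and follows the paper's argument. The gate withholds every occurrence of $k$ in $\den{hi_i}$, O1 makes unit $i$'s close the only block that can mention $k$, and the events in $\window{hi_i}{f}$ pass in commit order after that close; your remark that $s_0 \pre hi_i$ ensures the whole post-close window is consumed spells out a detail the paper leaves implicit.
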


\begin{proof}
The gate blocks $k$'s events at or before $hi_i$ from the stream,
and O1
makes unit $i$ the only unit whose close can mention $k$. Post-close
events pass the gate, at their own places and in commit order,
after the close block.
\end{proof}

\begin{theorem}[Range-merge equivalence]\label{thm:rm}
Under clauses (a) and (b), the replay of the range-merge emission
equals the canonical replay:
$\mathrm{replay}(\text{emission})(k) = \sink_f(k)$ for all
$k \in \keys$.
\end{theorem}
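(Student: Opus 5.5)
The plan is to fix one scoped key $k \in D_i$ and compare last entries key by key, using Lemma~\ref{lem:rm-shape} for the shape of $k$'s entries in the gated emission. The canonical value comes from Definition~\ref{def:traj} at $g=f$, over the window $W=\window{lo_i}{f}$. Since $lo_i \pre hi_i \pre f$, $W$ splits in commit order into $\window{lo_i}{hi_i}$ followed by $\window{hi_i}{f}$. I would case-split on where $k$'s events in $W$ fall, mirroring the proof of Theorem~\ref{thm:discard}.

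\emph{Case 1: some event for $k$ lies in $\window{hi_i}{f}$.} By Lemma~\ref{lem:rm-shape}, the post-close events are $k$'s last stream entries, since they follow the close block. Their last one, $e$, is therefore the last stream entry for $k$. By the split, $e$ is also the last event for $k$ in $W$, so both replays return $\imgof(e)$. This step uses S-OBS in the range-merge form: the stream from $s_0 \pre hi_i$ contains all of $\window{hi_i}{f}$.

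\emph{Case 2: no event for $k$ in $\window{hi_i}{f}$.} By Lemma~\ref{lem:rm-shape}, the stream entries for $k$ are at most the single merged entry. If $M_i(k)\neq\perp$, the replay gives $M_i(k)$. If $M_i(k)=\perp$, there are no entries and the replay is absent, which again equals $M_i(k)$; this is the only place clause (b) and Definition~\ref{def:stream}'s absence-by-omission enter. It remains to show that $M_i(k)=\sink_f(k)$. When $\window{hi_i}{f}$ contains no event for $k$, the last event for $k$ in $W$, if any, is the last one in $\window{lo_i}{hi_i}$, and otherwise both definitions fall back to $R_i(k)$. So Definitions~\ref{def:merged} and~\ref{def:traj} agree case by case. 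As a shortcut, Lemma~\ref{lem:merged} together with Theorem~\ref{thm:cut} also gives $M_i(k)=\sigma_{\den{hi_i}}(k)=\sigma_{\den{f}}(k)=\sink_f(k)$, using window invariance across $\window{hi_i}{f}$; either route suffices.

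The main obstacle is bookkeeping rather than mathematics. I need to be careful that clause (a) really removes every pre-$hi_i$ occurrence of $k$, including events in $\window{s_0}{lo_i}$, which the discard proof had to handle via O2. That is exactly what Lemma~\ref{lem:rm-shape} asserts. I also need the buffer's window content to come from the coordinate-addressable backfill, so that $M_i$ is computed from the true $\window{lo_i}{hi_i}$. I would state this explicitly as the S-OBS retention clause, and then note that O1 guarantees no other unit's close mentions $k$.
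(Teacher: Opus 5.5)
Your proposal is correct and follows the paper's proof essentially step for step. You split on whether $k$ has an event in $\window{hi_i}{f}$, let the last post-close event decide both sides in the first case, and in the second case match Definition~\ref{def:merged} against the canonical fold case by case to get $\sink_f(k)=M_i(k)$ before reading the stream side off Lemma~\ref{lem:rm-shape}. Your optional shortcut through Lemma~\ref{lem:merged} and Theorem~\ref{thm:cut} is also valid, though it draws on O2, whereas the direct comparison you and the paper use is purely definitional.
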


\begin{proof}
Fix $k \in D_i$. If some event for $k$ lies in
$\window{hi_i}{f}$, its last such event is the last stream entry
for $k$ and also the last event of $\window{lo_i}{f}$, so both
replays return its image, as in Theorem~\ref{thm:discard}.
Otherwise no event for $k$ lies beyond the high edge, and the
canonical fold collapses to the merged result. If
$\window{lo_i}{hi_i}$ holds an event for $k$, its last one decides
both $\sink_f(k)$ and $M_i(k)$. If not, both equal $R_i(k)$.
Hence $\sink_f(k) = M_i(k)$. On the stream side,
Lemma~\ref{lem:rm-shape} leaves either the single merged entry,
replaying to $M_i(k)$, or, when $M_i(k) = \perp$, no entry at all,
replaying to absent, which is $M_i(k)$ once more.
\end{proof}

\begin{theorem}[Range-merge monotonicity]\label{thm:rm-mono}
With tags as in Theorem~\ref{thm:discard-mono} and the merged
entry tagged with the length of $\den{hi_i}$, each key's tags are
non-decreasing along the
range-merge emission.
\end{theorem}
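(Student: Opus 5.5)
The plan follows the proof of Theorem~\ref{thm:discard-mono}, with Lemma~\ref{lem:rm-shape} in place of Lemma~\ref{lem:discard-shape}. Fix a scoped key $k$. By O1 it lies in exactly one unit $i$, so only unit $i$'s close block can mention $k$. Lemma~\ref{lem:rm-shape} then gives the complete list of $k$'s entries in the range-merge emission, in stream order. First comes at most one merged entry, present exactly when $M_i(k) \neq \perp$ and emitted at unit $i$'s close. After it come $k$'s events in $\window{hi_i}{f}$, in commit order. Every occurrence of $k$ at or before $hi_i$ is withheld by clause (a). This is the essential difference from window-discard: no pre-close event survives that could carry a tag below the merged entry's.

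I would then check the tags block by block. The merged entry carries the length of $\den{hi_i}$ by the statement's convention. Each post-close event $e$ lies in $\den{f} \setminus \den{hi_i}$. Since prefixes of the log are nested and $\den{hi_i}$ is a prefix not containing $e$, the prefix ending at $e$ strictly extends $\den{hi_i}$. So $e$'s tag strictly exceeds the merged tag. Within the post-close block, the events appear in commit order, and tags (lengths of the prefixes ending at each event) strictly increase with commit position. If the merged entry is absent, the sequence is just the post-close block, which is already non-decreasing.

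Per key, this gives a sequence consisting of one possible entry at tag $|\den{hi_i}|$ followed by a strictly increasing run above it, so the tags are non-decreasing. The only step needing care is the boundary. I must confirm that the close block really precedes every post-close event of $k$ in the stream, and not merely in tag order. This follows from the close placement inherited from Definition~\ref{def:discard}: the close comes immediately before the first consumed event beyond $hi_i$. The gate of clause (a) ensures that nothing at or before $hi_i$ for $k$ appears after the close. As a remark, Lemma~\ref{lem:merged} shows the tag is semantically exact, since $M_i(k) = \sigma_{\den{hi_i}}(k)$, but the ordering claim itself does not need that fact.
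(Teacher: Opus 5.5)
Your proposal is correct and follows essentially the same route as the paper: both read off each key's entries from Lemma~\ref{lem:rm-shape} as one optional merged entry at the close, tagged with the length of $\den{hi_i}$, followed by post-close events in commit order with larger tags. The paper additionally cites Lemma~\ref{lem:merged} to say the merged tag is accurate. As you note, the ordering claim itself does not depend on that.
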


\begin{proof}
By Lemma~\ref{lem:rm-shape} a key's entries are one merged entry
at its unit's close, accurately tagged by Lemma~\ref{lem:merged},
followed by
events with tags beyond it, in commit order.
\end{proof}

Clause (a)'s full reach is necessary. A natural implementation
deduplicates exactly what it buffered by absorbing the window's
events into the merged unit while streaming every other consumed
event, narrowing withholding to the window alone. That variant
is wrong, and the failure needs only two events.

\begin{proposition}[Withholding is necessary]\label{prop:withhold}
There is a contract instance on one key for which the gated
emission replays correctly, while the window-narrowed variant
(absorb the window into the buffer, stream every other consumed
event) replays the key to a value the source no longer holds.
\end{proposition}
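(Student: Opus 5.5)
We will prove the proposition with an explicit small witness, in the style of Propositions~\ref{prop:torn} and~\ref{prop:sync}. The natural failure of the window-narrowed variant is a key with an event \emph{before} the window that is streamed individually, which is harmless. A more dangerous case is an event \emph{at or before} $lo_i$ that is streamed after the merged block. That cannot happen in commit order, so the failure must come from the consumption start. Range-merge only requires $s_0 \pre hi_i$, so the stream may begin inside or before the window, and the narrowed variant streams pre-window events. With $s_0 \pre lo_i$ those precede the close and are overridden. The real gap is therefore ordering relative to the close block: the narrowed variant emits non-window events at their own places, while the merged block sits at the close.

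We therefore choose the two events so that one lies \emph{before} $lo_i$ but \emph{after} the read coordinate is irrelevant, and the key's true latest state comes from the window. Concretely, take one key $a$ over an empty store, with history $L = e_1\,e_2$, where $\keyof(e_1)=\keyof(e_2)=a$, $\imgof(e_1)=1$, and $\imgof(e_2)=2$. Use coordinates $c_0,c_1,c_2$ as before, and one unit with bracket $(c_1,c_2)$, frontier $f=c_2$, refresh $R(a)=1$ (valid at $c_1$, so O1--O5 hold), and consumption start $s_0=c_0$ (which is permitted since $s_0\pre hi$). The window $\window{c_1}{c_2}$ contains $e_2$, so $M(a)=2$, matching Lemma~\ref{lem:merged}.

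We then check both emissions. The gated emission withholds both $e_1$ and $e_2$, since both lie in $\den{hi}$, and emits only the merged entry $a\mapsto 2$. It replays to $2=\sigma_{\den{c_2}}(a)$, as Theorem~\ref{thm:rm} guarantees. For the narrowed variant, the step I expect to be the obstacle is fixing precisely where it places the non-window event $e_1$ relative to the close block. In commit order $e_1$ precedes the close, so it would be harmless. The failure must therefore arise from the variant's realistic scheduling: range-merge reads the window by backfill, while the live stream is consumed concurrently, so an event not absorbed into the buffer is streamed whenever the live pass reaches it, which can be after the unit has already emitted its block. I would formalize the variant as emitting the merged block at close and then streaming every consumed non-window event in $\window{s_0}{f}$ afterwards, as the live consumer does once the backfill has finished. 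Under that formalization the stream is $a\mapsto 2$ followed by $e_1$, which replays to $1\neq 2$. That is exactly a value the source no longer holds, and it completes the proposition.
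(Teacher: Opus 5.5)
\textbf{There is a genuine gap.} Your witness does not separate the two emissions under the emission model the proposition refers to, and the step you flag as the obstacle is where the argument breaks. Definition~\ref{def:rm} is a single pass over $\window{s_0}{f}$ in commit order. The window-narrowed variant changes only the gate, so it streams every non-window event at its own place. In your instance, $e_1$ is therefore emitted before the unit's close, and the merged entry $a \mapsto 2$ follows it. The variant then replays $a$ to $2$, which is correct. You notice this yourself (``in commit order $e_1$ precedes the close, so it would be harmless''). You then rescue the failure by redefining the variant so that every non-window event is emitted after the merged block. That is a different discipline: it reorders the stream out of commit order by fiat. Your scheduling remark shows only that such an order \emph{can} occur, not that the variant produces it. The failure you then exhibit is an ordering artifact, not a consequence of narrowing the withholding gate, which is the point the proposition is meant to isolate.

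The missing idea is absence by omission. Your claim that pre-window events ``precede the close and are overridden'' holds only when $M(a) \neq \perp$. Definition~\ref{def:rm} emits one merged entry only for keys with $M_i(k) \neq \perp$. A key whose merged result is absent gets no entry at all, so nothing overrides an earlier streamed event for it. The fix is to make the in-window event a delete: take $\imgof(e_1) = 5$, $\imgof(e_2) = \perp$, bracket $(c_1,c_2)$, refresh $R(a) = 5$ (valid at $c_1$), and $s_0 = c_0$. Then $M(a) = \perp$, and the gated emission is empty, so it replays $a$ to absent, which equals $\sigma_{\den{c_2}}(a)$. The narrowed variant, in plain commit order, streams only the pre-bracket insert $e_1$ and resurrects $a \mapsto 5$, a value the source deleted before the frontier. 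This is the paper's witness, and it needs no scheduling assumption.
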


\begin{proof}
Over an initially empty store, take
$L = e_1\, e_2$ with $\keyof(e_1) = a$, $\imgof(e_1) = 5$,
$\keyof(e_2) = a$, $\imgof(e_2) = \perp$, representing an insert followed by a
delete. Coordinates $c_0, c_1, c_2$ map to the prefixes of lengths
$0, 1, 2$. The plan has one unit with domain $\{a\}$, bracket $(c_1, c_2)$,
frontier $f = c_2$, and refresh $R(a) = 5$, reflecting the state at the low edge
$c_1$ where the insert has committed and the delete has not.
Consumption starts at $s_0 = c_0$.

The window $\window{c_1}{c_2}$ contains the delete, so the merged
result is $M(a) = \perp$ and the lawful emission for $a$ is
empty. By Theorem~\ref{thm:rm} it replays $a$ to absent, which is
$\sigma_{\den{c_2}}(a)$. Under the window-narrowed variant, the
in-window delete is absorbed into the buffer, whose merged block
is empty and emits nothing, while the pre-bracket insert $e_1$
lies outside the window and passes to the stream. The variant's
stream is the single entry $e_1$, and it
replays $a$ to $5$, which is a value the source deleted before the
frontier. Clause (a)'s full reach, withholding everything at or before the high edge rather than only what the buffer absorbed, is exactly what separates the correct replay from the wrong one.
\end{proof}

Window-discard and range-merge ship visibly different streams
from the same plan. The first interleaves a unit's in-window
events at their own places, whereas the second absorbs them into
one merged block. Theorems~\ref{thm:discard} and~\ref{thm:rm} say
both replay to the same canonical sink. The log-wins property is proved once for the canonical rule, and each concrete merge discipline then needs only a short equivalence proof.

\subsection{Point-splice}\label{sec:merge-point}

A point bracket has $lo_i = hi_i$, the window is empty, and there
is nothing to reconcile. No key has an in-window event, and the merged
result is the refresh itself. A point splice is the degenerate case of
window-discard, with $s_0 \pre lo_i$ for every unit. Each unit emits its
present refresh entries at its splice coordinate, followed by later
events in log order. A composite with distinct splice coordinates must
therefore begin at or before the earliest splice, because beginning at the
latest would omit events needed by earlier units. Theorem
~\ref{thm:discard} gives replay equivalence, and Corollary
~\ref{cor:emitted-cut} gives final source equality. This is equality
after replay, not entry-for-entry identity. The splice instances of
\S\ref{sec:instances} use this degenerate case.

\subsection{Analyzed and open}\label{sec:merge-open}

Four further disciplines run in production or arise naturally from
the model. For each we state the equivalence obligation in the
contract's terms and record it as \emph{open}. None is established
in this paper, and nothing in this subsection is a result.

Readers evaluating only the five production placements may continue
directly to \S\ref{sec:instances}. This subsection records design
obligations for additional mechanisms.

\paragraph{Key-range gating.}
This discipline uses no buffer at all. The capture maintains a copy cursor over a
totally ordered key domain, applies log events only for keys at or
below the cursor, and discards events for keys ahead of it, on the
grounds that the later copy pass will read those rows anyway. The
obligation has two parts. The plan must be a monotone cursor over
the ordered domain, and each key's refresh witness (O2's
coordinate) must lie at or after every event discarded for that
key while it was ahead of the cursor, so that the eventual read is
fresh under an advancing view. A frozen read view breaks the
second part. Discarding an observed ahead-of-cursor event and then fetching
that row from an older snapshot loses the discarded
change with nothing left to repair it. A non-monotone
visit order breaks the first. Both preconditions are stated here.
Neither implication is proved, and the discipline is open.

\paragraph{Version-max.}
With a version column on every row, a sink can apply refresh
entries and events in any order and keep, per key, the entry of
maximal version, allowing reconciliation to succeed without any bracket. 
Deletes would also need a versioned tombstone or equivalent 
evidence to reject stale copied rows. The obligation is that 
the version order agree with commit order on
every pair of entries for the same key. Engine-generated,
commit-ordered versions can satisfy it, whereas client-supplied
timestamps in general do not, and a deployment leaning on them
sits at T1 (\S\ref{sec:tiers}) until the agreement is
established, so the discipline remains open.

\paragraph{Re-read normalization.}
Under a delta log, events carry changed columns rather than full
images, S-IMG fails, and the core fold is not even well typed.
The industrial repair is to re-select colliding keys under a fresh
bracket and emit the read images. In this model the repair is a
\emph{log normalization}, not an in-core discipline. The repaired
instance is modeled over effective full-image events, each re-read
supplying the image its delta event lacked, and the instance's
equivalence obligation then includes the normalization itself.
That obligation has a termination half, since re-reads race new
deltas and the re-read procedure must be shown to terminate. None of the
vendor documentation surveyed for this paper states one. The discipline
remains open, with the termination half flagged as the substantive part.

\paragraph{Rewind-by-preimage.}
Where the log carries before-images, a late read can be rolled
backward by applying preimages in reverse until the unit's low edge,
yielding a refresh valid at $lo_i$ itself. Its obligation is to 
identify the suffix reflected in the read and 
show that reversing it reconstructs each key at $lo_i$. It is 
recorded as a construction worth having,
but is entirely unproved and remains open.

\section{Instantiating the Contract}\label{sec:instances}

The preceding sections separate three questions. Does a capture plan
satisfy the source-side contract? Does its emitted stream replay like
the canonical merge? Which semantic level and certification status
follow? This section answers
those questions for five production protocol shapes. It is the formal
instantiation layer of the paper. Section~\ref{sec:practitioner}
distills the resulting framework and its operational consequences into
a standalone practitioner synthesis.

\subsection{How a protocol inherits the theorem}
\label{sec:instances-pattern}

Each instantiation follows the same proof pattern.

\begin{enumerate}
\item Map the protocol's coordinates to nested log prefixes consistently
with its actual coordinate behavior, including the exact interpretation
of its low and high edges.
\item Map its copied scopes, reads, brackets, and frontier to a capture
plan, then satisfy the source promises and plan obligations of
Definition~\ref{def:contract}.
\item Match the emitted stream to a proved discipline in
\S\ref{sec:merge}, or state the additional equivalence condition that
the implementation must establish.
\item Apply Theorem~\ref{thm:cut} for the frontier result and
Lemma~\ref{lem:latest-high} for the conservative post-close
trajectory. Apply Corollary~\ref{cor:shared} only when the capture
exhibits one shared state witness. Add transaction closure before
reading either result as a transactional snapshot or history.
\end{enumerate}

The proof is conditional at the points where the protocol
description does not itself establish a source, read, coordinate, or
emission property. Such requirements appear below as explicit preconditions
and theorem assumptions. Release-specific evidence establishes whether a
particular implementation satisfies them.

\subsection{What an instantiation establishes}
\label{sec:instances-meaning}

A contract proof establishes the canonical frontier result. An emitted
stream attains it only after the corresponding merge-equivalence
argument is established. The two claims are kept separate below because
the same plan can be emitted by different buffering and handoff rules.
Likewise, a T3 placement concerns the retrospective canonical family
of Definition~\ref{def:traj} and an exhibited common state witness. It
does not assert that every physical
prefix of an emitted stream follows that family. Those distinctions let
the formal placements remain precise while the practitioner synthesis
can focus on what the framework means and when its results apply.

\subsection{Classic watermarked DBLog}\label{sec:inst-classic}

\paragraph{Construction.}
The original algorithm~\cite{andreakis2020dblog} reads each table
in chunks. It brackets every chunk select between two watermark
writes, updates to a dedicated single-row table whose change events
appear in the tailed log. It takes the log positions of those two
events as the bracket and deduplicates by window-discard
(\S\ref{sec:merge-discard}). Its published
formalization~\cite{andreakis2026virtualcuts} defines wellformed
runs of this shape and proves them correct. That the present framework 
contains the original as an instance is therefore a theorem with 
a precise form. Every wellformed run yields a contract instance, 
and the contract's replay agrees with the published formalization's 
own conclusion.

The import adopts five clauses of published wellformedness,
restated here in this paper's vocabulary:

\begin{itemize}
\item \emph{source binding}: the run's source history is a
committed, commit-ordered event history with coordinates
non-decreasing along it.
\item \emph{chunk partition}: the chunks' key domains are pairwise
disjoint and jointly cover the run's scope.
\item \emph{chunk evidence}: each chunk carries one read
coordinate lying between its lower and upper watermarks, and its
read map is total on its domain.
\item \emph{read validity}: each chunk's read results equal the source's
row-event replay state at that chunk's read coordinate, key by key.
\item \emph{frontier bound}: every read coordinate lies at or
before the run's frontier.
\end{itemize}

\begin{theorem}[Classic DBLog satisfies the contract]\label{thm:classic}
Every wellformed run of the classic algorithm, in the sense
of~\cite{andreakis2026virtualcuts}, determines (i) a contract
instance (Definition~\ref{def:contract}) and (ii) an instance whose
canonical replay at the run's frontier equals the run's own source
row-event replay state there, which is, per key, the image of the last 
history event for the key with coordinate at or before the frontier, 
or the initial value if no such event exists.
\end{theorem}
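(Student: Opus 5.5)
The proof is a translation followed by one invocation of Theorem~\ref{thm:cut}. Fix a wellformed run with history $e_1,e_2,\ldots$, coordinates non-decreasing along it, chunks $(D_i, w^{lo}_i, w^{hi}_i, r_i, \rho_i)$ with read coordinate $r_i$ and read map $\rho_i$, and frontier $f$.

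\textbf{Coordinate space.} Define $\den{c}$ as the set of history occurrences whose coordinate is at or before $c$. Because coordinates are non-decreasing along the history (source binding), each $\den{c}$ is a prefix of the log. Any two such prefixes are nested, so Definition~\ref{def:coords} holds. With this reading, $\sigma_{\den{c}}(k)$ is by construction the image of the last event for $k$ with coordinate at or before $c$, or the initial value. This is exactly the replay state of the published formalization, and it is the bridge used for part (ii).

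\textbf{Plan and obligations.} Take units $(D_i, w^{lo}_i, w^{hi}_i, \rho_i)$ with frontier $f$. Then O1 is chunk partition, O3 is the totality half of chunk evidence, and O2 holds with the uniform witness $c=r_i$ by chunk evidence and read validity. This is the single-read-point case discussed after Proposition~\ref{prop:torn}. O4 is empty, because every coordinate is observed as a watermark event rather than asserted by tooling. O5 holds because both paths land in the formalization's single $\keys$ and $\vals$. S-LOG and S-IMG are the modeling content of source binding.

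\textbf{Main obstacle: the bracket condition.} The condition $lo_i \pre hi_i \pre f$ is where I expect the real work. The imported frontier bound constrains only read coordinates, $r_i \pre f$, not upper watermarks. My first attempt is to check whether the published formalization's frontier is defined as, or forces, a point at or beyond every closed chunk's upper watermark. For example, the chunk is emitted only at the high watermark event, which lies inside the consumed history up to $f$.

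If it does not, I would use the single-point structure instead. Replace each bracket by $(w^{lo}_i, \min(w^{hi}_i, f))$, or more simply by the point bracket $(r_i, r_i)$. The point bracket satisfies $r_i \pre r_i \pre f$ and keeps O2 with witness $r_i$. The change is harmless for part (ii), which concerns only the canonical replay at $f$ and the equality with the source.

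Finally, $lo_i \pre hi_i$ follows from the watermark ordering $w^{lo}_i \pre r_i \pre w^{hi}_i$ by transitivity of $\pre$.

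\textbf{Part (ii).} With (i) in hand, Theorem~\ref{thm:cut} gives $\sink_f(k)=\sigma_{\den{f}}(k)$ for every $k\in\keys$. Unfolding $\den{f}$ through the coordinate mapping above turns the right side into the per-key statement in the theorem: the image of the last history event for $k$ with coordinate at or before $f$, or the initial value when no such event exists.
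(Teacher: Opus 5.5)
Your proposal is correct and follows the paper's proof. It uses the same at-or-before prefix mapping and discharges O1--O3 and O5 clause by clause with the uniform witness $r_i$. Its fix is also the paper's: truncate each upper watermark at the frontier, because the published wellformedness bounds only read coordinates, so your ``first attempt'' would not succeed and the fallback is required. Prefer the truncated bracket $(w^{lo}_i,\min(w^{hi}_i,f))$ over your point-bracket alternative $(r_i,r_i)$. Both are contract instances, but the point bracket sits at a latent, unobserved read coordinate, which also undercuts your stated reason for O4 being empty. It also severs the link to the watermark edges that the later window-discard bridge relies on.
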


\begin{proof}
Map the run's objects onto the contract's. The source history maps
to the log. Inserts and updates become events with present images,
while deletes become events with absent images. This is the 2020
all-columns post-image assumption serving as S-IMG, and S-LOG is
the source binding. A history coordinate identifies the prefix of all
events with coordinates at or before it. Since coordinates are
non-decreasing along the history, that filter is a genuine prefix,
and increasing a raw coordinate either leaves the corresponding log 
prefix unchanged or extends it. Different raw coordinates can refer 
to the same prefix.
This at-or-before convention is the instance's declared endpoint
normalization (\S\ref{sec:setting-windows}). The prefix includes every event
at the chosen coordinate, so a run of equal-coordinate events
lands wholly inside it, matching the published formalization's own
latest-lookup convention, which resolves coordinate ties by
position.

Each chunk becomes a unit. Its domain is the chunk domain, its
bracket edges are the lower watermark and the upper watermark
truncated at the frontier, and its refresh is the chunk's read
map. The truncation is needed because the published wellformedness
bounds the \emph{read} coordinate by the frontier while
never bounding the upper watermark, whereas contract
plans live at or before the frontier by definition. Chunk evidence
and the frontier bound keep the read coordinate inside the
truncated bracket, so nothing is lost. The obligations are now
satisfied clause by clause. O1 is the chunk partition. O2 holds
with every key of a chunk witnessed uniformly at the chunk's one
read coordinate, whose validity is the read-validity clause. This is the upward
direction of Proposition~\ref{prop:torn}'s discussion, a single
read point instantiating the per-key existential. O3 is the
totality half of chunk evidence. O5 holds by construction, since
the import maps the same history that generated the change stream,
so both paths share one event vocabulary. This proves (i).

For (ii), the prefix for a coordinate consists exactly of the mapped
events with coordinates at or before that coordinate, so the
state after it is, key by key, the image of the key's last such
event (ties across equal coordinates resolved by position, per
the at-or-before reading declared above), or the initial value if
none exists. Theorem~\ref{thm:cut}
applied to the instance equates the canonical replay at the
frontier with exactly that state.
\end{proof}

The published formalization's observed-stream clauses and its
replay and certificate layers are not needed here. In
this framework the log is the committed history itself, and
observation enters at the discipline layer through S-OBS
(\S\ref{sec:merge}). The two developments meet at their
conclusions. The state in Theorem~\ref{thm:classic}(ii) is
precisely the state the published formalization's own replay
theorem computes at the frontier, the same per-key latest
lookup under the same position-resolved tie
convention~\cite{andreakis2026virtualcuts}. The
contract-level theorem thereby generalizes the published result
and never contradicts it.

Theorem~\ref{thm:classic} imports the plan-level T2 result into the
generalized framework. The earlier formalization separately proves a
clean-prefix theorem for the emitted stream in its own run semantics
~\cite{andreakis2026virtualcuts}. A physical generalized
window-discard conclusion needs an additional bridge. It requires
S-OBS, the survivor and close behavior of
Theorem~\ref{thm:discard}, and every real upper marker at or before the
frontier being claimed. The last condition matters because the import
replaces an upper marker beyond $f$ by $f$ in the canonical plan. A
formal close at that truncated edge is not automatically a physical
close in the original run. One may instead choose a later frontier
that includes the real marker. This is a statement about where the proof stops. It says nothing
about the correctness of classic DBLog code. Corollary
~\ref{cor:emitted-cut} gives the final emitted equality once the bridge
holds. None of these results claims that each prefix of the emitted
stream follows the canonical trajectory of
Definition~\ref{def:traj}.

\paragraph{Watermark preconditions (W1-W5).}
The formal import assumes wellformedness. A deployment must also
be \emph{able} to run the bracket mechanism. The
watermark-write design carries five preconditions, made explicit
here because \S\ref{sec:audit} shows the 2020 paper's own support
conditions never state them:

\begin{description}
\item[\textbf{W1}] \emph{provisioning}: the watermark table exists
in a dedicated namespace, or the capture role holds DDL privilege
to create it.
\item[\textbf{W2}] \emph{write access}: the capture role can
update it, twice per chunk, for the life of every run. Read-only
sources, read-only roles, and frozen archives are excluded.
\item[\textbf{W3}] \emph{capture coverage}: the watermark table's
changes appear in the same tailed stream as the data, a configuration 
property of publications and log filters. A filtered-out low watermark 
prevents the protocol from observing the window opening, and without an 
external timeout or recovery mechanism the run can wait indefinitely.
\item[\textbf{W4}] \emph{recognizability}: the watermark value
column is among the captured columns, so any column filtering must retain that value.
\item[\textbf{W5}] \emph{ordering coupling}: watermark writes,
chunk reads, and the tailed log target the same instance, one
linear commit history. Using a primary for writes and a replica 
for reads requires a separate argument that the read lies within 
the watermark bracket in the same history.
\end{description}

\paragraph{Result.} The imported plan is T2, by
Theorem~\ref{thm:classic} and Theorem~\ref{thm:cut}. A
frontier-replay result for the emitted stream additionally depends on the physical-marker,
observation, and window-discard requirements stated above.
The certified configuration of the
published formalization additionally carries an externally
checkable certificate of its cut, which is the T4 certification
overlay (\S\ref{sec:tiers-rungs}). The classic instance is where the
framework incorporates its T4 anchor~\cite{andreakis2026virtualcuts}.

\paragraph{Debezium's signal-table modes.}
In the documented non-read-only mode, Debezium's default
\texttt{insert\_insert} strategy writes opening and closing records to
the configured signaling data collection and reconciles the intervening
row events by window-discard
~\cite{debezium_signalling,debezium_mysql_connector}. This has the same
span-bracket and log-wins shape as the classic placement, under four
conditions. W1-W5 hold for the signaling collection. The opening
record precedes the chunk read, and the closing record follows it. The
scan returns one value-or-absence result for every owned key, each valid
at some coordinate inside the bracket (O2-O3). Release waits until the
real close and all relevant window events have been observed. Stable scope,
identity, and representation agreement remain required.

Under those conditions, Theorem~\ref{thm:cut} places the plan at T2.
Theorem~\ref{thm:discard} and Corollary~\ref{cor:emitted-cut} give the
finite emitted output final source replay. This bridges Debezium's documented mode to the watermarked placement
under the stated conditions. It does not apply Theorem~\ref{thm:classic}'s
imported wellformedness to a Debezium release. The same argument covers
the PostgreSQL connector's signal-table
mode~\cite{debezium_postgresql_connector}.

The \texttt{insert\_delete} strategy uses the same bracket. The opening
insert supplies the low edge, and deleting that row supplies the high edge.
In the tagged source, Debezium accepts the delete as a signaling event,
reads the opening row's identifier from its before-image, and converts
it to the same internal window-close signal used by
\texttt{insert\_insert}~\cite{debezium_signal_delete_source}. This
placement therefore adds the requirement that the delete event's
before-image retain the opening row's identifier (the signal table's
primary key). Without it, the connector cannot recognize the close.
Enabling the source signaling channel also does not itself establish capture coverage. The
signaling collection's changes must actually be present and recognizable
in the consumed history.

\subsection{Debezium read-only capture: MySQL, MariaDB, and PostgreSQL}
\label{sec:inst-readonly}

\paragraph{Coordinate construction.}

Debezium's incremental snapshots adapt DBLog's watermark-based
design~\cite{debezium_incremental_snapshots,debezium_ddd3}. Its MySQL and MariaDB
read-only connectors bracket chunks by reading
server GTID state instead of writing marker rows
~\cite{debezium_mysql_connector,debezium_mariadb_connector,debezium_readonly_blog}.
Its PostgreSQL read-only connector brackets them with two samples of
the server's transaction snapshot
~\cite{debezium_ddd8,debezium_postgresql_connector}. The
trigger can also arrive over a write-free channel
~\cite{debezium_signalling}. The three servers do not expose the same
coordinate, however. MySQL reports a full executed-GTID set. MariaDB's
\texttt{GTID\_\allowbreak{}BINLOG\_\allowbreak{}POS} reports one frontier per replication domain.
PostgreSQL returns a lower transaction-ID bound (\texttt{xmin}), an upper
bound (\texttt{xmax}), and a list of transactions still in progress.
We therefore prove the set-oracle construction once, then give
separate product bridges below.

The generic construction applies to a coordinate that is the exact
set of transactions contained in one transaction-complete prefix of
the consumed history. The difference between high and low then selects
the complete transaction blocks that entered the prefix while the
chunk was being captured. This is stronger and more precise than
calling the sample atomic. A sample must end between complete
transaction blocks and be gap-free relative to that same consumed
history.

Model the history as follows. Positions $p$ of the log carry a
transaction label $\tau(p)$, and one assumption shapes the
labeling. Equal labels occupy contiguous position blocks, because
a transaction commits as one uninterrupted run of row events. Call
a position $n$ a \emph{boundary} when no transaction straddles it, meaning
no position before $n$ shares a label with any position at or
after $n$. After projection onto labels represented in the scoped log,
every accepted sample must equal the complete transaction-label set of
the prefix at one such boundary. The \emph{executed set} at boundary
$n$ is
\[
G(n) \;=\; \{\, \tau(p) : p < n \,\}.
\]
An operational executed set can also contain transactions that
have no row event in the capture scope. The model uses the
projection of that set onto transaction labels represented in the
scoped log. Applying the result therefore requires projection
agreement. For every captured event, membership of its label in
the raw low and high sets must agree with membership in the
projected sets $G(n)$. Extra labels are harmless because the
window test asks only about the captured event's own label. This
bridge is included in A-linear below.

The set $G(n)$ maps to the length-$n$ prefix. The mapping is
well defined because distinct boundaries carry distinct executed
sets. If $m < n$ are boundaries, the label $\tau(m)$ lies in
$G(n)$ but not in $G(m)$, since membership in $G(m)$ would put an
earlier position and position $m$ on one label across the boundary
$m$. The corresponding prefixes are therefore nested, and inclusion
between executed sets agrees with prefix inclusion, so
Definition~\ref{def:coords} is satisfied without any scalar.

What remains is admissibility's fidelity requirement 
(\S\ref{sec:setting-coords}), under which the instance's own membership
decision must agree with the window defined by the mapped prefixes. That is a
theorem.

\begin{theorem}[Transaction-complete executed-set windows are faithful]
\label{thm:oracle}
Let $n_{lo} \le n_{hi}$ be boundaries and $p$ any log position.
Then
\[
\tau(p) \in G(n_{hi}) \setminus G(n_{lo})
\quad\Longleftrightarrow\quad
n_{lo} \le p < n_{hi}.
\]
The operational test, ``the event's transaction identifier lies in
the high set and not in the low set,'' decides exactly the
half-open window of log events.
\end{theorem}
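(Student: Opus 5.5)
The plan is to reduce the biconditional to a single characterization of membership in one executed set, and then take a set difference. First I establish the \emph{single-boundary lemma}: for any boundary $n$ and any position $p$,
\[
\tau(p) \in G(n) \quad\Longleftrightarrow\quad p < n .
\]
Given this lemma for both $n_{lo}$ and $n_{hi}$, the theorem is immediate:
\[
\tau(p)\in G(n_{hi})\setminus G(n_{lo}) \iff p<n_{hi} \wedge \neg(p<n_{lo}) \iff n_{lo}\le p<n_{hi}.
\]
The hypothesis $n_{lo}\le n_{hi}$ only ensures that this interval is the nonempty-or-empty half-open window matching $\den{hi}\setminus\den{lo}$ under the length-$n$ prefix mapping.

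For the lemma, the direction $p<n \Rightarrow \tau(p)\in G(n)$ is direct from the definition of $G(n)$, since $p$ itself witnesses membership. The converse is where the boundary hypothesis does the work. Suppose $\tau(p)\in G(n)$. Then some $q<n$ has $\tau(q)=\tau(p)$. If $p\ge n$, the positions $q<n\le p$ share a label across $n$, which contradicts the assumption that $n$ is a boundary. Hence $p<n$.

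Note that this argument uses the boundary definition exactly as stated (no position before $n$ shares a label with any position at or after $n$). Label contiguity is therefore not needed inside the proof. Contiguity matters only for the existence of boundaries and for the well-definedness of the coordinate mapping discussed earlier.

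The main obstacle is not the combinatorics but matching the operational test to the projected sets. The raw sampled sets may contain labels with no scoped events, so the theorem as stated concerns $G(\cdot)$. I would add a remark that, under the projection-agreement clause (A-linear), membership of a captured event's own label in the raw low and high sets coincides with membership in the projected sets. Consequently the raw test decides the same window. Extra labels never matter, because the test only queries $\tau(p)$ for a captured $p$.

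I would also note that fidelity in the sense of \S\ref{sec:setting-coords} then follows: the instance's window decisions agree with $\den{\cdot}$ because the length-$n$ prefix consists exactly of the positions $p<n$.
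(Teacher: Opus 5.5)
Your proof is correct and is essentially the paper's argument. The paper runs the two directions inline at $n_{lo}$ and $n_{hi}$ using exactly your two facts: direct membership when $p<n$, and a boundary-straddling contradiction when $p\ge n$. You package these once as the single-boundary lemma $\tau(p)\in G(n)\iff p<n$ and then take the set difference, which is a slightly tidier organization of the same reasoning. One minor aside, which does not affect the proof: the paper's well-definedness argument (distinct boundaries have distinct executed sets) is just your lemma applied at $p=m$ and uses only the boundary property. Your remark that contiguity is needed there therefore overstates its role.
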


\begin{proof}
Suppose $\tau(p) \in G(n_{hi}) \setminus G(n_{lo})$. If $p$ lay
below $n_{lo}$, its label would be in $G(n_{lo})$ by definition,
so $n_{lo} \le p$. If $p$ lay at or beyond $n_{hi}$, then
membership in $G(n_{hi})$ would give an earlier position $q <
n_{hi}$ with $\tau(q) = \tau(p)$, and $q$ and $p$ would straddle
the boundary $n_{hi}$. Hence $p < n_{hi}$. Conversely let
$n_{lo} \le p < n_{hi}$. Then $\tau(p) \in G(n_{hi})$ directly,
and $\tau(p) \notin G(n_{lo})$ because an earlier position $q < n_{lo}$
with the same label would straddle the boundary $n_{lo}$.
\end{proof}

Theorem~\ref{thm:oracle} establishes the coordinate-fidelity requirement 
of the read-only placement. The remaining work is to check the common
contract and the buffer-and-discard emission.

\begin{proposition}[Placement: read-only]\label{prop:readonly}
Consider a plan whose units carry exact transaction-complete
executed-set brackets, whose final domains partition the scope, and
whose high boundaries lie at or below the frontier. Assume A-linear,
A-scope, A-read, and A-representation as stated below. Then:
\begin{enumerate}
\item the plan satisfies the contract, and its canonical replay is T2
by Theorem~\ref{thm:cut}
\item a finite buffer-and-discard output has the same final
source-side replay under A-discard and the window-discard form of
S-OBS, by Theorem~\ref{thm:discard} and
Corollary~\ref{cor:emitted-cut}.
\end{enumerate}
\end{proposition}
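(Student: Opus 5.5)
The proof is an instantiation in the four-step pattern of \S\ref{sec:instances-pattern}. Take the coordinate space to be the sampled executed sets. Each accepted sample, projected onto labels represented in the scoped log, equals $G(n)$ for some boundary $n$. Map it to the length-$n$ prefix.

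\textbf{Coordinate space.} The discussion preceding Theorem~\ref{thm:oracle} already shows this mapping is well defined, because distinct boundaries give distinct executed sets. It also shows the mapping is linearly nested, so Definition~\ref{def:coords} holds.

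\textbf{Fidelity.} Fidelity is the content of Theorem~\ref{thm:oracle}. The operational test (label in the high set and not in the low set) decides exactly $\den{hi}\setminus\den{lo}$. A-linear supplies projection agreement, which transfers this from the projected sets $G(n)$ to the raw server samples. The only point to check is that extra non-scoped labels never affect the test for a captured event. That is immediate, since the test asks only about the event's own label.

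\textbf{Contract obligations.} With the coordinates in place, discharge Definition~\ref{def:contract} clause by clause.
\begin{itemize}
\item S-LOG and S-IMG come from A-linear, which covers one commit-ordered, full-image history containing contiguous committed transaction blocks.
\item O1 is the hypothesis that the final domains partition the scope, with the scope fixed by A-scope.
\item The bracket bounds $lo_i \pre hi_i \pre f$ follow from $n_{lo}\le n_{hi}$ and the assumption that high boundaries lie at or below the frontier.
\item O2 and O3 are what A-read must assert: each owned key receives one value-or-absence result, equal to $\sigma_{\den{c}}(k)$ for some in-bracket coordinate $c$.
\item O4 is empty, because the samples are observed by the capture rather than asserted by tooling.
\item O5 is A-representation.
\end{itemize}
Theorem~\ref{thm:cut} then gives item (1) at T2.

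\textbf{Emitted output.} For item (2), A-discard states that the connector's emission is the window-discard emission of Definition~\ref{def:discard}, and the window-discard form of S-OBS supplies $s_0 \pre lo_i$ for every unit. Under these, Theorem~\ref{thm:discard} gives $\mathrm{replay}(E)=\sink_f$. Here $E$ is that emission, finite because each unit has finitely many survivor rows. Corollary~\ref{cor:emitted-cut} then yields equality with $\sigma_{\den{f}}$.

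\textbf{Expected obstacle.} The main difficulty is not the algebra but stating the assumptions so that each obligation is literally discharged. The hardest point is O2. The read-only connector does not document a statement snapshot, so A-read must be phrased per key and existentially, and it must lie between the two GTID samples. To justify that this per-key phrasing loses nothing, I would first appeal to Proposition~\ref{prop:torn}. A second subtle point is the requirement that samples are exact and transaction-complete; a gap or partial transaction in a sample would break the boundary hypothesis of Theorem~\ref{thm:oracle}. I would make this explicit inside A-linear rather than argue it.
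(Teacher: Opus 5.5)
Your overall route is the paper's: coordinate fidelity from Theorem~\ref{thm:oracle}, the contract discharged clause by clause, Theorem~\ref{thm:cut} for part~1, and Theorem~\ref{thm:discard} with Corollary~\ref{cor:emitted-cut} for part~2. One step fails as written, though. You take the coordinate space to be only the \emph{accepted samples}. O2 asks for a witness $c \in C$ with $lo_i \pre c \pre hi_i$. Under your choice, each key's witness must therefore be a sampled coordinate inside the bracket, which in a serial run is essentially $lo_i$ or $hi_i$ itself.

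A chunk read happens between the samples, as your own last paragraph says, and need not match either one. Suppose the window contains three transactions setting $k$ to $1$, $2$, $3$ (from $0$ at $lo_i$), and the select runs after the second. Then $R_i(k)=2$, while $\sigma_{\den{lo_i}}(k)=0$ and $\sigma_{\den{hi_i}}(k)=3$. Either your A-read becomes an unrealistic assumption that every read matches a sampled state, or O2 is not discharged and Theorem~\ref{thm:cut} cannot be applied. This holds even though this particular key would be settled by Case~A anyway.

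The repair is the paper's construction. Let $C$ contain the executed set $G(n)$ for \emph{every} boundary $n$ of the consumed history, each mapped to its length-$n$ prefix, with the samples as particular elements. The well-definedness and nestedness argument you cite already covers all boundaries. Theorem~\ref{thm:oracle} is stated for arbitrary boundaries $n_{lo} \le n_{hi}$. The paper's A-read witness, ``some transaction-boundary prefix inside the bracket,'' is then a genuine element of $C$. That is also why the witness must sit at a transaction boundary: interior prefixes of a committed block are not coordinates of this space.

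There are three smaller corrections. First, S-IMG does not follow from A-linear, whose clauses concern only one shared history, exact transaction-complete prefixes, and projection agreement. The paper takes S-IMG, together with O5, from A-representation (complete post-images, deletes as absence, primary-key changes normalized). Second, the paper obtains O3 from A-scope. Its complete-membership clause is what makes a nonreturned owned identity a justified absence, and hence $R_i$ total. Loading totality into A-read is harmless with all assumptions in force, but it misplaces where scan completeness is assumed. Third, Proposition~\ref{prop:torn} does not show that the per-key phrasing loses nothing. It shows that per-key O2 is strictly weaker than a shared read point. Sufficiency of the per-key form is simply Case~B of Theorem~\ref{thm:cut}, so that appeal is unnecessary.
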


\begin{proof}
Theorem~\ref{thm:oracle} gives coordinate fidelity. A-scope gives O1
and O3. A-read gives O2. A-representation gives S-IMG and O5, while
A-linear binds every object to S-LOG. The bracket and frontier bounds
complete the contract, so Theorem~\ref{thm:cut} proves part 1. Under
A-discard and S-OBS, the emitted stream realizes the window-discard
construction. Theorem~\ref{thm:discard} proves emission equivalence,
and Corollary~\ref{cor:emitted-cut} composes it with the cut for part
2.
\end{proof}

\paragraph{Read-only assumptions.}
\begin{description}
\RaggedRight
\item[\textbf{A-linear}] \emph{history, prefix, and projection}:
samples, chunk reads, and streamed events refer to one unchanged
commit-ordered history whose transaction identifiers occupy contiguous
blocks. Each accepted sample, after the declared projection, equals the
gap-free set of complete transaction blocks in one prefix of that
history. For every captured row event, membership of its transaction
identifier in the raw sample agrees with membership after projection to
the modeled scope. These are three separate clauses. One cannot be
inferred from either of the others.
\item[\textbf{A-scope}] \emph{complete stable ownership}: the final
unit domains partition the captured identity namespace and remain
stable for the run. The scope witness is coordinate-bound. Each chunk
scan supplies complete membership, so a nonreturned owned identity is
a justified absence rather than an unobserved row.
\item[\textbf{A-read}] \emph{per-key bracket match}: O2 at
instance level. Every returned value or absence must match the source
at some transaction-boundary prefix inside the bracket. This condition 
is kept explicit because the connector documentation does not fully
specify how chunk reads account for every key and ensure that each
returned value matches source state within the sampled interval
~\cite{debezium_mysql_connector,debezium_mariadb_connector}.
\item[\textbf{A-representation}] \emph{image and identity
agreement}: the streamed and copied paths share the same stable key
identity and value encoding. Effective events carry complete
post-images, deletes represent absence, and a primary-key change is
normalized as old-key absence plus new-key value. This satisfies
S-IMG and O5.
\item[\textbf{A-discard}] \emph{serialized window lifecycle}: the low
sample is installed before the read and the high sample is taken after
the buffer is complete, or an equivalent serialization establishes
that order. Every relevant row event whose transaction lies in the
high-minus-low window can remove its buffered key. The buffer is
released only after all row events of the last window transaction have
been processed. The first later transaction may establish closure
before processing its own row events because those events lie outside
the window. A heartbeat may close only after all window events are
known to have been processed.
\item[\textbf{S-OBS}] \emph{complete advancement and coverage}: one
commit-ordered consumed stream begins at or before every low boundary
and continues through the frontier. Filtered and out-of-scope events
needed to advance the window state remain observable even when they
do not enter the emitted data stream. Retention covers every range the
discipline reads.
\end{description}

\paragraph{MySQL bridge.}
Debezium 3.6.1.Final samples MySQL's
\texttt{Executed\_Gtid\_Set} through binary-log status before and
after a chunk read, subtracts the low set from the high set, and uses
event membership to manage the in-memory window
~\cite{debezium_mysql_connector,debezium_mysql_readonly_source}.
This is the direct full-set bridge to Theorem~\ref{thm:oracle} under
A-linear. MySQL documents that a committed GTID is externalized into
\texttt{gtid\_executed} non-atomically shortly after commit
~\cite{mysql_gtid_lifecycle}. The bridge therefore does not assume
atomic externalization. It assumes the stronger semantic fact the
proof needs, namely that each accepted value maps to an exact complete
prefix of the same history used by the read and consumed log. GTID mode
and GTID consistency are required. On a multithreaded MySQL replica,
the documented commit-order-preservation setting is also needed for
this one-history mapping~\cite{debezium_mysql_connector,mysql_gtid_lifecycle}.

\paragraph{MariaDB bridge.}
Debezium 3.6.1.Final instead samples
\texttt{GTID\_\allowbreak{}BINLOG\_\allowbreak{}POS}. MariaDB defines this value as the last
event group written to the binary log in each replication domain, so
it is a vector of per-domain frontiers rather than MySQL's complete
executed set~\cite{mariadb_gtid}. The tagged connector samples that
value. A nonempty high-minus-low GTID difference causes the shared
binlog source to request a reread. The common reread helper also
requires a running snapshot, an open deduplication window, and a
nonempty buffer, so the request alone does not establish an effective
retry
~\cite{debezium_mariadb_readonly_source,debezium_mysql_readonly_source}.
Under the stated assumptions, we conclude correctness in two cases. If the high-minus-low
difference is empty, the same-history and monotone-prefix
assumptions make the complete vector a confirmed quiet point in the exact
consumed MariaDB binlog. The chunk is then a point splice, provided
A-read supplies complete value-or-absence results and the vector
maps consistently to the same prefix of the consumed log across every relevant domain.

If the difference is nonempty and the buffer is empty, the reread
helper returns without another scan. This case needs no quiet-point
argument. Under A-read and A-scope, every refresh
is a justified absence, so the survivor set is empty. The same-history and 
monotone-prefix assumptions still ensure that the two samples define a well-formed 
bracket. Under A-discard and S-OBS, every racing or later event,
including an insert that raced the empty scan, remains on the emitted
log path. Theorem~\ref{thm:discard} therefore gives final source
equality. This argument uses the samples as bracket bounds. It does not
invoke the MySQL executed-set membership theorem or MySQL-only replica
variables. These two cases do not exhaust the tagged implementation.
For a populated buffer with a nonempty difference, we would need to
show separately that the sampled coordinates identify the correct
interval around the read in the committed log and that replaying the
connector's output reconstructs the source state. Alternatively, we
would need evidence that a reread actually occurs and brings the chunk
into one of the covered cases. MariaDB
deployments with independently advancing domains must state the same-history
and prefix-preserving translation explicitly
~\cite{mariadb_parallel_replication}.

Heartbeat availability remains an operational liveness requirement for detecting that 
the stream advanced. It does not replace A-linear, S-OBS, or the close
ordering in A-discard.

\paragraph{PostgreSQL bridge.}
Before and after each chunk read, Debezium 3.6.1.Final calls
\texttt{pg\_\allowbreak{}current\_\allowbreak{}snapshot()}~\cite{debezium_postgres_readonly_source}. Debezium's
design document for this mode describes the approach~\cite{debezium_ddd8}.
The mode requires PostgreSQL 13 or later~\cite{debezium_postgresql_connector}. A sample names a lower bound
\texttt{xmin}, an upper bound \texttt{xmax}, and the top-level
transactions between the two that were still in progress. Every
top-level transaction below \texttt{xmax} and outside that list had
completed when the sample was taken, while transactions at or above
\texttt{xmax} were still uncommitted~\cite{postgres_snapshot_functions}. A sample
therefore describes a set of completed transactions, as an executed-GTID
set does for MySQL. The IDs of transactions still in progress
only delimit that set. The construction reads no event of an active
transaction and no dirty state, so it stays within the committed history
of \S\ref{sec:setting-store}. A-linear for this server requires each
sample, projected to the logged transactions, to equal the transaction
set of one complete prefix of the consumed history.
Theorem~\ref{thm:oracle} then applies unchanged to membership in the
high set and not in the low set.

On the data-event path, the connector compares transaction IDs with the
sampled bounds to open and close the window. The first event whose
transaction ID is at least the low sample's \texttt{xmin} opens it. While
the window is open, the connector drops buffered rows whose keys appear
in the log. The first event whose transaction ID exceeds both sampled
\texttt{xmax} values closes it. The surviving rows are emitted before
that event~\cite{debezium_postgres_readonly_source,debezium_signal_delete_source}.
For this window, A-discard requires the connector to retain the buffer
until the closing comparison succeeds and allow every event in the
window to remove a buffered row with the same key.

The comparisons require the streamed transaction IDs and sampled bounds
to use the same ID space. When the low sample takes effect, it
must include every transaction the consumer has already processed. As
with MySQL's \texttt{Executed\_Gtid\_Set}, we assume that each accepted
sample represents exactly the transactions in one complete prefix of the
consumed log. This prefix property is not derived from the server's
commit procedure.

Under these conditions, the connector's window contains the interval
between the samples. Every event at or beyond the sampled low boundary
remains to be processed when the low sample takes effect. Its transaction
is outside the low set, so its ID is at least the low \texttt{xmin}. The
opening comparison therefore succeeds by the first such event. The low
edge, immediately before the opening event, lies at or before the
sampled low boundary. The closing event has an ID greater than both
sampled \texttt{xmax} values, so its transaction is outside the high set.
Because that set represents a log prefix, the closing event is after
that prefix. The high edge, immediately before the closing event,
therefore lies at or after the sampled high boundary.

Membership in the connector's window is determined by log position. This
window contains every read coordinate from the sampled bracket. The
observation frontier must be at or beyond its high edge. Widening the
bracket preserves O2 (\S\ref{sec:theorem-cut}), and the
window-discard result applies under the conditions stated above.

\paragraph{Result.}
Under their respective bridges and the shared assumptions, the
MySQL set plan, the two analyzed MariaDB cases, and the PostgreSQL plan are T2.
The empty-difference case uses a point splice, the empty-buffer,
nonempty-difference case uses window-discard over the sampled bracket,
and the PostgreSQL plan uses window-discard over the connector's window.
Their actual finite outputs have the same final source replay only under
the lifecycle, observation, and emission-equivalence conditions of the
corresponding discipline. Write-freeness is visible in these
constructions because none of Proposition~\ref{prop:readonly}'s assumptions places a marker event in the log.

\subsection{Flink CDC parallel chunks}\label{sec:inst-parallel}

\paragraph{Protocol mapping.}
The protocol splits a table into snapshot chunks that may run in
parallel. The release-3.6.0 source records a low binlog offset, runs the
chunk select, records a high offset, reads the corresponding binlog
range, and reconciles those events into the buffered chunk before
handoff~\cite{flinkcdc_mysql,flinkcdc_source_code}. This is the
range-merge discipline of \S\ref{sec:merge-range}. The plan-level
mapping and the properties of the stream actually emitted by the
connector are explicitly separated below.

Offsets map to prefixes of one source history. The documented
$[\mathit{LOW},\mathit{HIGH}]$ notation does not by itself fix the raw
endpoint behavior. A-edge therefore maps the connector's raw offset
tests to the half-open window of log events
$\den{\mathit{HIGH}}\setminus\den{\mathit{LOW}}$. Comparisons and
membership tests must agree with that mapping.

\begin{proposition}[Placement: parallel chunks]\label{prop:parallel}
Consider a finite final plan of logical chunk units whose offset
brackets have high edges at or below the frontier. Under A-source,
A-read, A-identity, A-edge, A-normalize, and A-representation:
\begin{enumerate}
\item if the final logical domains partition the captured scope, the
plan satisfies the contract and its canonical replay is T2 by
Theorem~\ref{thm:cut}
\item if A-window, A-merge, S-OBS, and A-handoff also hold, the finite
emitted stream has final source-side replay equality by
Theorem~\ref{thm:rm} and Corollary~\ref{cor:emitted-cut}
\item if the actual normalized and gated per-key sequence is the
sequence modeled by Definition~\ref{def:rm}, its source tags are
monotone by Theorem~\ref{thm:rm-mono}.
\end{enumerate}
\end{proposition}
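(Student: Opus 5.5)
The plan mirrors the proof of Proposition~\ref{prop:readonly}: discharge each contract clause from one named assumption, then hand the three claims to Theorem~\ref{thm:cut}, Theorem~\ref{thm:rm} with Corollary~\ref{cor:emitted-cut}, and Theorem~\ref{thm:rm-mono}.

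\textbf{Part~1 (contract).} I first build the coordinate space. A-source fixes one commit-ordered history and maps each binlog offset to the prefix of events at or before it. Prefixes of one sequence are nested, so Definition~\ref{def:coords} holds. A-edge is the declared endpoint normalization of \S\ref{sec:setting-windows}: it makes the connector's raw $[\mathit{LOW},\mathit{HIGH}]$ tests agree with $\den{\mathit{HIGH}}\setminus\den{\mathit{LOW}}$, which is the fidelity clause. Next, S-LOG comes from A-source. S-IMG and O5 come from A-representation together with A-normalize, where A-normalize renders primary-key changes as old-key absence plus new-key value, and from A-identity, which gives a shared key space. O1 is the partition hypothesis of part~1. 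O3 and O2 come from A-read: each logical chunk yields a total value-or-absence map, and each key matches the source at some in-bracket coordinate. O4 is vacuous because offsets are observed, not externally asserted. The bracket bounds $lo_i \pre hi_i \pre f$ are given in the statement. Definition~\ref{def:contract} is then satisfied, and Theorem~\ref{thm:cut} gives T2. As in the parallel remark after Theorem~\ref{thm:cut}, no condition couples chunks, so concurrency needs no argument.

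\textbf{Part~2 (emitted stream).} Here I show the connector's output is exactly the gated emission of Definition~\ref{def:rm}.
\begin{itemize}
\item A-window says the backfill read of $\den{hi_i}\setminus\den{lo_i}$ is complete. Together with A-merge, which says the last in-window event upserts over the buffer, the buffer equals $M_i$ of Definition~\ref{def:merged}. This is clause~(b), and Lemma~\ref{lem:merged} makes it exact.
\item A-handoff supplies clause~(a). The streaming phase begins at some $s_0 \pre hi_i$ for every unit and emits a unit's keys only beyond its high edge.
\item S-OBS in the range-merge form provides retention for the backfill and coverage of $\window{s_0}{f}$.
\end{itemize}
With the per-key shape of Lemma~\ref{lem:rm-shape} in place, Theorem~\ref{thm:rm} gives replay equal to $\sink_f$, and Corollary~\ref{cor:emitted-cut} turns this into equality with $\sigma_{\den{f}}$.

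\textbf{Part~3 (monotonicity).} This is immediate. Under the stated identification, Theorem~\ref{thm:rm-mono} applies verbatim.

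The main obstacle is the step in part~2 that identifies the real handoff with clause~(a) at its full reach. Proposition~\ref{prop:withhold} shows that withholding only the window is unsound. So A-handoff must be phrased to withhold every event at or before $hi_i$ for that chunk's keys, including pre-bracket events, whenever the stream started earlier. The same gate must also hold per chunk while other chunks' high edges differ. I would state A-handoff in exactly that form and verify that Lemma~\ref{lem:rm-shape} needs nothing more.
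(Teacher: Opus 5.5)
Your proposal is correct and follows the paper's proof essentially step for step. The partition hypothesis gives O1, A-read gives O2 and O3, and A-edge gives coordinate fidelity; the paper also credits A-edge with the bracket order $lo_i \pre hi_i$, which you instead take as given. A-source, A-normalize, A-representation, and A-identity supply S-LOG, S-IMG, and O5, and parts 2 and 3 go through Theorem~\ref{thm:rm}, Corollary~\ref{cor:emitted-cut}, and Theorem~\ref{thm:rm-mono} exactly as you describe.

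The full-reach withholding you flag as the main obstacle is already built into the paper's statement of A-handoff. It withholds every event at or before the owning split's high edge, pre-low events included, so Lemma~\ref{lem:rm-shape} needs nothing further.
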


\begin{proof}
For part 1, the explicit partition hypothesis gives O1. A-read gives O2
and O3. A-identity supplies stable logical ownership and final
assignment. A-edge supplies bracket order and coordinate fidelity.
A-source, A-normalize, and A-representation supply the standing history,
image, and representation conditions, including O5. The plan-level proof requires 
no handoff condition. For part 2, A-window and A-merge identify the finite output
with the gated range-merge construction, S-OBS supplies the needed log
coverage, and A-handoff starts the stream at or before every high edge
and supplies clause (a). Theorem~\ref{thm:rm} gives canonical replay
equivalence, which Corollary~\ref{cor:emitted-cut} lifts to source
equality. Part 3 is Theorem~\ref{thm:rm-mono} applied to that actual
normalized sequence.
\end{proof}

\paragraph{Independence from bracket ordering.}
The plan proof relates distinct units through final logical ownership
and one shared frontier bound. It does not order one chunk's bracket
against another's, so brackets may overlap or nest. This is narrower
than saying that the implementation needs no coordination. Split
assignment, checkpointing, phase transition, per-split event filtering,
and recovery still coordinate the running connector. The emitted-stream
result also needs the global handoff and gate. What the theorem removes
is only an additional pairwise bracket-order assumption. A
conforming plan over a disjoint scope composes by
Corollary~\ref{cor:mix}.

\paragraph{Parallel-chunk assumptions.}
\begin{description}
\RaggedRight
\item[\textbf{A-source}] \emph{one history and scope}: offset samples,
chunk reads, backfill records, and the continuing reader refer to one
prefix-preserving source history and one coordinate-bound scope.
\item[\textbf{A-read}] \emph{per-key bracket match}: assumed,
not derived, because the chunk select's isolation and membership
guarantees are not documented here~\cite{flinkcdc_mysql}. Every
returned value or absence must be valid at some coordinate inside the bracket,
and the scan must give a complete read result for its logical domain.
\item[\textbf{A-identity}] \emph{stable logical ownership}: stable
logical identities have one final logical-domain assignment for the
run. A primary-key change is normalized as absence for the old key plus
a value for the new key. For tables with primary keys, the tagged source
uses the record key as the merge-buffer identity. Split routing and the
live per-split gate instead extract the configured chunk-key value from
the appropriate before or after record. A mutable non-key split column
can therefore move a physical row between split predicates and is a
routing hazard, so that case
needs implementation-specific evidence for stable final ownership or
an equivalent proof. Immutability is a useful sufficient condition
where the documentation requires it, not the framework's universal
identity rule
~\cite{flinkcdc_mysql,flinkcdc_source_code}. Flink also supports tables
without a primary key when a non-null chunk column is configured. The
tagged implementation then uses the row image, rather than a stable
logical key, to identify buffered rows. This documented mode is not
placed here unless a deployment supplies the stable-identity
normalization required by \S\ref{sec:setting-store}. The documentation
also reduces the guarantee to at-least-once if the chunk column changes
~\cite{flinkcdc_mysql,flinkcdc_source_code}.
\item[\textbf{A-edge}] \emph{half-open endpoint normalization}: raw
offset inclusion, comparison, and filtering implement
$\den{\mathit{HIGH}}\setminus\den{\mathit{LOW}}$ without a gap or a
double-applied boundary event.
\item[\textbf{A-normalize}] \emph{row-operation normalization}:
Debezium \texttt{READ} and \texttt{CREATE} records become value writes,
\texttt{DELETE} becomes absence, and \texttt{UPDATE} is interpreted as
its before and after pair or as its complete after-image as appropriate.
The release-3.6.0 default deserializer preserves this operation meaning
in Flink \texttt{RowKind} values~\cite{flinkcdc_source_code}.
\item[\textbf{A-representation}] \emph{image and encoding agreement}:
A-normalize yields S-IMG, and copied and logged paths agree on stable
keys and values as required by O5. A custom deserializer must be shown
equivalent. Append-only conversion is usable only when the original
operation metadata is retained or otherwise interpreted before
\texttt{RowKind} is erased. Treating every change as an independent
insert does not by itself represent updates and deletes
~\cite{flinkcdc_source_code}.
\item[\textbf{A-window}] \emph{complete window availability}: for each
chunk, the backfill reader can consume every scoped row event in the
normalized low-to-high window. Retention covers that range.
\item[\textbf{A-merge}] \emph{faithful merge realization}: the
backfill updates each buffered key to the state at the high edge,
including absence, and the emitted chunk represents that state.
\item[\textbf{S-OBS}] \emph{complete observation}: every event needed
to build and close a chunk window, including filtered events used for
position advancement, remains observable through the claimed frontier.
\item[\textbf{A-handoff}] \emph{live floor, gate, and order}: the
continuing reader starts at or before every chunk high. For a chunk's
logical keys, events at or before its high are withheld because their
effects are already in the merged result, and events after its high
are released in source order. The release-3.6.0 assigner records
per-split highs and starts the binlog split at their minimum, so the
reader may encounter an event that precedes another split's low edge.
For each data event, the reader identifies its owning split from the
chunk key and emits the event only when its position is strictly greater
than that split's high watermark~\cite{flinkcdc_source_code}. It
therefore withholds every event at or before the high edge, including
pre-low events, as the range-merge discipline requires.
\end{description}

The official output contract supports the default changelog when
A-normalize and A-representation hold. A custom deserializer is covered
only after the same equivalence is established. An append-only mode that
erases operation kind cannot be used as a value-or-absence stream unless
the lost operation meaning is retained or reconstructed elsewhere.
Checkpoint restart, rescaling, source failover, and end-to-end
exactly-once sink application remain outside this placement. They must
satisfy the plan and emission conditions to obtain the result.

\paragraph{Skipping the backfill.}
The connector offers a mode that skips the per-chunk window read
entirely, deferring in-window changes to the later streaming
phase. Its documentation states that skipping ``might lead to data
inconsistency'' and that ``only at-least-once semantic is
promised''~\cite{flinkcdc_mysql}. The documented behavior does not
satisfy the requirements needed to prove merge equivalence
for this mode. We therefore formulate no T2 mapping for it. The
documentation's stated at-least-once semantics remains unchanged, and
this boundary is not a finding that the implementation is incorrect.
A separate convergence argument could support a different result
(\S\ref{sec:tiers-rungs}).

\paragraph{Result.}
Under its plan conditions, every final logical chunk plan is T2 without
A-handoff. Under the additional window, merge, observation, and handoff
conditions, its finite output has the same final source replay.
Per-key monotonicity additionally concerns the actual normalized gated
sequence and does not follow from final equality alone.

\subsection{Coordinate-bound reads: LSNs, binlog positions,
and consistent points}\label{sec:inst-splice}

\paragraph{Point-splice construction.}

Several engines bind one consistent read to one published log
coordinate, collapsing the bracket to a point. Under
\texttt{REPEATABLE READ}, MariaDB's
\texttt{START TRANSACTION WITH CONSISTENT SNAPSHOT} establishes an
InnoDB read view whose binlog coordinate the session then reads
from the status pair
\texttt{Binlog\_\allowbreak{}snapshot\_\allowbreak{}file}/\allowbreak
\texttt{Binlog\_\allowbreak{}snapshot\_\allowbreak{}position},
documented as ``the binlog position that corresponds to the
snapshot'' and queryable ``in a transactionally consistent way,
irrespective of which other transactions have been committed since
the snapshot was taken''~\cite{mariadb_consistent_snapshot,mariadb_binlog_status_vars,mariadb_snapshot_coordinates}.
Percona Server ports the same
pair~\cite{percona_consistent_snapshot}. PostgreSQL's logical
replication slot creation exports a snapshot together with a
\texttt{consistent\_point}, ``exactly the state of the database
after which all changes will be included in the change
stream''~\cite{postgres_protocol_replication,postgres_logicaldecoding}.
In each case a select, or a whole-table copy, runs inside that
read view, and log consumption starts at the recorded coordinate. For
a composite of several such units it starts at or before the earliest
of their coordinates (\S\ref{sec:merge-point}).

All these members, and the native dumps of
\S\ref{sec:inst-dump}, instantiate \emph{one} formal object,
because the plan does not record how a read was obtained
(\S\ref{sec:contract-plans}). The unit carries a point
bracket at a recorded coordinate $r$, with the refresh an
engine-consistent read at an engine-internal view coordinate $v$,
published as $p$. The instances differ in the source of their
assumptions and in their operational preconditions, not in their
formal plan.

\begin{proposition}[Placement: the trusted point
splice]\label{prop:splice}
Consider the one-unit plan over dumped scope $S$ with bracket
$lo = hi = r$ and refresh $R$, at frontier $f$ with $r \pre f$.
Assume
\begin{description}
\item[\textbf{A-engine}] $R(k) = \sigma_{\den{v}}(k)$ for every
engine-backed key $k$: the read is consistent at the view
coordinate, for keys of engines that participate in the snapshot.
\item[\textbf{A-engine-scope}] every key of $S$ is engine-backed.
\item[\textbf{A-cert}] $v = p$: the view and the published
coordinate coincide.
\item[\textbf{A-scope}] $r = p$: the recorded coordinate is the
published one.
\end{description}
Then the plan satisfies the contract, and $r$ is a shared state
witness exhibited by the view-coordinate binding, so
Corollary~\ref{cor:shared} applies, ensuring the canonical replay tracks the
source's row-event replay in the canonical trajectory
from $r$ to $f$. The instance sits at T3 on the dumped scope, with an
exhibited shared-state anchor at the recorded splice coordinate. For
this one point unit, $r$ is also the latest high, so T3 adds the exact
shared-state evidence rather than a longer trajectory interval.
Calling that family a transactional history additionally requires
transaction closure throughout the claimed range. If a finite physical
output emits an exact copy of the scope before the observed suffix from
$r$, or otherwise has the same replay, Corollary~\ref{cor:emitted-cut}
also gives final source-side equality at $f$. This is replay
equivalence, not entry-for-entry identity with a canonical stream.
\end{proposition}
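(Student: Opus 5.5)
The proof chains the three equalities $r = p = v$ to obtain $R(k) = \sigma_{\den{r}}(k)$ for every $k \in S$, then reads off each conclusion from results already proved. The chain uses A-scope, A-cert, A-engine, and A-engine-scope. Because $r$, $p$, and $v$ are equal as coordinates, they map to the same prefix. The only care needed is that, if equality is read up to the preorder $\pre$, equal mapped prefixes suffice, since $\sigma_{\den{\cdot}}$ depends only on the prefix. So the refresh matches the source exactly at the recorded coordinate on the whole dumped scope.

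The contract is then checked clause by clause in Definition~\ref{def:contract}.
\begin{itemize}
\item O1 holds because the single unit has domain $S$, which trivially partitions $S$.
\item The bracket bounds $lo = hi = r \pre f$ hold by hypothesis, using reflexivity of $\pre$ for $lo \pre hi$.
\item O2 holds with the uniform witness $c = r$, since $r \pre r \pre r$ and the equality above gives $R(k) = \sigma_{\den{r}}(k)$.
\item O3 holds because the refresh is total. This follows from the equality with $\sigma_{\den{r}}$, which is a total value-or-absence map on $S$.
\item O4 is met by listing A-engine, A-cert, and A-scope as the declared external assertions. This is exactly their role: they are assumed coincidences, not proved facts.
\item O5, S-LOG, and S-IMG are standing facts of the source and representation that the instance carries as modeling claims, as for every splice. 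I would state them as inherited rather than derived.
\end{itemize}

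The shared-witness hypotheses of Corollary~\ref{cor:shared} then hold for $c^\ast = r$. The bracket condition $lo \pre r \pre hi$ is trivial, and agreement of the refresh with $\sigma_{\den{r}}$ was established above. The corollary gives $\sink_g(k) = \sigma_{\den{g}}(k)$ for all $r \pre g \pre f$, which is the claimed trajectory. The T3 placement follows because the witness is exhibited by the recorded view-coordinate binding.

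For the remark that $r$ is the latest high, Lemma~\ref{lem:latest-high} applies with $h = r$, so the onset coincides with the conservative one. For the emitted-stream claim, the assumed replay equality with $\sink_f$ feeds directly into Corollary~\ref{cor:emitted-cut}. Alternatively, the copy-then-suffix output can be recognized as the point-splice case of window-discard (\S\ref{sec:merge-point}) with $s_0 \pre r$, after which Theorem~\ref{thm:discard} gives the equivalence.

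The main obstacle is bookkeeping rather than mathematics. One point is justifying O3 and O5 for a refresh that bypasses the connector's row path. The other is keeping the transaction-closure caveat explicitly outside the proved statement.
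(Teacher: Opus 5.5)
Your proposal is correct and follows the paper's proof: the same chain $R(k)=\sigma_{\den{v}}(k)=\sigma_{\den{p}}(k)=\sigma_{\den{r}}(k)$ from A-engine with A-engine-scope, A-cert, and A-scope, then O1--O5 with the uniform point witness $r$ at the typed refresh. From there, as in the paper, you apply Corollary~\ref{cor:shared} at $c^\ast = r$ and use the empty-window point-splice case of window-discard for the emitted-stream claim.
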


\begin{proof}
The assumptions chain into exact point equality at the recorded
coordinate, such that for $k \in S$,
$R(k) = \sigma_{\den{v}}(k) = \sigma_{\den{p}}(k) =
\sigma_{\den{r}}(k)$, using A-engine with A-engine-scope, then
A-cert, then A-scope. O1 holds with the single unit. O2 holds
with the point witness $r$ for every key. O3 and O5 hold at the
typed refresh. The window of the point bracket is empty, so the
discipline is the point splice of \S\ref{sec:merge-point}, and
$r$ satisfies both hypotheses of Corollary~\ref{cor:shared}, because it
lies in the (degenerate) bracket of the only unit and every
refresh matches the source state at it.
\end{proof}

\paragraph{Exact-read assumptions.}
\begin{description}
\RaggedRight
\item[\textbf{A-cert}] the engines' documentation asserts the
view-coordinate coincidence without exposing a mechanism to
check it. O4 records the claim as an explicit assumption, not as a
proved fact. Despite the assumption's historical name, it is a
trusted binding, not the independently checkable certificate that
earns T4.
\item[\textbf{A-scope}] the recorded MariaDB status pair must identify 
the read view used by the copy, captured from the active snapshot 
session before that view is ended or replaced~\cite{mariadb_binlog_status_vars,mariadb_snapshot_session_source}.

\item[\textbf{A-engine}, \textbf{A-engine-scope}] consistency is
engine-scoped (\S\ref{sec:inst-dump} quotes the InnoDB-only
caveat). The dumped scope must lie inside the participating
engines' keys.
\end{description}

S-OBS acquires instance-specific content. The PostgreSQL exported
snapshot is valid only ``until a new command is executed on this
connection or the replication connection is
closed''~\cite{postgres_protocol_replication}, so handle lifetime
is a retention condition on the snapshot side, just as log
retention is on the log side. Oracle documents a coordinate-bound
read primitive, \texttt{AS OF SCN}~\cite{oracle_flashback}, but
the corresponding SCN-bound log-splice consumption side is
not documented in the sources surveyed. Placing an Oracle member
would require one further assumption binding the read SCN to a
consumable log position, so we do not place it here.

\paragraph{Result.} T3 on the captured scope, by
Proposition~\ref{prop:splice}. Per-table point splices at
\emph{distinct} recorded coordinates compose by
Corollary~\ref{cor:mix} to one nonempty instance at global T2, whose
conservative trajectory begins at the latest recorded splice.
Proposition~\ref{prop:sync} shows that composition alone cannot promise
a shared state witness or an earlier start. This is
Example~\ref{ex:tiers}'s composition rule as it occurs in production. One
snapshot shared across the members, its coordinate recorded, supplies
the shared state witness for T3. The modeled finite copy-before-suffix output
has final source replay under its exact scope and observation assumptions,
with consumption starting at or before the earliest recorded splice
(\S\ref{sec:merge-point}).

\subsection{Native dumps and uncertainty
intervals}\label{sec:inst-dump}

\begingroup
\RaggedRight
\paragraph{Point-splice construction.}
The maximal chunk is an entire instance captured by the engine's
own dump tooling. \texttt{mysqldump} with
\texttt{-{}-single-transaction -{}-source-data} opens a repeatable-read
transaction, records the binlog coordinate at the dump's start
behind a brief global read lock, and documents the alignment as
``in all cases, any action on logs happens at the exact moment of
the dump''~\cite{mysql_mysqldump}. \texttt{mariadb-dump} with
\texttt{-{}-single-transaction -{}-master-data} likewise records a
binlog coordinate for the consistent dump~\cite{mariadb_dump}. \texttt{pg\_dump -{}-snapshot=<snapshot\_name>} runs the
dump inside a slot-exported snapshot, splicing at the slot's
consistent point~\cite{postgres_pgdump,postgres_protocol_replication}. No closing marker exists
in any of the three. The high edge is definitionally the low edge,
and ``end of dump'' is a duration, not a coordinate. Formally all
of this is Proposition~\ref{prop:splice} again, with the dump as
the one-unit refresh. What distinguishes the dump members is the
basis of their assumptions, recorded here.
\par\endgroup

\paragraph{Native-dump assumptions.}
\begin{description}
\RaggedRight
\item[\textbf{A-cert}, \textbf{A-scope}] the native-dump member
adopts Proposition~\ref{prop:splice}'s two point bindings. The
engine view must coincide with the tool's published coordinate, and
the coordinate recorded in the dump must be that published
coordinate. Each deployment must tie these trusted assertions to
the selected tool's documented sequence. They are assumptions, not
T4 certificates.
\item[\textbf{A-engine}, \textbf{A-engine-scope}] the consistency guarantee is
engine-scoped. The documentation states that ``only InnoDB tables are dumped in a consistent
state,'' and tables of non-transactional engines dumped alongside
``may still change state''~\cite{mysql_mysqldump}. A-engine-scope
therefore excludes them from the dumped scope or the claim.
\item[\textbf{A-ddl}] the dump's consistency prohibition on
concurrent schema statements is documented and unenforced, stating that ``no
other connection should use \ldots{} \texttt{ALTER TABLE},
\texttt{CREATE TABLE}, \texttt{DROP TABLE}, \texttt{RENAME
TABLE}, \texttt{TRUNCATE TABLE}''~\cite{mysql_mysqldump}. This
model has a fixed key space and no schema transitions, so it
represents the deployment only while the prohibition holds. There
is no in-model proposition to assume, and the instance carries the
precondition in this assumption list.
\item[\textbf{A-scope-witness}] the dump invocation or manifest
enumerates the included relations and key ranges, and the dump's
recorded coordinate binds that enumeration. The coordinate can be
server-wide while the dump is partial. MySQL's emitted GTID state
covers transactions ``even those that changed suppressed parts of
the database, or other databases on the server that were not
included in a partial dump''~\cite{mysql_mysqldump}. The model
therefore uses the whole-server log but restricts the claim to the
enumerated dump scope.
\end{description}

O5 is this instance's critical obligation. The refresh
bypasses the connector's row path entirely. Rows are parsed from
dump output rather than decoded from the log, and representation
agreement between the two paths, key identity and value encoding,
is information a deployment must establish. The formal side is
satisfied when the typed refresh is constructed. The operational side,
charset, collation, and numeric edge cases agreeing between dump
rendering and log rendering, is the checkable clause that O5 makes explicit.

\paragraph{The degradation theorem.}
The remaining case concerns the \emph{untrusted} coordinate of a restored
physical backup whose asserted recovery point is not certified.
Without an exact view-coordinate binding the plan cannot exhibit the
shared-state anchor required for T3. An interval justified by recorded
bounds and containing the true view can still serve as a bracket. This
placement requires the entire restored dataset in scope to equal one
engine-consistent source state at one unknown coordinate $v$. A fuzzy
or multi-view image whose keys came from different uncoordinated views
is outside this placement unless a separate per-key mapping supplies
the contract witnesses.
Figure~\ref{fig:degradation} draws both members of the family.

\begin{figure}[t]
\centering
\includegraphics{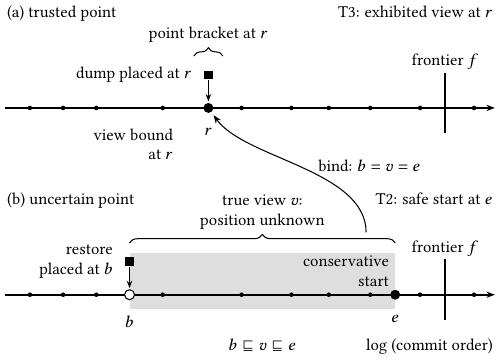}
\caption{The degradation family. A trusted coordinate gives a
point bracket with an exhibited shared-state anchor at $r$, which is
T3 (Proposition~\ref{prop:splice}). If the available recorded evidence
establishes only $b \pre v \pre e$, the widened bracket is placed at T2
under the available recorded evidence, with its canonical
trajectory guaranteed from the known coordinate $e$
(Theorem~\ref{thm:degrade}). The trusted case is the degenerate bound
$b = v = e = r$.}
\label{fig:degradation}
\Description{Two stacked panels over the same commit-order log
axis style. The trusted-point panel shows a single point bracket at
the recorded coordinate with the dump placed there and the frontier
to its right. The uncertain-point panel shows the bracket widened
between the backup start and the post-recovery coordinate, the
window shaded across its whole interior for the unknown true
view, and the restore placed at the low edge. A
label identifies the high bound as the conservative start. A
single curved arrow labeled with the collapse of the bound joins
the uncertain bracket to the trusted point.}
\end{figure}

\begin{theorem}[Degradation]\label{thm:degrade}
Consider the dump plan of Proposition~\ref{prop:splice} with
A-cert and A-scope dropped, so that no trusted binding establishes where
the view $v$ sits. Assume instead A-engine, A-engine-scope, and the
\emph{uncertainty bound} $b \pre v \pre e$, where the available recorded
evidence bounds the true view between the backup's start coordinate
$b$ and its post-recovery coordinate $e$, with $e \pre f$. The entire
restored scope equals the source at that single $v$. Widen the unit's bracket to
$(b, e)$. Then:
\begin{enumerate}
\item the widened plan satisfies the contract, and
Theorem~\ref{thm:cut} gives the cut at the frontier, reaching level T2.
Because its only high edge is $e$, Lemma~\ref{lem:latest-high} also
gives the canonical trajectory for every $g$ with
$e \pre g \pre f$.
\item the formal window-discard emission across the uncertainty
window replays to the canonical sink at $f$. This statement requires
an exact survivor listing, consumption beginning at or before $b$,
observation of the scoped log in commit order through $f$, and closure
at $e$ only after every event through $e$ has been processed. A
finite physical emission achieves this final-replay equality when it
starts from the empty state and realizes that construction with the
stated survivor and observation conditions, by
Corollary~\ref{cor:emitted-cut}. The claim is equality after replay
through $f$, not prefix equivalence of the emitted stream
(Theorem~\ref{thm:discard}).
\item the trusted instance is the degenerate member of the same
family. Proposition~\ref{prop:splice}'s plan is the widened plan
whose bound is already the point $b = v = e = r$, its coordinate
trusted through A-cert and A-scope.
\end{enumerate}
\end{theorem}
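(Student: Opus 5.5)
The plan is to verify the contract clause by clause for the widened one-unit plan $(S, b, e, R)$ at frontier $f$, then read off the three parts from results already proved. O2 carries the work; the rest is bookkeeping.

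For part~1, the bracket order $b \pre e \pre f$ comes from the uncertainty bound and the hypothesis $e \pre f$, since $b \pre v \pre e$ gives $b \pre e$ by transitivity of $\pre$. O1 holds because the single unit owns all of $S$. O3 and O5 hold at the typed refresh exactly as in Proposition~\ref{prop:splice}; the O5 operational clause is carried unchanged. For O2, I would fix $k \in S$ and take the witness $c = v$. A-engine-scope makes $k$ engine-backed, A-engine then gives $R(k) = \sigma_{\den{v}}(k)$, and $b \pre v \pre e$ places $v$ inside the bracket. Unlike the trusted case, O4 now records only the bound $b \pre v \pre e$ and the single-view hypothesis, not any coincidence $v = p$. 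S-LOG and S-IMG are inherited from the shared source. With the contract established, Theorem~\ref{thm:cut} gives the cut at $f$. Since the family has one unit with high edge $e$, that edge is the latest high, and Lemma~\ref{lem:latest-high} gives $\sink_g = \sigma_{\den{g}}$ for $e \pre g \pre f$.

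For part~2, I would instantiate Definition~\ref{def:discard} with consumption start $s_0 \pre b$, which is the hypothesis that consumption begins at or before $b$. The unit closes after every event through $e$, and its survivors are exactly the keys of $S$ with $R(k) \neq \perp$ and no event in $\window{b}{e}$. Theorem~\ref{thm:discard} then yields replay equivalence with $\sink_f$. Composing that with part~1 through Corollary~\ref{cor:emitted-cut} gives the physical final-replay statement. The main obstacle is matching the stated operational conditions to the definition's hypotheses: an exact survivor listing, observation in commit order through $f$, and closure at $e$ only after all events through $e$ have been processed. These must jointly realize the per-key shape of Lemma~\ref{lem:discard-shape}. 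I would argue this by noting that the listed conditions are precisely the hypotheses used in that lemma's proof. I would also state explicitly that no prefix-equivalence is claimed.

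For part~3, I would set $b = e = r$. A-cert and A-scope give $v = p = r$, so the uncertainty bound holds trivially and the widened plan literally becomes Proposition~\ref{prop:splice}'s point plan. The O2 witness $v$ becomes the shared witness $r$, and Corollary~\ref{cor:shared} recovers the T3 placement.
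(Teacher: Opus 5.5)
Your proposal is correct and follows the paper's proof essentially step for step. You witness O2 with the latent view $v$ through A-engine, A-engine-scope, and the uncertainty bound, then use Theorem~\ref{thm:cut} and Lemma~\ref{lem:latest-high} for part~1, Theorem~\ref{thm:discard} with $s_0 \pre b$ for part~2, and $v = p = r$ from A-cert and A-scope for part~3. Your additional bookkeeping (deriving $b \pre e$ by transitivity, declaring the O4 assumption, and invoking Corollary~\ref{cor:shared} in part~3) is consistent with the paper's argument and only makes it more explicit.
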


\begin{proof}
(1) O1 holds with the one widened unit. O2 holds with the unknown
view $v$ as each key's existential witness. Here $v$ lies in the
bracket by the uncertainty bound, and validity at $v$ is A-engine
with A-engine-scope. O3 and O5 are unchanged. The contract holds,
Theorem~\ref{thm:cut} applies, and the latest high is $e$. (2) is
Theorem~\ref{thm:discard} at this plan, with consumption from at
or before $b$. (3) is immediate. Assumptions A-cert and A-scope give $v = p = r$, and
Proposition~\ref{prop:splice}'s point bracket at $r$ is exactly
the member of this family with $b = v = e = r$, the bound already
a point.
\end{proof}

What degrades is T3's shared-state evidence. The latent view
$v$ still satisfies Corollary~\ref{cor:shared}'s equations
mathematically, but its operational exhibition is a meta-level fact
rather than evidence available to the capture. Under the available
recorded evidence, the interval member is therefore placed at T2 and
receives the canonical trajectory from the known coordinate $e$,
its high edge, by Lemma~\ref{lem:latest-high}. The point member is T3
because its stated preconditions bind the read view to the recorded
coordinate. Lacking that recorded exact binding adds deduplication work
across the uncertainty window. It does not remove frontier equivalence
or the post-$e$ canonical trajectory. The emitted result remains
final-only and additionally depends on the empty start, observation,
close, and exact-survivor conditions.

\begin{example}[The degradation bracket at work]\label{ex:degrade}
A backup restores key $a$ at value $4$, read at some view between
the backup's start and its recovery point. After the view, the
source updated $a$ to $6$ before the frontier. The uncertainty
window spans both events. The discipline of
Theorem~\ref{thm:degrade}(2) buffers the restored row $a \mapsto
4$, encounters the in-window update to $6$, discards the buffer
entry, and lets the event through. The replay yields $6$, the
source's frontier value, with the stale restored value never
surfacing. The untrusted coordinate cost this
reconciliation pass over $(b, e)$, and nothing else.
\end{example}

\paragraph{Result.}
A native dump with a trusted point and the stated engine,
scope, DDL, and representation conditions reaches T3. If the point
is untrusted but the available recorded evidence bounds the true view
inside an uncertainty interval, window-discard over that interval
places the plan at T2 and supplies a canonical start at the
interval's high bound. If neither
the point nor such a bound is justified, this placement does not
apply.

\FloatBarrier

\section{Generalized DBLog at a Glance}\label{sec:practitioner}

This section is for practitioners and maintainers who operate, build, or
extend DBLog, Debezium, Flink CDC, coordinate-bound capture, or native-dump
workflows. It provides an operational entry to the framework without
requiring the formal proofs. Proofs are contained in \S\ref{sec:instances}, 
while architectural boundaries, datastore assumptions, and out-of-scope 
features are cataloged in \S\ref{sec:fence}.

\subsection{What Generalized DBLog is}
\label{sec:practitioner-meaning}

\emph{Generalized DBLog} is a correctness contract for interleaving copied
database rows with an active change log. It covers the source database and
capture side of a CDC pipeline, up to the emitted output stream. It is not a
connector or an additional runtime layer. Instead, it defines what must be 
true for the output to reconstruct the source state without losing updates 
or resurrecting deleted data.

Under this contract, a connector implementation organizes its work around a 
declared scope, which is the set of row keys it intends to cover. The protocol 
divides that scope into copy units like
chunks, tables, or key ranges. For each unit, it identifies the keys the
unit owns, places the read between a low and high log position, and
reconciles changes in that interval with the copied values. The unit
closes only after the changes needed for reconciliation have been seen.
An exact-position read is the special case where the low and high
positions are the same. A whole dump can be one large unit, while parallel
capture can use many non-overlapping units with different intervals.
Figure~\ref{fig:windowing} earlier in this paper shows how the original
DBLog algorithm realizes this pattern with two watermark events around
one chunk read.

When the contract holds, replaying the combined copy-and-log output
reconstructs the exact state obtained by applying the source's committed
history up to one log position, key by key. We call that position the
observation frontier, forming a virtual cut across the whole captured scope.
The keys do not need to be read at the same time. Figure~\ref{fig:cuts} makes this concrete for a four-key scope. At $f$,
only the first unit has closed, so the reconstruction covers $k_1$ and $k_2$
while making no claim yet for the open unit (Lemma~\ref{lem:closed}). At $f'$,
both units have closed, and the four values at that cut match source-log
replay at $f'$, even though no database snapshot was taken there. At $f''$,
the reconstruction also reflects a later committed change to $k_3$. Once
all units have closed, the log-wins reconstruction continues to match
source-log replay at every later observed position.

\begin{figure*}[t]
\centering
\includegraphics{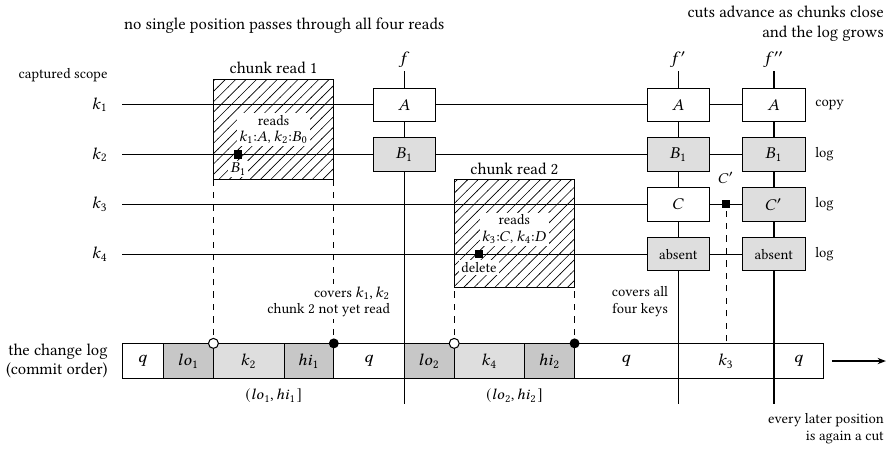}
\caption{Virtual cuts on a four-key scope. Each copied value is valid
somewhere between its unit's low and high log positions. Its exact
position is not observed and may differ by key. Reading down a cut
shows the reconstructed state for each covered key, where white
boxes indicate values retained from the row copy and gray boxes indicate
values updated or deleted by the change log. In the log, $q$ marks
committed events for keys outside the captured scope.}
\label{fig:cuts}
\Description{Four horizontal key lanes above a commit-ordered log.
Two bracketed chunk reads occur at different positions. Three vertical
cuts labeled $f$, $f'$, and $f''$ first cover two keys, then all four
keys, and finally all four keys after a later logged update. Gray
state boxes contain values supplied by the log, and white boxes contain
values supplied by the copy.}
\end{figure*}

The contract specifies correctness conditions, not operational mechanics.
Whether a pipeline relies on write-based markers (classic DBLog), sampled
GTID sets or transaction snapshots (Debezium read-only), parallel chunk
readers (Flink CDC), or
exact snapshot coordinates (MariaDB point-splice), they all fulfill the exact
same contract.
Operational choices like chunk size, bracket width, write-free
watermarking, and parallelism affect cost and throughput, but they do not
weaken the guarantee when the core conditions hold.

A practitioner can assess a protocol by asking five questions. Which
committed history is authoritative? What keys are in scope? Where does
each read belong in that history? How are racing log events reconciled
with the copy? Up to which log position is the reconstructed state
guaranteed to match the source? The following two subsections divide the answers between the
database and log on one side, and the connector implementation on the
other. Together, the two lists cover the framework's common source and
capture requirements in operational language. The formal statements
remain in Sections~\ref{sec:setting}, \ref{sec:contract},
and~\ref{sec:merge}. Individual protocol mappings may add requirements.

\subsection{What the datastore and change log must provide}
\label{sec:practitioner-source}

The framework requires several baseline guarantees from the database 
and its change log. If a database engine does not provide them natively,
the connector must bridge the gap.

\begin{enumerate}
\item \textbf{A commit-ordered log.} Every committed change appears exactly 
once in the source history, in commit order. Events from active or aborted
transactions do not appear. Only after a transaction commits does its
complete ordered block of row changes enter this history.

\item \textbf{Log positions with a precise meaning.} A scalar offset,
binlog position, LSN, or GTID set must identify exactly which prefix of
the source history it represents. The chunk reads, bracket
positions, and change-log events must all refer to that same history.

\item \textbf{A stable logical identity for each row.} The copy and log
paths must map rows to identical logical keys. A primary-key update is
modeled as deleting the old identity and inserting the new one. Tables
lacking primary keys require another unique identifier. A sortable 
key is required only when partitioning tables via ordered chunk scans.

\item \textbf{Complete row images for logged changes.} Change events must 
carry full row after-images so newer changes can replace older state. 
If the log emits only deltas with modified columns, letting the log 
override the chunk copy would lose the unchanged columns, leaving an 
empty downstream sink with incomplete rows.

\item \textbf{Committed, complete chunk reads.} The chunk scan must read 
committed data without silently skipping rows. Any row omitted by the scan 
must genuinely not exist at the time of the read. That row does not need to 
remain absent when the window closes, because a subsequent change-log insert 
will safely capture it.

\item \textbf{Sufficient log retention.} The database must retain its 
change log long enough for capture to process all events needed for reconciliation 
and continued replay. Continuous streaming connectors (like classic DBLog and Debezium) 
require the log to be retained from at or before each chunk's low watermark, while parallel 
connectors (like Flink CDC) need past log segments to remain readable 
so workers can fetch and fold changes into chunks. If required log history is 
purged prematurely, capture cannot safely reconcile the copy. Related recovery and 
failover guidance is detailed in \S\ref{sec:practitioner-lifecycle}.

\end{enumerate}

\subsection{What the connector must establish}
\label{sec:practitioner-implementation}

The connector implementation turns the source capabilities above into a complete handoff.

\begin{enumerate}
\item \textbf{A non-overlapping division of the scope.} The scope is 
the target set of keys to copy, whether a specific key range, a single 
table, or multiple tables. Every key in this scope must belong to exactly 
one unit. Units may be chunks, ranges, or full tables, but their coverage 
must neither overlap nor leave gaps. The scope also includes keys that 
are absent initially and may be inserted while capture runs.

\item \textbf{Bracketed reads for every chunk.} Each chunk has a low and
high log position. Each row key in the chunk's range must reflect committed
source state at some position inside that bracket. Different keys may match
different coordinates. A single read is sufficient, but an implementation 
can also issue multiple reads within the bracket.

\item \textbf{Matching data types and operations.} Because chunk reads 
and change-log events arrive through different database interfaces, the 
connector must deserialize both into a canonical format and matching types.
Operation meaning (inserts, updates, and deletes) must also be preserved 
during reconciliation rather than mapped into generic appends.

\item \textbf{The log must win over copied data.} Log events must always 
take precedence over older chunk reads. The reconciliation logic must ensure 
that stale copied rows never overwrite newer updates or resurrect deleted rows.

\item \textbf{No chunk closes before its window is fully processed.}
A chunk cannot emit its rows until all change events that raced the read have 
been reconciled. In streaming connectors (classic DBLog and Debezium), the 
connector tails the log starting at or before the low watermark, discarding 
buffered chunk rows as newer log events arrive, and releasing surviving 
rows only after consuming past the high watermark. In parallel connectors 
(Flink CDC), workers fetch the log segment between the low and high offsets
from the server's retained binlog to fold changes into the chunk. The live reader starts 
at or before every chunk's high offset. It emits an event only when its 
position is strictly beyond the high offset of the event's owning chunk.

\item \textbf{Final state equivalence of the output stream.} The output  
stream emitted by the connector does not need to match the historical 
change log event-for-event. However, replaying that emitted stream from an 
empty state must produce the exact same final state for every key as 
replaying the source log. This equivalence must hold across the entire 
handoff.
\end{enumerate}

These conditions establish the copy-to-log guarantee described above.
A production system that can restart or fail over has an additional lifecycle
responsibility. It must recover the cursor, buffer, gate, and close state
consistently, or fail closed and rebuild the incomplete unit. That
guidance is discussed in \S\ref{sec:practitioner-lifecycle}.

\subsection{How to read the observation frontier}
\label{sec:practitioner-frontier}

The observation frontier is a single log position for the entire captured 
scope. The theorem guarantees that replaying the copy and log up to this 
frontier matches the source change log, key by key. However, because a 
single committed transaction may modify multiple rows, a log coordinate 
can land in the middle of that transaction's row events.

For example, suppose keys $a$ and $b$ both start at $0$, and a transaction 
commits changes setting both to $1$. Its committed block of events is 
$[a \mapsto 1,\ b \mapsto 1]$. If replay stops after the first event, the 
reconstructed state is $a=1,\ b=0$. This state exposes no uncommitted data, 
but it is an intermediate step that never existed as an observable database 
state.

At any frontier, Generalized DBLog guarantees exact row-event replay. If 
the frontier also lands on a transaction boundary (known as \emph{transaction 
closure}), the reconstructed state is also a consistent transactional 
snapshot. In short, reading committed data ensures clean input, but 
stopping at a transaction boundary is what guarantees a transactional snapshot. 
Protocols that select frontiers from transaction-complete coordinates 
(such as MySQL GTIDs) satisfy this condition naturally.

A connector that requires a transactional snapshot must ensure its 
frontier aligns with a transaction boundary. However, this does not guarantee 
atomic downstream application. A sink database that applies events individually 
may still expose intermediate states to its own readers. Downstream sink 
behavior remains outside this result.

\begin{table*}[t]
\caption{Generalized DBLog protocol mappings. The detailed conditions
appear in \S\ref{sec:instances}.}
\label{tab:instances}
\footnotesize
\setlength{\tabcolsep}{3.5pt}
\renewcommand{\arraystretch}{1.08}
\begin{tabular}{@{}>{\RaggedRight\hyphenpenalty=10000\exhyphenpenalty=10000\arraybackslash}p{3.6cm}
                    >{\RaggedRight\hyphenpenalty=10000\exhyphenpenalty=10000\arraybackslash}p{6.05cm}
                    >{\RaggedRight\hyphenpenalty=10000\exhyphenpenalty=10000\arraybackslash}p{7.1cm}@{}}
\toprule
Protocol shape & Bracket and scope & Reconciliation \\
\midrule
\textbf{Classic DBLog} &
Chunk between two captured marker events. &
Window-discard. \\
\addlinespace
\textbf{Debezium signal-table}\newline(\texttt{insert\_insert},\newline\texttt{insert\_delete}) &
Chunk bracketed by an opening signal-table event and either a closing
insert or deletion of the opening row. &
Window-discard, where a matching in-window event discards the buffered copy. \\
\addlinespace
\textbf{Debezium MySQL}\newline\mbox{read-only} &
Chunk between exact complete executed-GTID prefix sets. &
Window-discard by membership in high minus low. \\
\addlinespace
\textbf{Debezium MariaDB}\newline\mbox{read-only} &
An empty difference gives a conditional point placement. A nonempty difference
is placed here only for an empty buffer. Populated nonempty cases remain open. &
Point splice at a confirmed quiet point. Otherwise window-discard with
no copied survivors. \\
\addlinespace
\textbf{Debezium PostgreSQL}\newline\mbox{read-only} &
Transaction snapshots taken before and after the chunk read. &
Window-discard over the sampled interval or a wider window determined by
transaction-ID comparison. \\
\addlinespace
\textbf{Flink CDC MySQL}\newline parallel chunks &
Final logical chunks with low and high binlog offsets. &
Range-merge each complete window, then gate the live suffix per split. \\
\addlinespace
\textbf{Coordinate-bound}\newline read or trusted dump &
One scope, or multiple scopes sharing one exact read coordinate. &
Copy-before-suffix point splice. \\
\addlinespace
\textbf{Bounded-uncertainty}\newline restore &
The entire restored scope equals one source state at one unknown point
inside justified bounds. &
Window-discard over the whole interval. \\
\bottomrule
\end{tabular}
\end{table*}

\subsection{How the protocol families fit}
\label{sec:practitioner-table}

Table~\ref{tab:instances} shows how each protocol family establishes a 
bracket and reconciles copied rows with the change log. The framework's 
correctness guarantees hold for a protocol when it satisfies both the baseline 
requirements from \S\S\ref{sec:practitioner-source}-\ref{sec:practitioner-implementation} 
and the protocol-specific conditions described below. The table classifies 
these connector architectures rather than certifying individual releases.

The protocols rely on three reconciliation rules:
\begin{enumerate}
\item \textbf{Window-discard.} The connector buffers copied rows and
forwards log events. An in-window event removes the buffered row with the
same key. Only the surviving copied rows are emitted when the window closes.
\item \textbf{Range-merge.} The connector folds in-window log events directly 
into the copied rows, emits the reconciled chunk when the window closes, and streams 
subsequent log changes individually.
\item \textbf{Point splice.} The zero-width window case ($lo = hi$). The copy is 
placed at its exact log coordinate, and live log streaming resumes immediately from 
that point. When capturing multiple tables via point splices, log streaming must begin 
at or before the earliest recorded coordinate.
\end{enumerate}

Under our baseline conditions, all three strategies produce the exact same final 
state as replaying the source log. Replaying the output stream matches the source 
database through the latest chunk's high position and continues to match as 
subsequent log events arrive.

\subsection{Conditions for the named protocols}

Under the stated datastore and implementation conditions, the analyzed DBLog,
Debezium, and Flink CDC protocol designs fit Generalized DBLog. These
mappings identify the exact properties that a connector release and engine
configuration must establish. The baseline
datastore and connector requirements apply to every row in Table~\ref{tab:instances}. 
The paragraphs below outline the specific handoff mechanisms and operational 
assumptions for each protocol.

\paragraph{Classic DBLog.}
The watermark table must be writable, both markers must be recognizable in 
the change log, and the chunk scan must cover its key range completely. 
The connector must consume past the high marker before emitting surviving chunk rows. 
Theorem~\ref{thm:classic} proves that this original watermark protocol 
satisfies the generalized contract.

\paragraph{Debezium signal-table modes.}
In the default signaling mode (\texttt{insert\_insert}), opening and closing records in a 
signaling table act as the low and high watermarks. The opening record 
must precede the chunk read, the closing record must follow it. The 
connector must not emit surviving rows until all change events within 
that window have been processed. Copied rows and change-log events must 
agree on key and value representations. Under \texttt{insert\_delete}, the opening 
insert marks the low watermark and its subsequent deletion marks the 
high watermark. Debezium converts that deletion into an internal close 
signal, provided the delete event's before-image retains the primary 
key~\cite{debezium_signalling,debezium_signal_delete_source,debezium_mysql_connector}.
The mode and these conditions are the same for the PostgreSQL
connector~\cite{debezium_postgresql_connector}.

\paragraph{Debezium MySQL}
Debezium's read-only MySQL mode uses \texttt{Executed\_Gtid\_Set} samples as 
chunk watermarks. Because MySQL documents non-atomic GTID externalization, 
each sampled GTID set must map to an exact, transaction-complete prefix 
of the change log~\cite{debezium_mysql_connector,mysql_gtid_lifecycle}. 
In addition, the chunk read must cover its assigned key range, and surviving rows 
must not be emitted until the window reaches the closing GTID sample.

\begingroup
\RaggedRight
\paragraph{Debezium MariaDB}
MariaDB's \texttt{GTID\_\allowbreak{}BINLOG\_\allowbreak{}POS} records one frontier per domain,
not MySQL's complete executed set. Debezium 3.6.1.Final samples this
vector and requests a reread when the high-minus-low GTID difference
is nonempty. The helper performs the reread only when its snapshot,
window-state, and buffer guards hold
~\cite{debezium_mariadb_connector,debezium_mariadb_readonly_source,debezium_mysql_readonly_source,mariadb_gtid}.
Under the same-history and monotone-prefix assumptions, the 
conditional bridge accepts an empty difference as a confirmed quiet
point only when the complete vector maps to the exact consumed prefix
and covers all relevant domains.
Under those assumptions, an empty difference gives a point splice for
either a populated or empty chunk. With a nonempty difference, an
empty buffer contributes no copied rows, and under the stated read,
observation, and window-discard rules, any concurrent insert remains
on the log path. These two analyzed cases use different proof
structures but yield the same base guarantee (\S\ref{sec:inst-readonly}).
These results do not cover a populated buffer with a nonempty GTID
difference. Neither analyzed case uses the
MySQL set-difference theorem or MySQL replica variables. Independently
advancing domains still require an explicit same-history,
prefix-preserving mapping.

\par\endgroup

\paragraph{Debezium PostgreSQL}
Debezium's read-only PostgreSQL mode requires PostgreSQL 13 or later. It
calls \texttt{pg\_\allowbreak{}current\_\allowbreak{}snapshot()} before
and after each chunk read, then compares streamed transaction IDs with
the sampled \texttt{xmin} and \texttt{xmax} boundaries to decide when the window opens and closes
~\cite{debezium_ddd8,debezium_postgres_readonly_source}.

Each sample must represent a complete prefix of the consumed change
log. When the low sample takes effect, it must include every transaction
the connector has already processed. The streamed transaction IDs and
sampled bounds must also use the same ID space. Under these
conditions, the connector's window covers the sampled interval and may
extend beyond it. The mode therefore yields the same base guarantee as
the MySQL mode.

Both PostgreSQL modes require update events to carry complete row
images. For tables with large column values, this requires
\texttt{REPLICA IDENTITY FULL} or an equivalent
normalization~\cite{debezium_postgresql_connector}.

For all three read-only variants, the connector must record the low coordinate
before starting the read and the high coordinate after the read completes.
Surviving chunk rows may only be released after processing all row events
belonging to the window's final transaction. Out-of-scope events and
server heartbeats needed to advance log coordinates must remain observable.

\begin{table*}[t]
\caption{Operational lifecycle guidance outside the copy-to-log proof.}
\label{tab:lifecycle}
\footnotesize
\setlength{\tabcolsep}{2.6pt}
\renewcommand{\arraystretch}{1.08}
\begin{tabular}{@{}>{\RaggedRight\hyphenpenalty=10000\exhyphenpenalty=10000\arraybackslash}p{2.2cm}
                    >{\RaggedRight\arraybackslash}p{3.15cm}
                    >{\RaggedRight\arraybackslash}p{4.1cm}
                    >{\RaggedRight\arraybackslash}p{3.45cm}
                    >{\RaggedRight\arraybackslash}p{4.05cm}@{}}
\toprule
Discipline & Source history and\newline recovery floor & Recoverable unit state and close condition & Restart or rebuild action & Safe purge and fail-closed rule \\
\midrule
\textbf{Window-discard} &
History from every open unit's low edge and from the continuing reader. &
Scope, low and high, survivor buffer, processed cursor, and whether every
event through the high has closed the unit. &
Resume from one consistent checkpoint, or discard the incomplete unit
and reread it under a new bracket. &
Purge only below every durable unit and reader floor. If required
history is gone, do not release survivors. Rebuild the unit. \\
\addlinespace
\textbf{Range-merge} &
Complete low-to-high backfill for each unit and a live floor at or
before every high. &
Logical ownership, offsets, merged result, processed cursor, per-key
gate, and completion of the high-edge merge. &
Resume buffer and gate together, or discard the incomplete merged unit
and reconstruct it from retained history. &
Purge only when the merged result and gate durably represent the
consumed range. If either cannot be recovered, fail closed and rebuild. \\
\addlinespace
\textbf{Point-splice} &
History from the splice coordinate for the continuing suffix. &
Coordinate-bound scope, copy completion, suffix cursor, and handoff
state. &
Resume only while the same read view or completed copy remains valid.
Otherwise obtain a new bound copy and splice point. &
Purge below the splice only after the copy and suffix floor are durable.
If the view or suffix is unavailable, do not claim the splice. \\
\bottomrule
\end{tabular}
\end{table*}

\paragraph{Flink CDC MySQL}
Flink CDC partitions tables into chunk splits and uses range-merge reconciliation. 
The connector folds change-log events directly into each chunk split and emits 
the merged rows at the split's high binlog offset. A live reader then streams 
subsequent change events. To prevent duplicate emissions, Flink uses a per-split 
gate that withholds live stream events whose positions fall at or before that split's 
high offset. Chunk splits can be processed in any order, provided the split key is immutable. 
Updating a split column mid-capture can route rows across splits unpredictably.

Replay equivalence requires Flink's changelog to preserve row operation metadata 
(RowKind for inserts, updates, and deletes). Append-only formats that discard operation 
types before reconciliation do not satisfy the contract. Backfill-skipping modes and tables 
lacking primary keys are not covered. Pipeline restarts, dynamic rescaling, failover, and 
sink-side exactly-once processing remain separate operational concerns.

\paragraph{Native dumps and physical restores.}
A point splice models native dump tools (such as \texttt{mysqldump}, \texttt{pg\_dump}, 
and \texttt{mariadb-dump}) that bind an engine-consistent view to an exact recorded log 
coordinate. Because the window width is zero, no reconciliation is needed. Log consumption 
begins at or before that coordinate. In contrast, a bounded restore models physical backups 
and storage snapshots where the restored state is engine-consistent at an unknown coordinate 
within recorded bounds. The connector streams the log from the backup's start, using 
window-discard to drop any restored rows modified before recovery completes. Replay is exact, 
and tracking the live source is guaranteed from the high bound onward.

\subsection{Retention, restart, and failover guidance}
\label{sec:practitioner-lifecycle}

Table~\ref{tab:lifecycle} gives operational guidance for preserving the
source and emission conditions across crashes, restarts, and database failovers.
Its retry and rebuild actions must account for prior output and every key
affected by lost history, as required in \S\ref{sec:practitioner-source}.
These are conditions for a recovery design, not a proved lifecycle protocol.

Safe crash recovery requires coordinating database log retention with 
connector checkpoints. The database must retain change logs back to the 
oldest position required by any open chunk or active reader. Purging 
logs ahead of that position prevents reconciliation, forcing the connector 
to fail closed and reread the affected units. When persisting connector 
progress, saved log offsets must match the state of in-memory chunk 
buffers and merge gates, using a transaction, write-ahead journal, or 
idempotent replay to survive worker restarts without drift.

Following a database failover, replication coordinates must map to the same continuous 
history or a verified, prefix-preserving continuation on the new primary. If coordinate 
continuity cannot be guaranteed, the connector must fail closed and rebuild the affected 
chunks rather than risk data corruption across divergent logs.

\FloatBarrier

\section{Making the 2020 Requirements Explicit}\label{sec:audit}

The 2020 DBLog paper listed two database requirements for capture, namely  
commit-ordered change events and non-stale reads~\cite{andreakis2020dblog}. 
While our framework validates the original algorithm under its stated assumptions,
it also clarifies that
these requirements were underspecified. On the read side, value
freshness alone is not enough. The chunk scan must completely cover its assigned 
key range without silently skipping rows. If an unchanged row is missed by the 
scan, no change-log event will arrive to repair it.
On the write side, watermarking implicitly requires table provisioning, write 
permissions, log coverage, recognizable marker events, and an ordered log. 
The generalized contract formalizes these properties as explicit read 
obligations (O2 and O3, \S\ref{sec:contract-obligations}) and watermark 
preconditions (W1 through W5, \S\ref{sec:inst-classic}), examined in 
Sections~\ref{sec:audit-read} and~\ref{sec:audit-write} below.

\subsection{Read completeness}\label{sec:audit-read}

The 2020 paper states its chunk-read requirement twice, at different 
strengths. The strong form requires that ``the selection executes on a specific 
position of the transaction log.'' That means a single coordinate, a statement 
snapshot, which satisfies O2 with every key witnessed uniformly and is 
sufficient by Theorem~\ref{thm:classic}. The text then generalizes, requiring 
that ``the chunk selection sees the changes that are committed before its 
execution,'' names this capability \emph{non-stale reads}, and includes only 
this weaker condition in its Database Support section as the portability 
criterion~\cite{andreakis2020dblog}. The two statements are not equivalent, 
and this difference is precisely what Theorem~\ref{thm:cut} formalizes.

Per-row freshness already handles values that change during the scan. If a key 
changes inside the window, its log event removes or replaces the buffered row. 
If a key does not change, a value reflecting the source state at any coordinate 
in the bracket remains valid. This is why the theorem does not require all keys 
to share one read coordinate.

Scan completeness addresses a different requirement. Freshness constrains the 
values of rows that return, but says nothing about which rows return. Suppose a 
row remains present throughout the window and the scan silently skips it. Because 
the row never changed, the change log contains no event to supply it, and the connector 
advances to subsequent chunks without ever rescanning that range. The sink remains 
missing that row until some later write happens to update it. Even if the total refresh represents this omission as
absence to satisfy O3 totality, O2 bracket validity still fails because the
row remained present throughout the entire bracket. Together, O2 and O3
require every window-invariant present key to be returned with its valid value
(\S\ref{sec:contract-obligations}).

The vendor documentation shows why this distinction matters. On MySQL and 
PostgreSQL, a read-committed select executes against a single statement 
snapshot~\cite{mysql_consistent_read,postgres_transaction_iso}. The 2020 
deployments therefore satisfied the stronger condition automatically. Under 
SQL Server's default locking read committed, shared row locks are released as 
the scan advances. Microsoft's locking guide documents that a concurrent key 
update can move a row so that it appears twice or is skipped 
entirely~\cite{mssql_locking_guide}.

That documented case updates the row during the scan, meaning its log event 
would repair the omission under Case~A of Theorem~\ref{thm:cut}. While it is not a direct 
example of an unrepaired missing row, it establishes that lock-based scans can 
silently skip rows as concurrent updates move them across pages. While we do not 
claim a live reproduction of data loss, the architectural conclusion remains 
clear: ``non-stale reads'' alone does not guarantee the scan completeness that 
DBLog relies on, whereas an engine statement snapshot guarantees it by design.

The complete condition is the contract's combination of O2 with O3, with a 
statement snapshot serving as the canonical sufficient condition and the strong 
2020 formulation as its uniform special case. Read against this, the 2020 text 
was correct for the engines it shipped on, but underspecified as a general 
portability criterion. The limitation lay in the portability criterion itself.
An isolation-level label does not by itself establish complete per-key bracket
validity, for which an engine statement snapshot provides one proven sufficient
condition.

\subsection{Watermark preconditions}\label{sec:audit-write}
The same Database Support section in the 2020 paper lists only commit-ordered 
change events and non-stale reads. The mechanism description, however, creates a 
dedicated watermark table and brackets every chunk with two updates whose events 
must appear in the change log~\cite{andreakis2020dblog}. Writability is therefore 
required by the mechanism but absent from the compatibility checklist. This 
matters for the non-relational stores the original text intended to cover,
and also for read-only replicas. They can satisfy both listed conditions
while being unable to
run the watermark path.

Section~\ref{sec:inst-classic} makes these requirements explicit. Conditions W1 
through W5 state what the watermark mechanism actually requires, from table 
provisioning and write access to capture coverage, recognizability, and 
same-instance ordering coupling. None of them is new. Each is drawn directly 
from the 2020 paper's own mechanism description. What is new is recognizing them 
as preconditions of one specific bracket construction rather than universal 
requirements of change-data-capture. Writability is a property of the chosen 
bracket mechanism, not of the datastore. The write-free instance of 
\S\ref{sec:inst-readonly} satisfies the same contract obligations with no writes 
whatsoever, proving constructively that the 2020 watermark writes were simply one 
way to build brackets rather than an inherent requirement for correctness.

Both clarifications lead to the same conclusion. The original watermark 
mechanism remains sound, and the generalized contract cleanly separates that 
mechanism from the conditions needed to prove it correct. 
Section~\ref{sec:instances} then applies those same conditions to mechanisms 
that construct brackets without writes.

\section{Mechanization}\label{sec:mech}

The normalized mathematical content of every numbered definition is
represented, and every numbered result from the cut theorem through the
five placements and the degradation pair is proved in a machine-checked
Isabelle/HOL development comprising about 5{,}000 lines of definitions and
structured proofs, checked with Isabelle2025-2
over the standard library. Coordinate fidelity, O4, raw representation
agreement under O5, and connector-release conformance are treated as 
instance evidence outside the prover. The development grounds the
title's ``Verified'' for the normalized theorem chain. The body's mathematics
stands alone. Each printed proof can be read against its mechanized
counterpart or without it. The correspondence is organized by the
paper's own definition and theorem numbers.

The three verification layers have different scopes:
\begin{description}
\RaggedRight
\item[Isabelle/HOL.] The full development covers the normalized
mathematical form of every numbered definition and result, including
the five placements and degradation. It contains about 5{,}000 lines and was checked by a clean session
build. This is the paper's general machine-checked proof.
\item[Lean~4.] The independent re-proof covers the contract core,
cut theorem, the first three corollaries, window invariance,
window-discard equivalence and monotonicity, and the
no-shared-read-point witness over the core library. It is not a port of
every placement or merge.
\item[TLA\textsuperscript{+}/TLC.] Five models cover the classic,
read-only, parallel, dump/degradation, and heterogeneous-composition
protocols. Across 25 bounded runs they explore up to three
keys, four events, and two chunks. The largest search has about 74
million distinct states. This is bounded evidence, not proof.
\end{description}

The classic instance imports the published formalization of the original
algorithm through a session refactoring. Shared source and replay definitions
are moved into a base session, and two identical virtual-cut predicates
are consolidated. The wellformedness clauses and mathematical results
are unchanged from the published development
archived at~\cite{andreakis2026virtualcuts_artifact}.
Theorem~\ref{thm:classic} adopts those wellformedness clauses directly,
and its source-state conclusion agrees with that development
(\S\ref{sec:inst-classic}).

The development is held to the submission standards of Isabelle's
Archive of Formal Proofs. It is axiom-free, with no statement left
unproved and no proof step appealing to an external oracle.
Definitions are conservative constructions, and the proofs are
structured, human-readable Isar under a typeset proof document.
The Isabelle session builds the complete chain, including the
imported corpus, from a clean environment with permissive mode
disabled. A clean build therefore checks every stated result rather
than reusing a cached proof image.

The theorems are exercised as well as proved. Every countermodel this
paper presents exists in the development as a constructed instance.
The no-shared-read-point instance of Proposition~\ref{prop:torn}, the
no-early-onset composite of Proposition~\ref{prop:sync}, and the
window-narrowed variant of Proposition~\ref{prop:withhold} each carry
their failure proofs there, establishing no single shared read coordinate, no early
onset, and a wrong replay, respectively. The degradation bracket of
Example~\ref{ex:degrade} is likewise mechanized as a concrete instance, 
and the positive theorems are instantiated at concrete plans in the same 
style rather than left abstract.

\paragraph{Where the effort went.}
The proof effort was not spread evenly across the development. The
cut theorem itself was the cheap part. Its per-key argument is
short in this paper and not much longer in the mechanization. Most
of the effort went into the layers that keep it short. We kept the
cut theorem free of observability, per-unit finiteness, and width
assumptions, so those concerns had to be proved at the stream layer,
where they apply. The two buffering disciplines account for
roughly a third of the development's lines, more than the contract
and the cut theorem together. Most of that is in per-key emission-shape
lemmas, duplicate-free survivor enumerations, and position tags.
The countermodels were the other large expense. The theory that
only constructs witnesses and their failure proofs is as long as
the theory that holds the cut theorem. The remaining difficulty
was discipline rather than proof. The classic formalization is
imported as a separate session, so we matched its conventions exactly, with
coordinate ties resolved by position and prefixes interpreted
at-or-before, instead of adjusting them to fit. We consider the
distribution itself informative. The mathematics was hardest
where production systems differ, at the emissions and their
equivalences, and not at the cut.

\paragraph{Model checking as a second check.}
Five TLA\textsuperscript{+} specifications model the classic
watermarked loop, read-only executed sets, parallel range-merge,
dump degradation, and a heterogeneous two-unit composite. They use
the same half-open windows, per-key validity, and emission rules as
the paper. Across 25 bounded runs, TLC checked frontier
replay, per-key monotonicity for the chunked emissions, executed-set
window membership, the trusted-splice and safe latest-high trajectories,
degradation, and
mixed composition. These checks are bounded. The general result comes
from the machine-checked proof above.

Mutated runs recover the important failures. A forced shared read
instant rejects Proposition~\ref{prop:torn}'s per-key-witness
construction. A
window-only withholding rule produces Proposition~\ref{prop:withhold}'s
wrong replay. Distinct splice coordinates refute an early common onset
as in Proposition~\ref{prop:sync}. A false trusted recovery point
produces a stale splice, while a claim beginning inside the widened
uncertainty bracket fails as expected. These trajectory probes refute
only the stronger earlier-onset claims. They do not refute the safe
canonical trajectory beginning at the latest high edge. All contract
runs append transactions atomically and check that no uncommitted event
is visible. In one committed-only mutation, the current executed set is
mapped to an interior prefix within an already
committed transaction block. This breaks Theorem~\ref{thm:oracle}'s
membership test without introducing an in-progress event. In a separate,
explicitly out-of-contract probe, one transaction is artificially exposed
in two pieces. The committed-only invariant fails immediately and the
membership test can fail after a watermark is sampled in between. The
boundary-valid window-discard replay survives both mutations on the small
bounded searches, but this diagnostic robustness does not admit
uncommitted input into the theory. TLC establishes bounded counterexamples 
when these assumptions are omitted, rather than a general theorem about replay. The models
carry latent read positions as state, so TLC can evaluate the resulting
equations. They do not model whether a running capture knows or can
operationally exhibit such a position.

\paragraph{A second prover.}
The contract core, Theorem~\ref{thm:cut},
Corollaries~\ref{cor:shared}, \ref{cor:cont}, and~\ref{cor:mix},
and the no-shared-read-point instance of Proposition~\ref{prop:torn},
are also re-proved in Lean~4 over its core library alone, so that a
second, independent proof kernel checks the central results.

\paragraph{Artifact availability.}
The accompanying artifact contains the Isabelle/HOL development and its
imported dependencies, the Lean~4 formalization, and the five
TLA\textsuperscript{+} models, together with build instructions and
verification records~\cite{framework_artifact}.
The artifact's Zenodo DOI is
\href{https://doi.org/10.5281/zenodo.22643866}{\nolinkurl{10.5281/zenodo.22643866}}.

\section{Related Work}\label{sec:related}

ARIES provides a useful comparison from database recovery~\cite{mohan1992aries}. 
Its fuzzy 
checkpoints record transaction and dirty-page metadata while execution 
continues, without requiring dirty pages to be flushed. Restart uses 
page LSNs to guide redo and performs transaction undo where required. 
Generalized DBLog also reconciles state with ordered history, but works 
with committed row states and full post-images. O2 requires bracket-local 
validity without recording a version for each copied component. The 
comparison is therefore between reconciliation problems, not identical 
checkpoint or replay mechanisms.

Online schema-change tooling runs the same interleave inside a
single database. gh-ost builds a ghost table, copies the original
into it incrementally, and applies concurrent changes captured
from the binary log onto the ghost before a final
cut-over~\cite{ghost_docs}. pt-online-schema-change copies in
chunks while triggers on the original forward concurrent
modifications into the copy~\cite{ptosc_docs}. The anatomy of \S\ref{sec:contract} is recognizably present, a copy, racing changes, reconciliation, and a cut-over, with the log as the reconciliation channel in one tool and triggers in the other. Both
tools are described here, and neither is claimed. Whether either
reconciliation fulfills the equivalence obligation of
\S\ref{sec:contract-replay} is left open, and the trigger channel
in particular is not a commit-ordered log, so even admissibility
(\S\ref{sec:setting-coords}) would need its own argument.

The splice shape recurs outside databases. Kubernetes' list-then-watch
protocol begins by listing a collection, takes the returned
\texttt{resourceVersion} as its coordinate, and watches for changes
after that version. When the server no longer covers that version it
answers with 410 Gone, and the client discards its state and lists
again~\cite{k8s_api_concepts}. The client lists, records a coordinate, resumes from the coordinate, and re-bootstraps when the log no longer reaches it. This is the point splice of \S\ref{sec:merge-point} in an API-server setting, and the re-list rule handles a failure this paper treats as an observability failure (S-OBS). The shape, not
membership, is what the citation records. Whether
\texttt{resourceVersion} admits a faithful prefix reading is not
examined here. Within databases, PostgreSQL's logical replication
bootstraps each table with its own tablesync worker before
handoff to the shared apply
stream~\cite{postgres_logical_replication}.
Section~\ref{sec:tiers-example} names the resulting composite after
exactly this shape, and Proposition~\ref{prop:sync} gives its
composition rule. Vitess VReplication copies key ranges and gates the
event stream per copied range~\cite{vitess_vreplication}. Its
discipline is the key-range entry of \S\ref{sec:merge-open},
stated there as an open obligation.

In method, this paper belongs to a verification literature that rarely treats this subject. Machine-checked systems artifacts run from
crash-safe storage, where specifications carry explicit crash
conditions~\cite{chen2015fscq}, to distributed-system implementations
proved correct against refinement
specifications~\cite{hawblitzel2015ironfleet}. Industrial
practice, separately, checks designs by bounded exploration of
formal specifications~\cite{newcombe2015amazon}. This paper does
not verify an implementation. It verifies a contract and proves
conditional placements for documented protocol designs. In this
respect it is closer to specification frameworks that characterize
a database guarantee by what clients can observe~\cite{crooks2017seeing}
than to verified implementations.

The nearest works are this paper's own predecessors. The classic
algorithm's formalization~\cite{andreakis2026virtualcuts} proves
the certified virtual-cut object this framework incorporates at T4,
and the state in Theorem~\ref{thm:classic} is precisely its
replayed row-event state (\S\ref{sec:inst-classic}). The delivery-side
companion~\cite{andreakis2026dualwrite} studies recovery and sink acceptance
after capture, outside the scope of this paper (\S\ref{sec:fence}). The 2020
paper~\cite{andreakis2020dblog} remains the family's operational
source, and \S\ref{sec:audit} completes its requirements in the
framework's vocabulary rather than revisiting its design.

\section{Scope and Limitations}\label{sec:fence}

The framework defines the capture contract up to the emitted event stream, 
proving that replaying this stream from an empty initial state reconstructs the source database. 
While the formal theorems assume an empty state, the emitted stream is also expected to work 
when updating an already populated sink. Copied rows and logged changes update the corresponding 
rows in the sink. However, a row that no longer exists at the source can remain in an already 
populated sink if no delete is sent for it. In practice, clearing the target scope beforehand or 
emitting explicit tombstones resolves this difference. 

Extending the formal proofs to verify such populated-sink repairs, 
alongside transport, duplicate delivery, and exactly-once application, remains outside the current 
framework. Delivery is treated separately in~\cite{andreakis2026dualwrite}. Schema and DDL 
evolution are excluded by the stable logical identity space. The dump requirements in 
\S\ref{sec:inst-dump} reflect this limitation as a documented restriction on concurrent DDL.

Checkpoint, restart, rescaling, and failover protocols are not proved.
The operational table in \S\ref{sec:practitioner-lifecycle} is guidance
about preserving protocol invariants, not a verified lifecycle result. A 
resumed run obtains the theorem's guarantee only when its recovered plan
and output again satisfy the same source and emission conditions.

The analysis is based on the 3.6 documentation for both projects and the
source code of Debezium 3.6.1.Final and Flink CDC 3.6.0. These results evaluate protocol
designs rather than certifying individual software releases. A connector realizes these 
guarantees whenever its runtime execution fulfills the modeled bracket and emission rules 
under the stated preconditions.

One common requirement across all protocols concerns scope discovery. Enumerating 
tables and key ranges from the database catalog is itself a read that must correspond to a valid log 
coordinate (\S\ref{sec:setting-fence}). For native dumps, this enumeration is bound directly to the tool's 
recorded server-wide snapshot position.

The paragraphs below catalog the specific operational assumptions and open evidence questions 
for each analyzed protocol.

\paragraph{Debezium signal-table modes.}
The conditional bridge covers the documented \texttt{insert\_insert}
and \texttt{insert\_delete} strategies. It requires both window-edge
events to enter the consumed history in the correct order around the
chunk read, together with complete chunk read results (A-read). 
The connector must discard buffered rows matching in-window change events 
(A-discard), withhold surviving chunk rows until the closing signal is consumed 
from the log, and observe all intermediate log events without gaps (S-OBS). 
The signaling collection must also satisfy the watermark
preconditions (W1--W5, \S\ref{sec:inst-classic}). Enabling the
source signal channel does not itself prove that its collection is
captured or its records recognizable. Under \texttt{insert\_delete},
the deletion supplies the close only when its before-image retains the
signal row's identifier. No Debezium release is verified by this bridge
~\cite{debezium_signalling,debezium_mysql_connector,debezium_signal_delete_source}.

\paragraph{Debezium MySQL}
The direct set bridge requires every accepted
\texttt{Executed\_Gtid\_Set} value to map to an exact gap-free prefix
of complete transactions in the same history used by the chunk read
and consumed events (A-linear, \S\ref{sec:inst-readonly}). Because MySQL
explicitly documents non-atomic GTID externalization
~\cite{mysql_gtid_lifecycle}, the framework does not infer the
needed semantic prefix property solely from when the status variable
becomes visible. Complete chunk reads (A-read below), raw-to-projected
GTID alignment, stable identity (A-scope), full row images (S-IMG), observation of
state-advancing filtered events (S-OBS), and window-discard timing (A-discard)
remain explicit conditions.

\paragraph{Debezium MariaDB}
Unlike MySQL's executed GTID set, MariaDB's \texttt{GTID\_\allowbreak{}BINLOG\_\allowbreak{}POS} 
records a vector of per-domain frontiers. The conditional bridge maps this vector to a quiet 
point when its high-minus-low difference is empty across all domains, assuming the vector 
maps to a single consumed-history prefix. 
When that difference is nonempty, tagged 3.6.1.Final source code
requests a reread. The shared helper requires the snapshot to be
running, the deduplication window to be open, and the buffer to be
populated before the reread occurs
~\cite{debezium_mariadb_readonly_source,debezium_mysql_readonly_source}.
The code therefore does not establish that every populated attempt
with a nonempty difference is retried. If the buffer is empty, there
are no copied rows to release. Under the conditions in
\S\ref{sec:inst-readonly}, processing the log still gives correct replay.
The quiet-point and empty-buffer cases do not cover every path through
the tagged implementation. A populated buffer with a nonempty
difference requires further analysis of how the sampled coordinates
bound the read and how the connector reconciles it with logged changes.
This limits what the present analysis establishes about the connector.

\paragraph{Debezium PostgreSQL}
As with MySQL's executed-GTID set, we assume that each
\texttt{pg\_\allowbreak{}current\_\allowbreak{}snapshot()} sample
represents exactly the transactions in one complete prefix of the
consumed log (A-linear, \S\ref{sec:inst-readonly}). The low sample must
include every transaction already processed by the consumer when it takes
effect. Streamed transaction IDs and sampled snapshot bounds must
share the same ID space. Chunk reads must be complete (A-read), row
identities must remain stable (A-scope), and log events must carry full
row images (S-IMG). The connector must also observe all required log
events (S-OBS) and follow the window-discard release rules (A-discard).

\paragraph{Shared chunk-read condition.} Official documentation for Debezium and Flink 
CDC does not fully specify the scan completeness required by A-read (\S\ref{sec:inst-readonly}, \S\ref{sec:inst-parallel}). 
Because standard transaction isolation labels like \texttt{READ COMMITTED} do not inherently 
guarantee that a query will avoid skipping rows mid-scan, our mappings treat complete chunk 
reads as an explicit engine precondition rather than inferring it from an isolation-level name.

\paragraph{Flink CDC}
While the abstract capture plan satisfies the contract independently, 
physical emitted-stream equivalence requires a correct streaming handoff. 
The connector must ensure full log retention across each chunk's window (A-window), 
faithful merging of updates and deletes at the high watermark (A-merge), complete 
log observation (S-OBS), and per-split gating that withholds live stream events at 
or before the split's high offset to prevent duplicates (A-handoff, \S\ref{sec:inst-parallel}). 
Chunk reads must additionally satisfy complete scanning (A-read). 
If a table is partitioned using a non-primary key column, that split key must remain 
immutable during capture to prevent rows from moving between splits. 
In addition, Flink's default deserializer preserves insert, update, and delete semantics 
in \texttt{RowKind}, whereas append-only output formats erase operation types and are 
covered only through an explicit normalization proof~\cite{flinkcdc_source_code}. 
Connector checkpointing, dynamic rescaling, failover, and end-to-end exactly-once 
sink application remain outside the formal scope. Modes that skip backfill reads or 
capture tables lacking primary keys are also omitted from our mappings.

\paragraph{Uncertain restores.} The bounded restore theorem assumes that the restored 
dataset represents a single engine-consistent source state at some unknown coordinate 
within recorded log bounds. It does not cover inconsistent or torn file copies assembled 
from uncoordinated reads across different times. While an exact source state exists 
at some latent coordinate within the window, the connector cannot identify it from the 
recorded evidence alone. Consequently, tracking the live source is guaranteed only 
from the known high bound onward.

\paragraph{Session-bound snapshot coordinates.}
In MariaDB point-splice capture, the snapshot status variables
(\texttt{Binlog\_\allowbreak{}snapshot\_\allowbreak{}file} and
\texttt{Binlog\_\allowbreak{}snapshot\_\allowbreak{}position}) must be queried from the exact
session holding the consistent snapshot before that view is closed or
replaced~\cite{mariadb_binlog_status_vars,mariadb_snapshot_session_source}.

\paragraph{Open reconciliation designs.}
Section~\ref{sec:merge-open} analyzes four additional reconciliation mechanisms,
namely key-range gating, version-max, re-read normalization, and rewind-by-preimage.
While their formal proof obligations are defined in our model, establishing
their full equivalence theorems remains open.

\section{Conclusion}\label{sec:conclusion}

Generalized DBLog provides a common basis for extending and combining capture
methods. Its central requirement is that copied rows and logged changes
reconstruct the selected data correctly once copying and reconciliation are
complete. The paper makes explicit which properties must come from the source
and which must be established by the connector, so a new way of obtaining a
copy can be assessed against the same contract.

We prove that the original watermarked DBLog protocol, Debezium's signal-table 
modes, Debezium's MySQL and PostgreSQL read-only protocols, and Flink CDC's
parallel-chunk protocol fulfil this contract. For Debezium's MariaDB read-only protocol, the proof covers 
the quiet-point and empty-buffer cases, and a populated buffer with a nonempty
GTID difference remains open. The proofs state the source
guarantees and connector behavior required by each. Native dumps tied to exact 
log positions and backups whose log position lies within known bounds, satisfy 
the same contract.

For a read-only variant, the implementation must establish how sampled source
information identifies the changes that overlap its reads. For parallel
capture, it must preserve non-overlapping key ownership and prevent earlier
log events from reverting the reconciled data to an older state. A native dump must be paired with
the right place to continue in the log, and its keys and values must agree
with those decoded from logged changes. These checks identify what needs to be
established when the capture mechanism changes. Chunk size and parallelism can
be chosen according to source load, buffering, and retention limits while
preserving those requirements.

The composition result also allows one capture to use different methods for
different tables or key ranges. A native dump of one table can be combined
with chunked reads of another. Each method establishes its own handoff to the
same source history and their assigned keys remain non-overlapping. The
combined output then preserves the correctness guarantee. This gives
implementations room to choose a suitable copy method for each part of the
database.

The machine-checked development provides reusable proofs for these extensions 
and combinations. Delivery, crash recovery, and downstream application need to 
be verified separately. For maintainers and connector
authors, the contract provides a concrete basis for checking an existing
capture path, changing how it operates, or adding a new one while preserving
the correct treatment of copied rows, updates, and deletes.

\section*{AI assistance disclosure}

The author used Claude Fable 5, ChatGPT 5.5 and 5.6 Sol, and GPT 6 Astra
to assist with the theory, the Isabelle/HOL and Lean formalizations, the
TLA\textsuperscript{+} models, and drafting the paper's prose. The author
edited all of the prose and takes responsibility for the paper's
content and conclusions.

\bibliographystyle{unsrtnat}
\bibliography{refs}

\end{document}